\newtheorem{theorem}{Theorem}[section]
\newtheorem{definition}[theorem]{Definition}
\newtheorem{lemma}[theorem]{Lemma}
\newtheorem{corollary}[theorem]{Corollary}
\newtheorem{property}[theorem]{Property}
\newenvironment{xrefthm}[1]{%
    \def\thexref{\ref{#1}} \begin{thexrefthm}%
    }{%
    \end{thexrefthm}%
}
\newtheorem*{thexrefthm}{Theorem \thexref}
\def \myverbatimindent {3em}
\let \saveverbatime \@xverbatim
\def \@xverbatim {\leftskip = \myverbatimindent\relax\saveverbatime}
\setlist{nosep}
\newcommand{\prw@zbreak}{\nobreak\hskip\z@skip}
\newcommand{\BreakableHyphen}{\leavevmode%
  \prw@zbreak-\discretionary{}{}{}\prw@zbreak}
\DeclareRobustCommand{\hyp}{%
  \ifmmode-\else\BreakableHyphen\fi}
\newcommand{\ssi}{\ensuremath{\quad\text{iff}\quad}}
\def\vcentcolon{\mathrel{\mathop\ordinarycolon}}
\newcommand{\coloneqq}{\vcentcolon\mkern-1.2mu=}
\newcommand{\Coloneqq}{\vcentcolon\mkern-.9mu\vcentcolon\mkern-1.2mu=}
\newcommand{\push}{\ensuremath{\mathop{\mathbf{push}}\nolimits}}
\newcommand{\invoke}{\ensuremath{\mathop{\mathbf{invoke}}\nolimits}}
\newcommand{\patch}{\invoke}
\newcommand{\ajout}[3]{\ensuremath{#1 + \{#2\mapsto#3\} }}
\newcommand{\subst}[3]{\ajout{#1}{#2}{#3}}
\newcommand{\lift}[1]{\ensuremath{\mathop{(#1)}\nolimits_{\ast}}}
\newcommand{\dom}{\mathop{\mathrm{dom}}\nolimits}
\newcommand{\gc}[2]{#2 \setminus #1}
\newcommand{\Image}{\mathop{\mathrm{Im}}\nolimits}
\newcommand{\Loc}{\mathop{\mathrm{Loc}}\nolimits}
\newcommand{\Env}{\mathop{\mathrm{Env}}\nolimits}
\newcommand{\close}[1]{#1_{*}}
\newcommand{\extends}{\sqsupseteq}
\newcommand{\revextends}{\sqsubseteq}
\newcommand{\concat}{\cdot}
\newcommand{\cconcat}{\cdot}
\newcommand{\range}[3]{\ensuremath{#1_{#2}\dotsc#1_{#3}}}
\newcommand{\drange}[4]{\ensuremath{(#1_{#3},#2_{#3})\concat\dotsc\concat(#1_{#4},#2_{#4})}}
\newcommand{\unit}{\ensuremath{\mathbf{1}}}
\newcommand{\true}{\ensuremath{\mathop{\mathbf{true}}\nolimits}}
\newcommand{\false}{\ensuremath{\mathop{\mathbf{false}}\nolimits}}
\newcommand{\ite}[3]{\ensuremath{\mathop{\mathbf{if}}\ #1\
 \mathop{\mathbf{then}}\ #2\ \mathop{\mathbf{else}}\ #3}}
\newcommand{\letrec}[3]{\ensuremath{\mathop{\mathbf{letrec}}\ #1=#2\
  \mathop{\mathbf{in}}\ #3}}
\newcommand{\expr}{\ensuremath{\mathop{expr}\nolimits\/}}
\newcommand{\F}{\ensuremath{\mathcal{F}}}
\newcommand{\fun}[3]{\ensuremath{\left[\lambda#1.#2,#3\right]}}
\newcommand{\rhot}{\ensuremath{{\rho_{T}}}}
\newcommand{\env}[2]{\ensuremath{#1|#2}}
\newextarrow{\intArrow}{55{40}0}{\equiv\equiv\Rrightarrow}
\newextarrow{\optArrow}{55{40}0}{\Relbar\Relbar\Rightarrow}
\newextarrow{\naiveArrow}{55{40}0}{\relbar\relbar\rightarrow}
\newcommand{\reduction}[5][\env{\rhot}{\rho}]{\ensuremath{#2^{\,#3}
  \optArrow[\F]{#1} #4^{\,#5}}}
\newcommand{\reductionF}[6][\env{\rhot}{\rho}]{\ensuremath{#2^{\,#3}
  \optArrow[{#6}]{#1} #4^{\,#5}}}
\newcommand{\reductionR}[5][\env{\rhot}{\rho}]{\ensuremath{#2^{\,#3}
  \optArrow[\close{\F}]{#1} #4^{\,#5}}}
\newcommand{\reduclift}[5][\env{\rhot}{\rho}]{\ensuremath{{\lift{#2}}^{\,#3}
  \optArrow[\lift{\F}]{#1} #4^{\,#5}}}
\newcommand{\reductionI}[5][\env{\rhot}{\rho}]{\ensuremath{#2^{\,#3}
  \intArrow[\F]{#1} #4^{\,#5}}}
\newcommand{\reductionIF}[6][\env{\rhot}{\rho}]{\ensuremath{#2^{\,#3}
  \intArrow[{#6}]{#1} #4^{\,#5}}}
\newcommand{\reductionN}[5][\rho]{\ensuremath{#2^{\,#3}
  \naiveArrow[\F]{#1} #4^{\,#5}}}
\newcommand{\reductionNF}[6][\rho]{\ensuremath{#2^{\,#3}
  \naiveArrow[#6]{#1} #4^{\,#5}}}
\begin{document}

\title{Lambda-lifting and CPS conversion in an imperative language}

\author{Gabriel Kerneis       \and
        Juliusz Chroboczek \and
        {\em Universit\'e Paris Diderot, PPS, Paris, France} 
}

\date{February 2012}

\maketitle

\begin{abstract}
This paper is a companion technical report to the article ``Continuation-Passing
C: from threads to events through continuations''.  It contains the complete
version of the proofs of correctness of lambda-lifting and CPS-conversion
presented in the article.
\end{abstract}

\tableofcontents

\section{Introduction}

This paper is a companion technical report to the article ``Continuation-Passing
C: from threads to events through continuations'' \cite{cpc2012}.  It contains
the complete version of the proofs presented in the article.  It does not,
however, give any background or motivation for our work: please refer to the
original article.

\section{Lambda-lifting in an imperative language}
\label{sec:lifting}

To prove the correctness of lambda-lifting in an imperative,
call-by-value language when functions are called in tail position, we do
not reason directly on CPC programs, because the semantics of C is too
broad and complex for our purposes.  The CPC translator leaves most
parts of converted programs intact, transforming only control structures
and function calls.  Therefore, we define a simple language with
restricted values, expressions and terms, that captures the features we
are most interested in (Section~\ref{sec:definitions}).

The reduction rules for this language (Section~\ref{sec:naive-def}) use
a simplified memory model without pointers and enforce that local
variables are not accessed outside of their scope, as ensured by our
boxing pass.  This is necessary since lambda-lifting
is not correct in general in the presence of extruded variables.

It turns out that the ``naive'' reduction rules defined in
Section~\ref{sec:naive-def} do not provide strong enough invariants to
prove this correctness theorem by induction, mostly because we represent
memory with a store that is not invariant with respect to lambda-lifting.
Therefore, in Section~\ref{sec:semopt}, we define an equivalent,
``optimised'' set of reduction rules which enforces more regular stores
and closures.

The proof of correctness is then carried out in
Section~\ref{sec:correction-ll} using these optimised rules.  We first
define the invariants needed for the proof and formulate a strengthened
version of the correctness theorem (Theorem~\ref{thm:correction-ll},
Section~\ref{sec:strong-invariants}).  A comprehensive overview of the
proof is then given in Section~\ref{sec:overview}.  The proof is fully
detailed in Section~\ref{sec:proof-correctness}, with the help of a number
of lemmas to keep the main proof shorter
(Sections~\ref{sec:rewriting-lemmas} and~\ref{sec:aliasing-lemmas}).

The main limitation of this proof is that
Theorems~\ref{thm:lambda-lifting-correctness}
and~\ref{thm:correction-ll} are implications, not equivalences: we do
not prove that if a term does not reduce, it will not reduce once
lifted.  For instance, this proof does not ensure that lambda-lifting
does not break infinite loops.

\subsection{Definitions}
\label{sec:definitions}

In this section, we define the terms
(Definition~\ref{def:full-language}), the reduction rules
(Section~\ref{sec:naive-def}) and the lambda-lifting transformation
itself (Section~\ref{sec:lifting-def}) for our small imperative
language.  With these preliminary definitions, we are then able to
characterise \emph{liftable parameters}
(Definition~\ref{dfn:var-liftable-simple}) and state the main
correctness theorem (Theorem~\ref{thm:lambda-lifting-correctness},
Section~\ref{sec:correctness}).

\begin{definition}[Values, expression and terms]\label{def:full-language}

Values are either boolean and integer constants or $\unit$, a special
value for functions returning \texttt{void}.
\[v \Coloneqq \quad\unit \;|\; \true \;|\; \false \;|\; n \in \mathbf{N}\]

Expressions are either values or variables.  We deliberately omit
arithmetic and boolean operators, with the sole concern of avoiding
boring cases in the proofs.
\[e \Coloneqq \quad v \;|\; x\;|\; \dotsc\]

Terms are consist of assignments, conditionals, sequences, recursive
functions definitions and calls.
    \begin{align*}
    T \Coloneqq & \quad e
        \;|\; x \coloneqq T
        \;|\; \ite{T}{T}{T}
        \;|\; T\ ;\ T \\
        \;|\; & \letrec{f(\range{x}{1}{n})}{T}{T}
        \;|\; f(T,\dotsc,T)
        \end{align*}
        \qedhere
\end{definition}
Our language focuses on the essential details affected by the
transformations: recursive functions, conditionals and memory accesses.
Loops, for instance, are ignored because they can be expressed in terms
of recursive calls and conditional jumps --- and that is, in fact, how
the splitting pass translates them.
Since lambda-lifting happens after the splitting pass, our language
need to include inner functions (although they are not part of the C
language), but it can safely exclude \texttt{goto} statements.

\subsubsection{Naive reduction rules\label{sec:naive-def}}

\paragraph{Environments and stores}
Handling inner functions requires explicit closures in the reduction
rules.  We need environments, written $\rho$, to bind variables to
locations, and a store, written $s$, to bind locations to values.

\emph{Environments} and \emph{stores} are partial functions, equipped
with a single operator which extends and modifies a partial function:
\ajout{\cdot}{\cdot}{\cdot}.

\begin{definition} The modification (or extension) $f'$ of a partial
function $f$, written $f' = \ajout{f}{x}{y}$, is defined as follows:
\begin{align*}
f'(t) =& \begin{cases}
y&\text{when $t$ = $x$}\\
f(t)&\text{otherwise}
\end{cases}\\
\dom(f') =& \dom(f)\cup\{x\}
\end{align*}
\qedhere
\end{definition}

\begin{definition}[Environments of variables and functions]
Environments of variables are defined inductively by
\[\rho \Coloneqq \varepsilon \;|\; (x,l)\concat\rho,\]
i.e.\ the empty domain function and $\ajout{\rho}{x}{l}$ (respectively).

Environments of functions associate function names to closures:
\[\F : \{f, g, h, \dotsc\} \rightarrow
    \{\fun{\range{x}{1}{n}}{T}{\rho,\F}\}.\]
\qedhere
\end{definition}

Note that although we have a notion of locations, which correspond
roughly to memory addresses in C, there is no way to copy, change or
otherwise manipulate a location directly in the syntax of our language.
This is on purpose, since adding this possibility would make
lambda-lifting incorrect: it translates the fact, ensured by the boxing
pass in the CPC translator, that there are no extruded variables in the
lifted terms.

\paragraph{Reduction rules}
We use classical big-step reduction rules for our language
(Figure~\ref{sem-proof:naive}, p.~\pageref{sem-proof:naive}).

\begin{figure}
\begin{gather*}
  \inferrule*[Left=(val)]{ }{\reductionN{v}{s}{v}{s}} \qquad\qquad
  \inferrule*[Left=(var)]{\rho\ x = l \in \dom\ s}{\reductionN{x}{s}{s\ l}{s}}\\
  \inferrule*[Left=(assign)]{\reductionN{a}{s}{v}{s'} \\
    \rho\ x = l \in \dom\ s'}{\
	\reductionN{x \coloneqq a}{s}{\unit}{\subst{s'}{l}{v}}}\qquad\qquad
  \inferrule*[Left=(seq)]{\reductionN{a}{s}{v}{s'} \\ \reductionN{b}{s'}{v'}{s''}}{\reductionN{a\ ;\ b}{s}{v'}{s''}}\\
  \inferrule*[Left=(if-t.)]{\reductionN{a}{s}{\true}{s'} \\ \reductionN{b}{s'}{v}{s''}}{\
	\reductionN{\ite{a}{b}{c}}{s}{v}{s''}}\qquad\qquad
  \inferrule*[Left=(if-f.)]{\reductionN{a}{s}{\false}{s'} \\ \reductionN{c}{s'}{v}{s''}}{\
	\reductionN{\ite{a}{b}{c}}{s}{v}{s''}}\\
  \inferrule*[Left=(letrec)]{\
    \reductionNF{b}{s}{v}{s'}{\mathcal{F'}} \\\\
    \mathcal{F'}=\ajout{\F}{f}{\fun{\range{x}{1}{n}}{a}{\rho,\F}}    }{\
	\reductionN{\letrec{f(\range{x}{1}{n})}{a}{b}}{s}{v}{s'}}\\
  \inferrule*[Left=(call)]{\
\F\,f = \fun{\range{x}{1}{n}}{b}{\rho',\mathcal{F'}}\\
\rho''= \drange{x}{l}{1}{n}\\
\text{$l_{i}$ fresh and distinct}\\\\
\forall i,\reductionN{a_i}{s_i}{v_i}{s_{i+1}} \\
\reductionNF[\rho''\concat\rho']{b}{\ajout{s_{n+1}}{l_i}{v_i}}{v}{s'}{\ajout{\mathcal{F'}}{f}{\F\,f}}
}{\
\reductionN{f(\range{a}{1}{n})}{s_{1}}{v}{s'}}
\end{gather*}
\caption{``Naive'' reduction rules\label{sem-proof:naive}}
\end{figure}

In the (call) rule, we need to introduce \emph{fresh} locations for the
parameters of the called function.  This means that we must choose
locations that are not already in use, in particular in the environments
$\rho'$ and $\F$.  To express this choice, we define two ancillary
functions, $\Env$ and $\Loc$, to extract the environments and locations
contained in the closures of a given environment of functions $\F$.
\begin{definition}[Set of environments, set of locations]
  \[\Env(\F) = \bigcup\left\{ \rho, \rho'
  \ |\ \fun{\range{x}{1}{n}}{M}{\rho,\mathcal{F'}} \in
  \Image(\F), \rho' \in \Env(\mathcal{F'})\right\}\]
  \[\Loc(\F) = \bigcup\left\{ \Image(\rho)
  \ |\ \rho \in
  \Env(\F)\right\}\]
  \[\text{A location $l$ is said to \emph{appear} in } \F \ssi l \in
      \Loc(\F).\]
  \qedhere
\end{definition}
These functions allow us to define fresh locations.
\begin{definition}[Fresh location] In the (call) rule, a location is
\emph{fresh} when:
\begin{itemize}
  \item $l \notin \dom(s_{n+1})$, i.e.\ $l$ is not already used in the store
  before the body of $f$ is evaluated, and
  \item $l$ doesn't appear in $\ajout{\mathcal{F'}}{f}{\F\,f}$, i.e.\
  $l$ will not interfere with locations captured in the environment of
  functions.
\end{itemize}
\qedhere
\end{definition}
Note that the second condition implies in particular that $l$ does not
appear in either $\F$ or $\rho'$.

\subsubsection{Lambda-lifting}\label{sec:lifting-def}

Lambda-lifting can be split into two parts: parameter lifting and block
floating\cite{danvy}.  We will
focus only on the first part here, since the second one is trivial.
Parameter lifting consists in adding a free variable as a parameter of
every inner function where it appears free.  This step is repeated until
every variable is bound in every function, and closed functions can
safely be floated to top-level.  Note that although the transformation
is called lambda-lifting, we do not focus on a single function and try
to lift all of its free variables; on the contrary, we define the
lifting of a single free parameter $x$ in every possible function.

Smart lambda-lifting algorithms strive to minimize the number
of lifted variables.  Such is not our concern in this proof: parameters
are lifted in every function where they might potentially be free.

\begin{definition}[Parameter lifting in a term]\label{dfn:lifted-term}
Assume that $x$ is defined as a parameter of a given function $g$, and
that every inner function in $g$ is called $h_i$ (for some
$i\in\mathbf{N}$).  Also assume that function parameters are unique before
lambda-lifting.

\noindent Then the \emph{lifted form} $\lift{M}$ of the term $M$ with
respect to $x$ is defined inductively as follows:
{\allowdisplaybreaks
  \begin{gather*}
  \lift{\unit} = \unit \qquad \lift{n} = n \\
  \lift{true} = true \qquad \lift{false} = false \\
  \lift{y} = y \quad \text{ and } \quad \lift{y \coloneqq a}= y \coloneqq
  \lift{a} \quad \text{(even if $y=x$)} \\
  \lift{a\ ;\ b} = \lift{a}\ ;\ \lift{b} \\
  \lift{\ite{a}{b}{c}} = \ite{\lift{a}}{\lift{b}}{\lift{c}} \\
  \lift{ \letrec{f(\range{x}{1}{n})}{a}{b} } =
  \begin{cases}
  \letrec{f(\range{x}{1}{n}x)}{\lift{a}}{\lift{b}}  &\text{if $f = h_i$}\\
  \letrec{f(\range{x}{1}{n})}{\lift{a}}{\lift{b}}   &\text{otherwise}
  \end{cases}\\
  \lift{f(\range{a}{1}{n})} =
  \begin{cases}
  f(\lift{a_{1}},\dotsc,\lift{a_{n}},x)&\text{if $f = h_i$ for some $i$}\\
  f(\lift{a_{1}},\dotsc,\lift{a_{n}})&\text{otherwise}
  \end{cases}
  \end{gather*}
}
  \qedhere
\end{definition}

\subsubsection{Correctness condition}\label{sec:correctness}

We show that parameter lifting is correct for variables defined in
functions whose inner functions are called exclusively in \emph{tail
position}.  We call these variables \emph{liftable parameters}.

We first define tail positions as usual \cite{clinger}:
\begin{definition}[Tail position]
\emph{Tail positions} are defined inductively as follows:
\begin{enumerate}
\item $M$ and $N$ are in tail position in \ite{P}{M}{N}.
\item $N$ is in tail position in $N$ and $M \ ;\ N$ and \letrec{f(\range{x}{1}{n})}{M}{N}.
\end{enumerate}
\qedhere
\end{definition}
A parameter $x$ defined in a function $g$ is liftable if every inner
function in $g$ is called exclusively in tail position.
\begin{definition}[Liftable parameter] \label{dfn:var-liftable-simple}
A parameter $x$ is \emph{liftable} in $M$ when:
\begin{itemize}
\item $x$ is defined as the parameter of a function $g$,
\item inner functions in $g$, named $h_i$, are called exclusively in
tail position in $g$ or in one of the $h_i$.
\end{itemize}
\qedhere
\end{definition}
Our main theorem states that performing parameter-lifting on a liftable
parameter preserves the reduction:
\begin{theorem}[Correctness of lambda-lifting]
\label{thm:lambda-lifting-correctness}
If $x$ is a liftable parameter in $M$, then
\[\exists t,
\reductionNF[\varepsilon]{M}{\varepsilon}{v}{t}{\varepsilon} \text{ implies }
\exists t',
\reductionNF[\varepsilon]{\lift{M}}{\varepsilon}{v}{t'}{\varepsilon}.\]
\end{theorem}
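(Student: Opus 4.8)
The plan is not to attack Theorem~\ref{thm:lambda-lifting-correctness} head on: a naive induction on the reduction of $M$ fails because the store produced on the $M$-side and the store produced on the $\lift{M}$-side are genuinely different objects (the lifted program allocates a fresh cell for the extra copy of $x$ at each call of an $h_i$), so there is no invariant of the form ``same store'' to carry through. Instead I would prove the strengthened statement Theorem~\ref{thm:correction-ll}, phrased over the optimised reduction rules of Section~\ref{sec:semopt}, and then obtain Theorem~\ref{thm:lambda-lifting-correctness} by (i) translating the hypothesised naive reduction of $M$ into an optimised one, (ii) applying the strengthened theorem, and (iii) translating the resulting optimised reduction of $\lift{M}$ back to the naive rules, which yields the required $t'$; all three translations rely only on the equivalence between the two systems set up in Section~\ref{sec:semopt}. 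The empty initial environments and store of the top-level statement trivially meet whatever precondition the strengthened theorem imposes, so all the real work is in that theorem, which I would prove by induction on the derivation of the reduction on the $M$-side. Since the statement is an implication and not an equivalence, the induction is one-directional: we only ever \emph{build} the $\lift{M}$-derivation, and need not say anything about non-termination.

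The crux is the invariant relating an $M$-configuration $(s,\rho,\F)$ to a $\lift{M}$-configuration $(\tilde s,\tilde\rho,\tilde\F)$; it must record three things. First, $\tilde\F$ is the pointwise image of $\F$ under lifting: each closure for an $h_i$ has $x$ appended to its parameter list and captures an environment related, recursively, in the same way. Second, away from the cells holding copies of $x$, the two stores agree up to a bijection on locations; this bijection is forced on us by the freedom in choosing fresh locations and has to be threaded through the whole induction. Third --- and this is where liftability (Definition~\ref{dfn:var-liftable-simple}) is used --- the value of $x$ visible through the current environments must coincide, i.e.\ $s(\rho\,x) = \tilde s(\tilde\rho\,x)$, while the older copies of $x$ stacked up on the $\lift{M}$-side by a chain of tail calls are allowed to be stale, precisely because an $h_i$ being called only in tail position guarantees such a copy is never read again after the call in which it was consulted returns. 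Making this third clause precise is the delicate point: it cannot simply assert ``$x$ has one value everywhere'', since when $g$ (or an $h_i$) is re-entered in non-tail position the copies legitimately diverge; the invariant has to be indexed by the tail-call structure of the current evaluation, so that it constrains exactly the copies sitting on the current tail-call chain and no others.

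With the invariant fixed, the induction is a case analysis on the last rule used on the $M$-side. The rules (val), (var), (seq), the two (if) rules, and (assign) to a variable other than $x$ are pure congruences: apply the induction hypothesis to the sub-derivations and reassemble, using the fact that $\lift{\cdot}$ is a homomorphism on these constructs. Three cases carry the content. For a \textbf{letrec} that defines one of the $h_i$, one checks that extending $\F$ with the new closure and $\tilde\F$ with its lifted counterpart preserves the first clause; this is just unfolding the definition of $\lift{\cdot}$ on \textbf{letrec}. For (assign) to $x$ itself, one checks that writing $\rho\,x$ on one side and $\tilde\rho\,x$ on the other keeps the third clause (and, via the location bijection, that the write is invisible to the stale copies). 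The heart of the proof is (call) of an $h_i$: the lifted rule additionally evaluates the actual argument $x$, yielding $\tilde s(\tilde\rho\,x)$, and binds it to a fresh cell $\tilde l_x$ that becomes the meaning of $x$ in the body; one must show this value is exactly the common value of the third clause, so that the clause is re-established for the body, and one must argue that the body's captured environment may still contain an obsolete binding of $x$ which is harmlessly shadowed by $\tilde l_x$. It is precisely this bookkeeping --- rewriting stores and environments up to the location bijection and up to dead copies of $x$, and commuting lifting with the relevant substitutions --- that I would factor out into the lemmas of Sections~\ref{sec:rewriting-lemmas} and~\ref{sec:aliasing-lemmas} to keep the main proof readable.

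The step I expect to be the real obstacle is getting the third clause of the invariant stated in a form that is simultaneously strong enough to drive the (call) case and weak enough to be preserved by it: the tail-call-indexed bookkeeping of ``which copy of $x$ is the live one'' is fiddly, and proving that the (call) rule together with the tail-position discipline maintains it is where essentially all the difficulty of the theorem resides. Everything else --- the congruence cases, the \textbf{letrec} case, and the manipulation of stores up to the fresh-location bijection --- is routine once the right invariant is in hand.
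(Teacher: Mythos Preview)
Your high-level plan---prove a strengthened statement over the optimised rules of Section~\ref{sec:semopt} and then use the naive/optimised equivalence to get Theorem~\ref{thm:lambda-lifting-correctness}---is exactly what the paper does. But the invariant you propose is substantially more complicated than the one the paper actually uses, and this is because you have not fully registered what the optimised rules buy you.

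The paper's Theorem~\ref{thm:correction-ll} asserts that the final stores are \emph{literally equal}: if $\reduction{M}{s}{v}{s'}$ then $\reductionF{\lift{M}}{s}{v}{s'}{\lift{\F}}$, same $s'$, no bijection, no ``stale copies''. This is possible precisely because the two optimisations were engineered for it. Minimal stores guarantee that the fresh cell allocated for the extra copy of $x$ at a call to $h_i$ is reclaimed (via the $\gc{\rhot}{\cdot}$ in the (call) conclusion) as soon as that call returns, so it never survives to pollute the comparison. Compact closures guarantee that the closure for $h_i$ in $\lift{\F}$ does \emph{not} capture $x$ at all (Definition~\ref{dfn:lifted-env} strips it out), so there is no ``obsolete binding harmlessly shadowed'' to reason about---Lemma~\ref{lem:Fstarclean} records exactly this. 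Consequently your second and third invariant clauses, the location bijection and the tail-call-indexed tracking of live versus stale copies, simply do not arise.

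What replaces them is a sequence of \emph{local} rewriting steps applied only in the (call) case for $f=h_i$: move the captured $(x,l)$ from the non-tail to the tail environment (Lemma~\ref{lem:switch-x}), apply the induction hypothesis, alpha-convert $l$ to a fresh $l'$ (Lemma~\ref{lem:rename-loc-opt}), then reinsert $l$ with its original value (Lemmas~\ref{lem:intro-in-store} and~\ref{lem:intro-in-env}); the extraneous $\{l\mapsto v_x\}$ is then absorbed by the tail-environment cleanup because liftability forces $l\in\Image(\rhot)$. So the ``fiddly bookkeeping'' you anticipate is real, but it is packaged as four one-shot derivation-surgery lemmas rather than as a global simulation invariant threaded through every case. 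The remaining work in the induction is not about stores at all: it is checking that the six clauses of extended liftability (Definition~\ref{dfn:var-liftable}) are preserved by each premise, which is where the aliasing lemmas of Section~\ref{sec:aliasing-lemmas} enter. Your plan would likely go through, but you would be re-deriving by hand, inside your invariant, exactly the store discipline that the optimised rules already enforce for free.
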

Note that the resulting store $t'$ changes because lambda-lifting
introduces new variables, hence new locations in the store, and changes
the values associated with lifted variables;
Section~\ref{sec:correction-ll} is devoted to the proof of this theorem.
To maintain invariants during the proof, we need to use an equivalent,
``optimised'' set of reduction rules; it is introduced in the next section.

\subsection{Optimised reduction rules\label{sec:semopt}}

The naive reduction rules (Section~\ref{sec:naive-def}) are not
well-suited to prove the correctness of lambda-lifting.  Indeed, the
proof is by induction and requires a number of invariants on the
structure of stores and environments.  Rather than having a dozen of
lemmas to ensure these invariants during the proof of correctness, we
translate them as constraints in the reduction rules.

To this end, we introduce two optimisations --- minimal stores
(Section~\ref{sec:mini-store}) and compact closures
(Section~\ref{sec:compact-closures}) --- which lead to the definition of
an optimised set of reduction rules (Figure~\ref{sem-proof:opt},
Section~\ref{sec:opt-rules}).  The equivalence between optimised and
naive reduction rules is shown in Section~\ref{sec:sem-equiv}.

\subsubsection{Minimal stores} \label{sec:mini-store}

In the naive reduction rules, the store grows faster when reducing lifted
terms, because each function call adds to the store as many locations as it
has function parameters.  This yields stores of different sizes when
reducing the original and the lifted term, and that difference cannot be
accounted for locally, at the rule level.

Consider for instance the simplest possible case of lambda-lifting:
\begin{gather}
    \letrec{g(x)}{(\letrec{h()}{x}{h()})}{g(\unit)}\tag{original} \\
    \letrec{g(x)}{(\letrec{h(y)}{y}{h(x)})}{g(\unit)}\tag{lifted}
\end{gather}
At the end of the reduction, the store for the original term is
$\{l_x \mapsto \unit \}$ whereas the store for the lifted term is
$\{l_x \mapsto \unit ; l_y \mapsto \unit \}$.  More complex terms
would yield even larger stores, with many out-of-date copies of lifted
variables.

To keep the store under control, we need to get rid of useless variables
as soon as possible during the reduction.  It is safe to remove a
variable $x$ from the store once we are certain that it will never be
used again, i.e.\ as soon as the term in tail position in the function
which defines $x$ has been evaluated.  This mechanism is analogous
to the deallocation of a stack frame when a function returns.

To track the variables whose location can be safely reclaimed after the
reduction of some term $M$, we introduce \emph{split environments}.
Split environments are written $\env{\rhot}{\rho}$, where $\rhot$ is
called the \emph{tail environment} and $\rho$ the non-tail one; only the
variables belonging to the tail environment may be safely reclaimed.
The reduction rules build environments so that a variable $x$ belongs to
$\rhot$ if and only if the term $M$ is in tail position in the current
function $f$ and $x$ is a parameter of $f$.  In that case, it is safe to
discard the locations associated to all of the parameters of $f$,
including $x$, after $M$ has been reduced because we are sure that the
evaluation of $f$ is completed (and there are no first-class functions in
the language to keep references on variables beyond their scope of
definition).

We also define a  \emph{cleaning} operator, $\gc{\cdot}{\cdot}$, to
remove a set of variables from the store.
\begin{definition}[Cleaning of a store]
The store $s$ cleaned with respect to the variables in $\rho$, written
$\gc{\rho}{s}$, is defined as
$\gc{\rho}{s} = s |_{\dom(s)\setminus\Image(\rho)}$.
\qedhere
\end{definition}

\subsubsection{Compact closures} \label{sec:compact-closures}
Another source of complexity with the naive reduction rules is the inclusion of
useless variables in closures.  It is safe to remove from the
environments of variables contained in closures the variables that are also
parameters of the function: when the function is called, and the
environment restored, these variables will be hidden by the freshly
instantiated parameters.

This is typically what happens to lifted parameters: they are free
variables, captured in the closure when the function is defined, but
these captured values will never be used since calling the function adds
fresh parameters with the same names.  We introduce \emph{compact
closures} in the optimised reduction rules to avoid dealing with this
hiding mechanism in the proof of lambda-lifting.

A compact closure is a closure that does not capture any variable which
would be hidden when the closure is called because of function
parameters having the same name.
\begin{definition}[Compact closure and environment]
  A closure $\fun{\range{x}{1}{n}}{M}{\rho,\F}$ is  \emph{compact}
  if $\forall i, x_i\notin\dom(\rho)$ and \F\/ is compact.
  An environment is \emph{compact} if it contains only compact closures.
  \qedhere
\end{definition}
We define a canonical mapping from any environment $\F$ to a compact
environment $\close{\F}$, restricting the domains of every closure in
$\F$.
\begin{definition}[Canonical compact environment]
  The \emph{canonical compact environment} $\close{\F}$ is the
  unique environment with the same domain as $\F$ such that 
  \begin{align*}
    \forall f \in \dom(\F),
    \F\,f &= \fun{\range{x}{1}{n}}{M}{\rho,\mathcal{F'}}\\
    \text{implies }\close{\F}\ f &=
    \fun{\range{x}{1}{n}}{M}{\rho|_{\dom(\rho)\setminus\{\range{x}{1}{n}\}},\close{\mathcal{F'}}}.
  \end{align*}
    \qedhere
\end{definition}

\subsubsection{Optimised reduction rules} \label{sec:opt-rules}
Combining both optimisations yields the \emph{optimised} reduction rules
(Figure~\ref{sem-proof:opt}, p.~\pageref{sem-proof:opt}), used
Section~\ref{sec:correction-ll} for the proof of lambda-lifting.
We ensure minimal stores by cleaning them in the (val), (var) and
(assign) rules, which correspond to tail positions; split environments
are introduced in the (call) rule to distinguish fresh parameters, to be
cleaned, from captured variables, which are preserved.  Tail positions are
tracked in every rule through split environments, to avoid cleaning
variables too early, in a non-tail branch.

We also build compact closures in the (letrec) rule by removing the
parameters of $f$ from the captured environment $\rho'$.  

\begin{figure}[hbt]
\begin{gather*}
  \inferrule*[Left=(val)]{ }{\reduction{v}{s}{v}{\gc{\rhot}{s}}}\qquad\qquad
  \inferrule*[Left=(var)]{\rhot\concat\rho\ x = l \in \dom\ s}{\reduction{x}{s}{s\ l}{\gc{\rhot}{s}}}\\
  \inferrule*[Left=(assign)]{\reduction[\env{}{\rhot\concat\rho}]{a}{s}{v}{s'} \\
    \rhot\concat\rho\ x = l \in \dom\ s'}{\
	\reduction{x \coloneqq a}{s}{\unit}{\gc{\rhot}{\subst{s'}{l}{v}}}}\qquad\qquad
  \inferrule*[Left=(seq)]{\reduction[\env{}{\rhot\concat\rho}]{a}{s}{v}{s'} \\ \reduction{b}{s'}{v'}{s''}}{\reduction{a\ ;\ b}{s}{v'}{s''}}\\
  \inferrule*[Left=(if-t.)]{\reduction[\env{}{\rhot\concat\rho}]{a}{s}{\true}{s'} \\ \reduction{b}{s'}{v}{s''}}{\
	\reduction{\ite{a}{b}{c}}{s}{v}{s''}}\qquad\qquad
  \inferrule*[Left=(if-f.)]{\reduction[\env{}{\rhot\concat\rho}]{a}{s}{\false}{s'} \\ \reduction{c}{s'}{v}{s''}}{\
	\reduction{\ite{a}{b}{c}}{s}{v}{s''}}\\
  \inferrule*[Left=(letrec)]{\
    \reductionF{b}{s}{v}{s'}{\mathcal{F'}} \\\\
    \rho' = \rhot\concat\rho|_{\dom(\rhot\concat\rho)\setminus\{\range{x}{1}{n}\}} \\
    \mathcal{F'}=\ajout{\F}{f}{\fun{\range{x}{1}{n}}{a}{\rho',\F}}    }{\
	\reduction{\letrec{f(\range{x}{1}{n})}{a}{b}}{s}{v}{s'}}\\
  \inferrule*[Left=(call)]{\
\F\,f = \fun{\range{x}{1}{n}}{b}{\rho',\mathcal{F'}}\\
\rho''= \drange{x}{l}{1}{n}\\
\text{$l_{i}$ fresh and distinct}\\\\
\forall i,\reduction[\env{}{\rhot\concat\rho}]{a_i}{s_i}{v_i}{s_{i+1}} \\
\reductionF[\env{\rho''}{\rho'}]{b}{\ajout{s_{n+1}}{l_i}{v_i}}{v}{s'}{\ajout{\mathcal{F'}}{f}{\F\,f}}
}{\
\reduction{f(\range{a}{1}{n})}{s_{1}}{v}{\gc{\rhot}{s'}}}
\end{gather*}
\caption{Optimised reduction rules\label{sem-proof:opt}}
\end{figure}

\begin{theorem}[Equivalence between naive and optimised reduction rules]\label{thm:sem-equiv}
Optimised and naive reduction rules are equivalent: every reduction in
one set of rules yields the same result in the other.  It is necessary,
however, to take care of locations left in the store by the naive reduction:
  \[
  \reductionF[\env{\varepsilon}{\varepsilon}]{M}{\varepsilon}{v}{\varepsilon}{\varepsilon}
  \ssi
 \exists s,\reductionNF[\varepsilon]{M}{\varepsilon}{v}{s}{\varepsilon}
 \]
\end{theorem}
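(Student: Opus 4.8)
The plan is to strengthen the statement so that it becomes amenable to an induction on the structure of the reduction derivations, carried out separately for each implication. As it stands the statement cannot serve as its own induction hypothesis: every non-trivial subderivation runs under a non-empty split environment, over a non-empty store, and under a non-empty function environment. I will therefore first introduce a \emph{compatibility relation} between a naive configuration $(\rho_n,s_n,\F_n)$ and an optimised one $(\env{\rhot}{\rho},s,\F)$, requiring that (i) $\rho_n=\rhot\concat\rho$ as partial functions; (ii) $\F$ is compact and compacting $\F_n$ yields $\F$; and (iii) there is a set $D$ of \emph{dead} locations with $\dom(s_n)=\dom(s)\cup D$, $\dom(s)\cap D=\emptyset$ and $s_n|_{\dom(s)}=s$, where ``dead'' means the locations lie outside $\Image(\rho_n)\cup\Loc(\F_n)$ and are moreover never read again during the naive reduction of the term under consideration. (Since the two rule sets pick fresh locations under different constraints, in one direction this relation will only hold up to a renaming $\sigma$ of locations; see below.) The theorem will then follow by instantiating both implications with all components empty and $D=\emptyset$.

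For the optimised-to-naive direction I will prove: if $M$ reduces in the optimised rules from $(\env{\rhot}{\rho},s,\F)$ to $(v,s',\F')$ and $(\rho_n,s_n,\F_n)$ is compatible with $(\env{\rhot}{\rho},s,\F)$, then $M$ reduces in the naive rules from $(\rho_n,s_n,\F_n)$ to $(v,s_n',\F_n')$ for some $s_n',\F_n'$ compatible with $(v,s',\F')$, the new dead set consisting of $D$ together with every parameter location that the optimised derivation cleaned away while reducing $M$. Most cases are immediate; the ones with content are the cleaning rules (val), (var), (assign) and (call), in which the optimised rule deletes $\Image(\rhot)$ from the result store whereas the naive rule deletes nothing. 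Here I must check that those locations are genuinely dead: $\rhot$ holds only parameters of the enclosing function $f$, the reduced term sits in tail position in $f$, so $f$ returns immediately afterwards, and --- crucially, and this is the only place the absence of first-class functions is used --- no closure surviving the return can refer to those locations, since the function environment local to $f$'s body is discarded by the (call) rule. Clause (ii) is preserved by (letrec): the naive rule captures the environment $\rho_n$ while the optimised one captures $\rho_n$ restricted so as to drop the parameters $\range{x}{1}{n}$, which is exactly what compaction of the extended naive function environment produces. In the (call) case the body is reduced naively under $\rho''\concat\rho_n'$ with $\rho_n'$ the (possibly non-compact) captured environment; since $\rho''$ binds exactly the parameters $\range{x}{1}{n}$ and takes precedence, $\rho''\concat\rho_n'$ coincides as a partial function with the optimised $\rho''\concat\rho'$, so clause (i) is maintained and the induction hypothesis applies after choosing fresh locations avoiding the larger naive configuration and extending $\sigma$ accordingly.

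The naive-to-optimised direction is the mirror image and is the easier one. Given a naive reduction of $M$ from $(\rho_n,s_n,\F_n)$ to $(v,s_n',\F_n')$, I build an optimised reduction in which the split of $\rho_n$ into $\rhot\concat\rho$ is the one dictated by the term structure and the call history, exactly as the optimised rules prescribe, the function environment is the compaction of $\F_n$, and the store is $s_n$ with the dead locations removed. No renaming is needed here, because compaction only removes locations, so $\dom(s)\subseteq\dom(s_n)$ and $\Loc(\F)\subseteq\Loc(\F_n)$, whence every location that the naive derivation picked as fresh is a fortiori fresh for the smaller optimised configuration and the very same choices work. The cleaning steps are legitimate by the same tail-position argument, and each optimised rule is matched against its naive counterpart essentially mechanically.

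Instantiating both implications with $\rhot=\rho=\varepsilon$, $\F_n=\F=\varepsilon$, $s_n=s=\varepsilon$ and $D=\emptyset$ yields the theorem: an optimised derivation of $M$ ending in the empty store gives a naive derivation ending in some store $s$ (the accumulated dead locations), and conversely a naive derivation of $M$ ending in a store $s$ has that $s$ made up entirely of dead locations --- at top level nothing reachable remains in scope --- so the corresponding optimised derivation ends with the empty store, as required. I expect the main obstacle to be choosing the generalised invariant sharply enough for the induction to close: one has to keep consistent, through every rule simultaneously, the compaction relation on function environments, the monotonically growing set of dead locations together with the argument that they are never re-read (which rests on the language having no first-class functions), and --- for the optimised-to-naive implication only --- the location renaming forced by the two rule sets' differing freshness conditions. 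Once the invariant is pinned down, the rule-by-rule verification is routine.
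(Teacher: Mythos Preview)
Your proposal is plausible but takes a different route from the paper. The paper does not attempt a direct induction coupling naive and optimised derivations; instead it interposes an \emph{intermediate} rule system that has minimal stores (the cleaning operator and split environments) but \emph{not} compact closures, and proves the equivalence in two independent stages: first optimised~$\Leftrightarrow$~intermediate (only the (letrec)/(call) cases matter, handled by a short ``hidden variables elimination'' lemma showing that a variable in $\rhot$ shadows any copy in $\rho$), then intermediate~$\Leftrightarrow$~naive (only the store shape matters, handled via a store-extension order $\revextends$, an $\alpha$-conversion lemma for locations, and a store-extension lemma). This factoring means each induction carries a single lightweight invariant rather than your bundled compatibility relation, and in particular the paper never needs your semantic ``never read again'' clause: the intermediate--naive direction is stated purely in terms of $s'|_{\dom(t)\setminus\Image(\rhot)}$ under the syntactic side condition $\Image(\rhot\concat\rho)\cup\Loc(\F)\subset\dom(s)$.

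Your single-pass approach can probably be made to work, but two points deserve care. First, your definition of ``dead'' location mixes a syntactic condition (outside $\Image(\rho_n)\cup\Loc(\F_n)$) with a semantic one (``never read again''); the latter is a consequence of the former in this language, and you should drop it from the invariant rather than carry it as an obligation, or the induction will not close. Second, your claim that no renaming is needed in the naive-to-optimised direction is slightly too quick: it is true that compaction only shrinks $\Loc$, but you must still maintain the side condition that every reachable location is in the current store after cleaning, which is exactly the hypothesis $\Image(\rhot\concat\rho)\cup\Loc(\F)\subset\dom(t)$ the paper imposes; without it the (var) and (assign) cases do not go through.
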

We prove this theorem in Section~\ref{sec:sem-equiv}.

\subsection{Equivalence of optimised and naive reduction rules\label{sec:sem-equiv}}

This section is devoted to the proof of equivalence between the optimised
naive reduction rules (Theorem~\ref{thm:sem-equiv}).

To clarify the proof, we introduce intermediate reduction rules
(Figure~\ref{sem-proof:inter}, p.~\pageref{sem-proof:inter}), with only
one of the two optimisations: minimal stores, but not compact closures.

The proof then consists in proving that optimised and intermediate rules
are equivalent (Lemma~\ref{lem:IimpliesO} and Lemma~\ref{lem:OimpliesI},
Section~\ref{subsec:first-step}), then that naive and intermediate rules
are equivalent  (Lemma~\ref{lem:IimpliesN} and Lemma~\ref{lem:NimpliesI},
Section~\ref{subsec:second-step}).

\[
\text{Naive rules}
\xtofrom[\text{Lemma~\ref{lem:IimpliesN}}]{\text{Lemma~\ref{lem:NimpliesI}}}
\text{Intermediate rules}
\xtofrom[\text{Lemma~\ref{lem:OimpliesI}}]{\text{Lemma~\ref{lem:IimpliesO}}}
\text{Optimised rules}
\]

\begin{figure}[hbt]
\begin{gather*}
  \inferrule*[Left=(val)]{ }{\reductionI{v}{s}{v}{\gc{\rhot}{s}}}\qquad\qquad
  \inferrule*[Left=(var)]{\rhot\concat\rho\ x = l \in \dom\ s}{\reductionI{x}{s}{s\ l}{\gc{\rhot}{s}}}\\
  \inferrule*[Left=(assign)]{\reductionI[\env{}{\rhot\concat\rho}]{a}{s}{v}{s'} \\
    \rhot\concat\rho\ x = l \in \dom\ s'}{\
	\reductionI{x \coloneqq a}{s}{\unit}{\gc{\rhot}{\subst{s'}{l}{v}}}}\qquad\qquad
  \inferrule*[Left=(seq)]{\reductionI[\env{}{\rhot\concat\rho}]{a}{s}{v}{s'} \\ \reductionI{b}{s'}{v'}{s''}}{\reductionI{a\ ;\ b}{s}{v'}{s''}}\\
  \inferrule*[Left=(if-t.)]{\reductionI[\env{}{\rhot\concat\rho}]{a}{s}{\true}{s'} \\ \reductionI{b}{s'}{v}{s''}}{\
	\reductionI{\ite{a}{b}{c}}{s}{v}{s''}}\qquad\qquad
  \inferrule*[Left=(if-f.)]{\reductionI[\env{}{\rhot\concat\rho}]{a}{s}{\false}{s'} \\ \reductionI{c}{s'}{v}{s''}}{\
	\reductionI{\ite{a}{b}{c}}{s}{v}{s''}}\\
  \inferrule*[Left=(letrec)]{\
    \reductionIF{b}{s}{v}{s'}{\mathcal{F'}} \\\\
    \rho' = \rhot\concat\rho \\
    \mathcal{F'}=\ajout{\F}{f}{\fun{\range{x}{1}{n}}{a}{\rho,\F}}    }{\
	\reductionI{\letrec{f(\range{x}{1}{n})}{a}{b}}{s}{v}{s'}}\\
  \inferrule*[Left=(call)]{\
\F\,f = \fun{\range{x}{1}{n}}{b}{\rho',\mathcal{F'}}\\
\rho''= \drange{x}{l}{1}{n}\\
\text{$l_{i}$ fresh and distinct}\\\\
\forall i,\reductionI[\env{}{\rhot\concat\rho}]{a_i}{s_i}{v_i}{s_{i+1}} \\
\reductionIF[\env{\rho''}{\rho'}]{b}{\ajout{s_{n+1}}{l_i}{v_i}}{v}{s'}{\ajout{\mathcal{F'}}{f}{\F\,f}}
}{\
\reductionI{f(\range{a}{1}{n})}{s_{1}}{v}{\gc{\rhot}{s'}}}
\end{gather*}
\caption{Intermediate reduction rules\label{sem-proof:inter}}
\end{figure}

\subsubsection{Optimised and intermediate reduction rules equivalence\label{subsec:first-step}}

In this section, we show that optimised and intermediate reduction rules
are equivalent:
\[
\text{Intermediate rules}
\xtofrom[\text{Lemma~\ref{lem:OimpliesI}}]{\text{Lemma~\ref{lem:IimpliesO}}}
\text{Optimised rules}
\]
We must therefore show that it is correct to use compact closures in the
optimised reduction rules.

Compact closures carry the implicit idea that some variables can be
safely discarded from the environments when we know for sure that they
will be hidden.  The following lemma formalises this intuition.
\begin{lemma}[Hidden variables elimination]\label{lem:intro-in-env2}
  \begin{align*}
  \forall l,l', \reductionI[\env{\rhot\concat(x,l)}{\rho}]{M}{s}{v}{s'} \ssi&
  \reductionI[\env{\rhot\concat(x,l)}{(x,l')\concat\rho}]{M}{s}{v}{s'}
\\
  \forall l,l', \reduction[\env{\rhot\concat(x,l)}{\rho}]{M}{s}{v}{s'} \ssi&
  \reduction[\env{\rhot\concat(x,l)}{(x,l')\concat\rho}]{M}{s}{v}{s'}
  \end{align*}
  Moreover, both derivations have the same height.
\end{lemma}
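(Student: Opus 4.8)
The plan is to prove both equivalences simultaneously by induction on the height of the derivation. The two statements (for intermediate rules $\intArrow$ and for optimised rules $\optArrow$) have identical structure, so a single induction handles both; I will write $\Rightarrow$ generically for either reduction relation. The key observation is that the variable $x$ already occurs in the tail part $\rhot\concat(x,l)$ of the split environment, so whenever a rule performs a lookup of the form $\rhot\concat\rho\ y$, or more precisely $\rhot\concat(x,l)\concat\rho\ y$, the binding $(x,l)$ in the tail environment shadows any $(x,l')$ that we might insert at the head of $\rho$. Hence concatenating the two environments yields the same partial function $\rhot\concat(x,l)\concat\rho = \rhot\concat(x,l)\concat(x,l')\concat\rho$ as functions, which is exactly what the side conditions of (val), (var), (assign) depend on; likewise $\Image$ of the concatenated environment is unchanged, so the cleaning operator $\gc{\rhot\concat(x,l)}{\cdot}$ — which in these rules only ever cleans with respect to the tail environment — produces the same store. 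This immediately disposes of the leaf rules and the store-manipulation in the inductive rules.

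First I would set up the induction and treat the structural rules (seq), (if-t.), (if-f.), (assign), (var), (val): here the result follows directly from the shadowing remark for the side conditions, together with the induction hypothesis applied to the premises — noting that in the premise for the first subterm $a$ the environment is reshuffled to $\env{}{\rhot\concat(x,l)\concat\rho}$ (resp. with the extra $(x,l')$), and as functions these are again equal, so strictly speaking for those premises one does not even need the IH, just equality of environments; for the tail premise (e.g. $b$ in a sequence) one applies the IH directly. The (letrec) case requires a small extra argument: the closure stored for $f$ captures $\rho$ (intermediate rules) or $\rhot\concat\rho$ restricted away from $\range{x}{1}{n}$ (optimised rules). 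On the left side this captured environment is built from $\rho$; on the right side from $(x,l')\concat\rho$. But $x$ is a parameter of the enclosing function $g$ (it appears in $\rhot$), hence — here I would invoke the shadowing / compactness discipline — either $x$ is among the $\range{x}{1}{n}$ and gets restricted out, or it is not, in which case the captured environments differ only by a leading $(x,l')$ that is itself shadowed in every subsequent lookup through the tail environment when $f$ is eventually called. To make this rigorous I would need that the *same* lemma, applied at the point where $f$ is called, lets me erase that spurious binding; this suggests the cleanest route is to prove the statement by induction and, in the (letrec) case, observe that the two stored closures are related exactly by the transformation the lemma is about, then push the bookkeeping into the (call) case.

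The (call) case is where I expect the main obstacle. Reducing $f(\range{a}{1}{n})$ evaluates the arguments $a_i$ in the environment $\env{}{\rhot\concat(x,l)\concat\rho}$ (vs. with the extra $(x,l')$) — handled by the function-equality remark and the IH — and then evaluates the body $b$ in the *fresh* environment $\env{\rho''}{\rho'}$, where $\rho''=\drange{x}{l}{1}{n}$ are the fresh parameters and $\rho'$ is the environment captured in the closure of $f$. The subtlety is that this inner environment no longer syntactically contains our $\env{\rhot\concat(x,l)}{\ldots}$ shape, so the induction hypothesis does not apply verbatim to the body; what differs between the two derivations is only the closure of $f$ recorded in $\F$ (and hence in $\ajout{\mathcal{F'}}{f}{\F\,f}$), which carries either $\rho'$ or a $\rho'$ with a shadowed $(x,l')$ prepended. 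I would therefore need an auxiliary observation — either a separate easy lemma, or a strengthening of the induction hypothesis — stating that prepending a binding to the variable-environment inside a closure stored in $\F$, when that variable is already a parameter of that closure's function (compactness!), does not change any reduction. In the optimised system this is essentially vacuous because compact closures have already discarded such bindings; in the intermediate system it is the crux. Concretely: since function parameters are unique and $x$ is a parameter of $g$, inside $f$'s body every lookup that reaches $\rho'$ and finds $x$ would first find the fresh $\rho''$ binding for $x$ if $x\in\range{x}{1}{n}$, and otherwise $x\notin\dom(\rho')$ already by the uniqueness/compactness hypothesis so there is nothing to shadow — meaning the prepended $(x,l')$ is simply never consulted. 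I would make this precise and then the (call) case closes. Finally, the height claim is immediate throughout: the induction never adds or removes inference steps — each rule application on the left is matched by the identical rule application on the right with the same premises — so the two derivation trees are isomorphic and in particular equi-height.
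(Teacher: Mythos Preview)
Your core observation --- that $\rhot\concat(x,l)\concat\rho$ and $\rhot\concat(x,l)\concat(x,l')\concat\rho$ coincide as partial functions --- is exactly the paper's argument, but you then fail to exploit it in (letrec) and (call) and invent difficulties that are not there.

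In (letrec), the environment captured in the new closure is the \emph{concatenation} $\rho'=\rhot\concat\rho$ (the bare $\rho$ appearing in the intermediate rule's closure is a typo for $\rho'$; the paper's own proof reads it that way, and $\rho'$ would otherwise be defined and never used). For the optimised rules the captured environment is a restriction of that same concatenation. Since the concatenations on the two sides are equal as functions, the captured $\rho'$ --- and hence $\mathcal{F'}$ --- is literally the same on both sides, and the induction hypothesis applies directly to the premise for $b$ with no residual bookkeeping. Your plan to track a dangling $(x,l')$ through the closure and ``push the bookkeeping into the (call) case'' is unnecessary.

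In (call), the situation is simpler still: $\F$ is a parameter of the judgement and is \emph{identical} on both sides of the equivalence being proved. Your claim that ``what differs between the two derivations is only the closure of $f$ recorded in $\F$'' is therefore false --- nothing differs there. The body $b$ is reduced under $\env{\rho''}{\rho'}$ and $\ajout{\mathcal{F'}}{f}{\F\,f}$, all of which are built from $\F$ and the fresh $l_i$ and do not depend on the current $\rho$; so the body premise is the \emph{same} derivation on both sides and requires no induction hypothesis at all. Only the argument premises for the $a_i$ involve the current environment, and there the equality of concatenations disposes of them directly.

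Consequently your proposed auxiliary lemma and strengthened induction hypothesis are not needed, and the appeals to compactness and parameter uniqueness are out of place: neither is a hypothesis of this lemma. The paper's proof is a structural induction that amounts to applying the one-line equality you begin with, uniformly, to every rule.
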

\begin{proof}
 The exact same proof holds for both intermediate and optimised reduction
 rules.

 By induction on the structure of the derivation.  The proof relies solely
 on the fact that $\rhot\concat(x,l)\concat\rho =
 \rhot\concat(x,l)\concat(x,l')\concat\rho$.

\paragraph{(seq)}
      $\rhot\concat(x,l)\concat\rho = \rhot\concat(x,l)\concat(x,l')\concat\rho$.
      So,
      \[\reductionI[\env{}{\rhot\concat(x,l)\concat(x,l')\concat\rho}]{a}{s}{v}{s'} \ssi
      \reductionI[\env{}{\rhot\concat(x,l)\concat\rho}]{a}{s}{v}{s'}\]
      Moreover, by the induction hypotheses,
      \[\reductionI[\env{\rhot\concat(x,l)}{(x,l')\concat\rho}]{b}{s'}{v'}{s''} \ssi
      \reductionI[\env{\rhot\concat(x,l)}{\rho}]{b}{s'}{v'}{s''}\]
      Hence,
      \[\reductionI[\env{\rhot\concat(x,l)}{(x,l')\concat\rho}]{a\ ;\ b}{s}{v'}{s''} \ssi
      \reductionI[\env{\rhot\concat(x,l)}{\rho}]{a\ ;\ b}{s}{v'}{s''}\]
    
 The other cases are similar.

\paragraph{(val)}
      $\reductionI[\env{\rhot\concat(x,l)}{\rho}]{v}{s}{v}{\gc{\rhot\concat(x,l)}{s}} \ssi
      \reductionI[\env{\rhot\concat(x,l)}{(x,l')\concat\rho}]{v}{s}{v}{\gc{\rhot\concat(x,l)}{s}}$

\paragraph{(var)}
      $\rhot\concat(x,l)\concat\rho = \rhot\concat(x,l)\concat(x,l')\concat\rho$
      so, with $l'' = \rhot\concat(x,l)\concat\rho\ y$,
      \[\reductionI[\env{\rhot\concat(x,l)}{\rho}]{y}{s}{s\ l''}{\gc{\rhot\concat(x,l)}{s}}
      \ssi
      \reductionI[\env{\rhot\concat(x,l)}{(x,l')\concat\rho}]{y}{s}{s\ l''}{\gc{\rhot\concat(x,l)}{s}}\]

\paragraph{(assign)}
      $\rhot\concat(x,l)\concat\rho = \rhot\concat(x,l)\concat(x,l')\concat\rho$.
      So,
      \[\reductionI[\env{}{\rhot\concat(x,l)\concat(x,l')\concat\rho}]{a}{s}{v}{s'} \ssi
      \reductionI[\env{}{\rhot\concat(x,l)\concat\rho}]{a}{s}{v}{s'}\]
      Hence, with $l'' = \rhot\concat(x,l)\concat\rho\ y$,
      \[\reductionI[\env{\rhot\concat(x,l)}{\rho}]{y \coloneqq
      a}{s}{\unit}{\gc{\rhot\concat(x,l)}{\subst{s'}{l''}{v}}} \ssi
      \reductionI[\env{\rhot\concat(x,l)}{(x,l')\concat\rho}]{y \coloneqq
      a}{s}{\unit}{\gc{\rhot\concat(x,l)}{\subst{s'}{l''}{v}}}\]

\paragraph{(if-true) and (if-false)} are proved similarly to (seq).

\paragraph{(letrec)}
      $\rhot\concat(x,l)\concat\rho = \rhot\concat(x,l)\concat(x,l')\concat\rho = \rho'$.
      Moreover, by the induction hypotheses,
      \[\reductionIF[\env{\rhot\concat(x,l)}{(x,l')\concat\rho}]{b}{s}{v}{s'}{\mathcal{F'}} \ssi
      \reductionIF[\env{\rhot\concat(x,l)}{\rho}]{b}{s}{v}{s'}{\mathcal{F'}}\]
      Hence,
      \begin{gather*}
      \reductionI[\env{\rhot\concat(x,l)}{(x,l')\concat\rho}]{\letrec{f(\range{x}{1}{n})}{a}{b}}{s}{v}{s'}
      \ssi\\
      \reductionI[\env{\rhot\concat(x,l)}{\rho}]{\letrec{f(\range{x}{1}{n})}{a}{b}}{s}{v}{s'}
      \end{gather*}

\paragraph{(call)}
      $\rhot\concat(x,l)\concat\rho = \rhot\concat(x,l)\concat(x,l')\concat\rho$.
      So,
      \[
      \forall i,
      \reductionI[\env{}{\rhot\concat(x,l)\concat(x,l')\concat\rho}]{a_i}{s_i}{v_i}{s_{i+1}} \ssi
      \reductionI[\env{}{\rhot\concat(x,l)\concat\rho}]{a_i}{s_i}{v_i}{s_{i+1}}\]
      Hence,
      \[\reductionI[\env{\rhot\concat(x,l)}{(x,l')\concat\rho}]{f(\range{a}{1}{n})}{s_{1}}{v}{\gc{\rhot\concat(x,l)}{s'}} \ssi
      \reductionI[\env{\rhot\concat(x,l)}{\rho}]{f(\range{a}{1}{n})}{s_{1}}{v}{\gc{\rhot\concat(x,l)}{s'}}.
      \qedhere
       \]
\end{proof}

Now we can show the required lemmas and prove the equivalence between the
intermediate and optimised reduction rules.

\begin{lemma}[Intermediate implies optimised]\label{lem:IimpliesO}
  \[\text{If }\reductionI{M}{s}{v}{s'}
  \text{ then }\reductionR{M}{s}{v}{s'}.
  \]
\end{lemma}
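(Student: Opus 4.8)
The plan is to argue by induction on the structure of the intermediate derivation of $\reductionI{M}{s}{v}{s'}$, rewriting it bottom-up into an optimised derivation of the \emph{same} judgement in which the function environment $\F$ occurring in every rule instance is replaced by its canonical compact counterpart $\close{\F}$; the split environment of variables, the stores $s,s'$ and the value $v$ are never altered. For the rules (val), (var), (assign), (seq), (if-t.) and (if-f.), the function environment is neither constrained nor modified by the rule, and the side conditions (of the form $\rhot\concat\rho\ x = l\in\dom\ s$) and the cleanings $\gc{\rhot}{\cdot}$ depend only on the environment of variables, which we keep fixed; hence the matching optimised rule applies verbatim with $\close{\F}$ in place of $\F$, its sub-derivations, if any, being furnished directly by the induction hypothesis. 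Only (letrec) and (call) carry genuine content, being precisely the points where compact closures are created and, respectively, consumed.

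In the (letrec) case, the intermediate derivation reduces the body $b$ under the function environment $\mathcal{F'}=\ajout{\F}{f}{\fun{\range{x}{1}{n}}{a}{\rho_c,\F}}$, where $\rho_c=\rhot\concat\rho$ is the captured environment. I would apply the induction hypothesis to that sub-derivation to get an optimised reduction of $b$ under $\close{\mathcal{F'}}$, and then observe that, by the definition of the canonical compact environment, $\close{\mathcal{F'}}=\ajout{\close{\F}}{f}{\fun{\range{x}{1}{n}}{a}{\rho_c|_{\dom(\rho_c)\setminus\{\range{x}{1}{n}\}},\close{\F}}}$ --- which is exactly the function environment produced by the optimised (letrec) rule, whose $\rho'$ is by definition $\rho_c|_{\dom(\rho_c)\setminus\{\range{x}{1}{n}\}}$. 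The optimised (letrec) rule then applies and gives the required reduction, concluding this case. (This step uses that $\close{\cdot}$ commutes with the update $\ajout{\cdot}{f}{\cdot}$ of a function environment; that is harmless, because the closures stored during a reduction capture strictly smaller function environments, so the recursion defining $\close{\cdot}$ is well founded.)

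The (call) case is the crux. The intermediate derivation looks up $\F\,f=\fun{\range{x}{1}{n}}{b}{\rho',\mathcal{F'}}$, reduces each argument $a_i$ under $\F$, and reduces the body $b$ under the split environment $\env{\rho''}{\rho'}$ --- with $\rho''=\drange{x}{l}{1}{n}$ and the $l_i$ fresh and distinct --- and under the function environment $\mathcal{G}=\ajout{\mathcal{F'}}{f}{\F\,f}$. The induction hypothesis turns the argument reductions into optimised reductions under $\close{\F}$, and turns the body reduction into an optimised one under $\close{\mathcal{G}}$; freshness of the $l_i$ survives this because $\close{\cdot}$ only deletes variable bindings, whence $\Loc(\close{\mathcal{G}})\subseteq\Loc(\mathcal{G})$. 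Unfolding definitions gives $\close{\mathcal{G}}=\ajout{\close{\mathcal{F'}}}{f}{\close{\F}\,f}$ with $\close{\F}\,f=\fun{\range{x}{1}{n}}{b}{\rho'|_{\dom(\rho')\setminus\{\range{x}{1}{n}\}},\close{\mathcal{F'}}}$, so this is precisely the function environment demanded by the optimised (call) rule. The only residual discrepancy is the split environment of the body reduction: the induction hypothesis supplies it under $\env{\rho''}{\rho'}$, whereas the optimised (call) rule wants $\env{\rho''}{\rho'|_{\dom(\rho')\setminus\{\range{x}{1}{n}\}}}$. These differ only by bindings of the $x_i$ in the non-tail component, every such $x_i$ being already bound (to $l_i$) in the tail component $\rho''$; so finitely many applications of Lemma~\ref{lem:intro-in-env2} (its optimised variant) strip those hidden bindings one at a time while preserving the reduction. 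The optimised (call) rule then applies, and the induction is complete. I expect this last case to be the only real obstacle --- chiefly the bookkeeping of how $\close{\cdot}$ interacts with $\ajout{\cdot}{f}{\cdot}$ and with the restriction of captured variable environments, and the verification that the gap between $\rho'$ and its compact form is exactly the kind of gap Lemma~\ref{lem:intro-in-env2} was made to close; everything else is mechanical.
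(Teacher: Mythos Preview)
Your proposal is correct and follows essentially the same route as the paper: induction on the derivation, with (letrec) and (call) as the only substantive cases, the former resolved by the observation that $\close{\mathcal{F'}}$ coincides with the environment built by the optimised (letrec) rule, and the latter by applying Lemma~\ref{lem:intro-in-env2} to strip the hidden $x_i$-bindings from the non-tail component $\rho'$. Your discussion of freshness surviving under $\close{\cdot}$ and of $\close{\cdot}$ commuting with $\ajout{\cdot}{f}{\cdot}$ is slightly more explicit than the paper's account, but the underlying argument is the same.
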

\begin{proof}
  By induction on the structure of the derivation.
  The interesting cases are (letrec) and (call), where
  compact environments are respectively built and used.

\paragraph{(letrec)}
      By the induction hypotheses,
      \[\reductionF{b}{s}{v}{s'}{\close{\mathcal{F'}}}\]
      Since we defined canonical compact environments so as to match
      exactly the way compact environments are built in the optimised
      reduction rules, the constraints of the (letrec) rule are fulfilled:
      \[\close{\mathcal{F'}}=\ajout{\close{\F}}{f}{\fun{\range{x}{1}{n}}{a}{\rho',\close{\F}}},\]
      hence:
      \[\reductionR{\letrec{f(\range{x}{1}{n})}{a}{b}}{s}{v}{s'}\]

\paragraph{(call)}
      By the induction hypotheses,
      \[\forall i, \reductionR[\env{}{\rhot\concat\rho}]{a_i}{s_i}{v_i}{s_{i+1}}\]
      and
      \[\reductionF[\env{\rho''}{\rho'}]{b}{\ajout{s_{n+1}}{l_i}{v_i}}{v}{s'}%
      {\close{(\ajout{\mathcal{F'}}{f}{\F\,f})}}\]
      Lemma~\ref{lem:intro-in-env2} allows to remove hidden variables,
      which leads to
      \[\reductionF[\env{\rho''}{\rho'_{|\dom(\rho')\setminus\{\range{x}{1}{n}\}}}]{b}{\ajout{s_{n+1}}{l_i}{v_i}}{v}{s'}%
         {\close{(\ajout{\mathcal{F'}}{f}{\F\,f})}}\]
      Besides, 
      \[\close{\F}\ f =
      \fun{\range{x}{1}{n}}{b}{\rho'_{|\dom(\rho')\setminus\{\range{x}{1}{n}\}},\close{\mathcal{F'}}}\]
      and
      \[\close{(\ajout{\mathcal{F'}}{f}{\F\,f})} =
        \ajout{\close{\mathcal{F'}}}{f}{\close{\F}\ f}\]
      Hence
      \[\reductionR{f(\range{a}{1}{n})}{s_{1}}{v}{\gc{\rhot}{s'}}.
      \]

\paragraph{(val)}
      \reductionR{v}{s}{v}{\gc{\rhot}{s}}

\paragraph{(var)}
      \reductionR{x}{s}{s\ l}{\gc{\rhot}{s}}

\paragraph{(assign)}
      By the induction hypotheses,
      \reductionR[\env{}{\rhot\concat\rho}]{a}{s}{v}{s'}.
      Hence,
      \[\reductionR{x \coloneqq a}{s}{\unit}{\gc{\rhot}{\subst{s'}{l}{v}}}\]

\paragraph{(seq)}
      By the induction hypotheses,
      \[\reductionR[\env{}{\rhot\concat\rho}]{a}{s}{v}{s'} \qquad\qquad
      \reductionR{b}{s'}{v'}{s''}\]
      Hence,
      \[\reductionR{a\ ;\ b}{s}{v'}{s''}\]

\paragraph{(if-true) and (if-false)} are proved similarly to (seq).
    \qedhere
\end{proof}

\begin{lemma}[Optimised implies intermediate]\label{lem:OimpliesI}
  \[ \text{If }\reduction{M}{s}{v}{s'}
  \text{ then }\forall \mathcal{G} \text{ such that } \close{\mathcal{G}}=\F,
  \reductionIF{M}{s}{v}{s'}{\mathcal{G}}.\]
\end{lemma}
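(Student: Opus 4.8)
The plan is to proceed by induction on the structure of the optimised derivation of $\reduction{M}{s}{v}{s'}$, carrying $\mathcal{G}$ along as an extra parameter of the induction (any environment with $\close{\mathcal{G}}=\F$; such a $\mathcal{G}$ can exist only if $\F$ is already compact, which is the situation we are in). The rules that neither build nor read a closure --- (val), (var), (assign), (seq), (if-t.) and (if-f.) --- are immediate, because the side conditions on $\rhot$, $\rho$ and the store are written identically in the two systems: one applies the induction hypothesis to every premise with the \emph{same} $\mathcal{G}$ and reassembles the corresponding intermediate rule. Everything interesting is in (letrec), which builds a compact closure, and (call), which consults one.

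For (letrec) the optimised premise is $\reductionF{b}{s}{v}{s'}{\mathcal{F'}}$ with $\mathcal{F'}=\ajout{\F}{f}{\fun{\range{x}{1}{n}}{a}{\rho',\F}}$ and $\rho'=(\rhot\concat\rho)|_{\dom(\rhot\concat\rho)\setminus\{\range{x}{1}{n}\}}$. I would feed the induction hypothesis the \emph{uncompacted} environment $\mathcal{G}'=\ajout{\mathcal{G}}{f}{\fun{\range{x}{1}{n}}{a}{\rhot\concat\rho,\mathcal{G}}}$ --- exactly the closure that the intermediate (letrec) rule creates --- and check, using $\close{\mathcal{G}}=\F$ and the recursive definition of the canonical compact environment, that $\close{\mathcal{G}'}=\mathcal{F'}$. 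The hypothesis then gives $\reductionIF{b}{s}{v}{s'}{\mathcal{G}'}$, and the intermediate (letrec) rule closes the case.

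For (call) we have $\F\,f=\fun{\range{x}{1}{n}}{b}{\rho',\mathcal{F'}}$, and from $\close{\mathcal{G}}=\F$ the closure $\mathcal{G}$ stores for $f$ has the form $\mathcal{G}\,f=\fun{\range{x}{1}{n}}{b}{\sigma,\mathcal{G}'}$ with $\close{\mathcal{G}'}=\mathcal{F'}$ and $\sigma|_{\dom(\sigma)\setminus\{\range{x}{1}{n}\}}=\rho'$, so $\sigma$ and $\rho'$ agree off $\{\range{x}{1}{n}\}$. The argument premises go through with the same $\mathcal{G}$. For the body premise set $\mathcal{H}=\ajout{\mathcal{G}'}{f}{\mathcal{G}\,f}$, so that $\close{\mathcal{H}}=\ajout{\mathcal{F'}}{f}{\F\,f}$, which is precisely the environment under which the optimised body reduces; the induction hypothesis (after the location adjustment below) then produces the intermediate body reduction under $\mathcal{H}$, but with non-tail environment $\rho'$ in place of $\sigma$. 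Since $\rho''=\drange{x}{l}{1}{n}$ binds every $x_i$, re-inserting the parameter bindings carried by $\sigma$ by repeated use of Lemma~\ref{lem:intro-in-env2} turns $\env{\rho''}{\rho'}$ into $\env{\rho''}{\sigma}$ --- the two non-tail environments being equal as partial functions once prefixed by $\rho''$, which is all the rules observe. Assembling the intermediate (call) rule with $\mathcal{G}\,f$, $\mathcal{H}$ and $\sigma$ then yields the conclusion.

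The step I expect to be the real obstacle is the location adjustment just mentioned. The $l_i$ that the optimised (call) rule may legitimately pick need only avoid $\close{\mathcal{H}}=\ajout{\mathcal{F'}}{f}{\F\,f}$, whereas the intermediate (call) rule requires them to avoid the larger $\mathcal{H}$, and nothing prevents an $l_i$ from being a location --- for instance $\sigma(x_i)$ --- that disappears only under compaction. Before calling the induction hypothesis I would therefore $\alpha$-rename the offending $l_i$ throughout the body sub-derivation to locations that are fresh for $\mathcal{H}$ as well; this is sound because the $l_i$ are the parameters of $f$, hence belong to the tail environment $\rho''$ and are reclaimed by the cleaning performed at the tail position reached at the end of $b$, so they occur neither in the returned value $v$ nor in the final store $s'$ --- exactly the sort of fresh-location bookkeeping gathered among the rewriting lemmas of Section~\ref{sec:rewriting-lemmas}. (The converse direction, Lemma~\ref{lem:IimpliesO}, is free of this complication, since freshness with respect to a larger environment already entails freshness with respect to its compaction.)
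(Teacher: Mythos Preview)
Your proposal is correct and follows the same strategy as the paper: induction on the optimised derivation, with the real work concentrated in (letrec) and (call), and Lemma~\ref{lem:intro-in-env2} used in (call) to restore to the non-tail environment the parameter bindings that compaction stripped from $\rho'$. The one place where you go beyond the paper is the freshness obstacle you flag at the end. The paper simply writes down the intermediate (call) instance without remarking that the $l_i$, chosen fresh only with respect to the compacted $\ajout{\mathcal{F'}}{f}{\F\,f}$, might collide with locations that occur in $\mathcal{G}$ but disappear under compaction; your proposed $\alpha$-renaming of the offending $l_i$ in the body sub-derivation is the natural patch and is sound for exactly the reason you give (the $l_i$ live in the tail environment $\rho''$ and are therefore cleaned from $s'$, so the renaming leaves both $v$ and $s'$ unchanged). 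You are not diverging from the paper's route --- you are being more careful at a point the paper elides.
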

\begin{proof}
  First note that, since $\close{\mathcal{G}}=\F$, $\F$ is necessarily compact.

  By induction on the structure of the derivation.
  The interesting cases are (letrec) and (call), where
  non-compact environments are respectively built and used.

\paragraph{(letrec)} Let $\mathcal{G} \text{ such as } \close{\mathcal{G}}=\F$.
      Remember that
      $\rho' = \rhot\concat\rho|_{\dom(\rhot\concat\rho)\setminus\{\range{x}{1}{n}\}}$.
      Let
      \[\mathcal{G'}=\ajout{\mathcal{G}}{f}{\fun{\range{x}{1}{n}}{a}{\rhot\concat\rho,\F}}\]
      which leads, since $\F$ is compact ($\close{\F} = \F$), to
      \begin{align*}
        \close{\mathcal{G'}}&=\ajout{\F}{f}{\fun{\range{x}{1}{n}}{a}{\rho',\F}}\\
        &=\mathcal{F'}
      \end{align*}
      By the induction hypotheses,
      \[\reductionIF{b}{s}{v}{s'}{\mathcal{G'}}\]
      Hence,
      \[\reductionIF{\letrec{f(\range{x}{1}{n})}{a}{b}}{s}{v}{s'}{\mathcal{G}}\]

\paragraph{(call)} Let $\mathcal{G} \text{ such as } \close{\mathcal{G}}=\F$.
      By the induction hypotheses,
      \[\forall i, \reductionIF[\env{}{\rhot\concat\rho}]{a_i}{s_i}{v_i}{s_{i+1}}{\mathcal{G}}\]
      Moreover,
      since $\close{\mathcal{G}}\ f=\F\,f$,
        \[\mathcal{G}\ f = \fun{\range{x}{1}{n}}{b}{\drange{x}{l}{i}{j}\rho',\mathcal{G'}}\]
        where $\close{\mathcal{G'}} = \mathcal{F'}$,
        and the $l_i$ are some locations stripped out when compacting
        $\mathcal{G}$ to get $\F$.
      By the induction hypotheses,
      \[\reductionIF[\env{\rho''}{\rho'}]{b}{\ajout{s_{n+1}}{l_i}{v_i}}{v}{s'}%
      {\ajout{\mathcal{G'}}{f}{\mathcal{G}\ f}}\]
      Lemma~\ref{lem:intro-in-env2} leads to
      \[\reductionIF[\env{\rho''}{\drange{x}{l}{i}{j}\rho'}]{b}{\ajout{s_{n+1}}{l_i}{v_i}}{v}{s'}%
      {\ajout{\mathcal{G'}}{f}{\mathcal{G}\ f}}\]
      Hence,
      \[\reductionIF{f(\range{a}{1}{n})}{s_{1}}{v}{\gc{\rhot}{s'}}{\mathcal{G}}.
      \]

\paragraph{(val)} $\forall\mathcal{G} \text{ such as } \close{\mathcal{G}}=\F,
      \reductionIF{v}{s}{v}{s'}{\mathcal{G}}$

\paragraph{(var)} $\forall\mathcal{G} \text{ such as } \close{\mathcal{G}}=\F,
      \reductionIF{x}{s}{s\ l}{\gc{\rhot}{s}}{\mathcal{G}}$

\paragraph{(assign)} Let $\mathcal{G} \text{ such as } \close{\mathcal{G}}=\F$.
      By the induction hypotheses,
      $\reductionIF[\env{}{\rhot\concat\rho}]{a}{s}{v}{s'}{\mathcal{G}}$.
      Hence,
     \[\reductionIF{x \coloneqq a}{s}{\unit}{\gc{\rhot}{\subst{s'}{l}{v}}}{\mathcal{G}}\]

\paragraph{(seq)} Let $\mathcal{G} \text{ such as } \close{\mathcal{G}}=\F$.
      By the induction hypotheses,
      \[\reductionIF[\env{}{\rhot\concat\rho}]{a}{s}{v}{s'}{\mathcal{G}}
      \qquad\qquad
      \reductionIF{b}{s'}{v'}{s''}{\mathcal{G}}\]
      Hence
      \[\reductionIF{a\ ;\ b}{s}{v'}{s''}{\mathcal{G}}\]

\paragraph{(if-true) and (if-false)} are proved similarly to (seq).
      \qedhere
\end{proof}

\subsubsection{Intermediate and naive reduction rules equivalence\label{subsec:second-step}}

In this section, we show that the naive and intermediate reduction rules
are equivalent:
\[
\text{Naive rules}
\xtofrom[\text{Lemma~\ref{lem:IimpliesN}}]{\text{Lemma~\ref{lem:NimpliesI}}}
\text{Intermediate rules}
\]
We must therefore show that it is correct to use minimal stores in the
intermediate reduction rules.  We first define a partial order on stores:
\begin{definition}[Store extension]
  \[ s \revextends s' \ssi s'|_{\dom(s)} = s \qedhere\]
\end{definition}
\begin{property}\label{prop:ext-order}
  Store extension ($\revextends$) is a partial order over stores. The
  following operations preserve this order: $\gc{\rho}{\cdot}$ and
  $\subst{\cdot}{l}{v}$, for some given $\rho$, $l$ and $v$.
\end{property}
\begin{proof}
Immediate when considering the stores as function graphs: $\revextends$ is
the inclusion, $\gc{\rho}{\cdot}$ a relative complement, and
$\subst{\cdot}{l}{v}$ a disjoint union (preceded by $\gc{(l,v')}{\cdot}$
when $l$ is already bound to some $v'$).\qedhere
\end{proof}

Before we prove that using minimal stores is equivalent to using full
stores, we need an alpha-conversion lemma, which allows us to rename
locations in the store, provided the new location does not already appear
in the store or the environments.  It is used when choosing
a fresh location for the (call) rule in proofs by induction.
\begin{lemma}[Alpha-conversion]\label{lem:rename-loc}
  If \reductionI{M}{s}{v}{s'}
  then, for all $l$, for all $l'$ appearing neither in $s$ nor in $\F$
  nor in $\rho\concat\rhot$,
  \[\reductionIF[\env{\rhot[l'/l]}{\rho[l'/l]}]{M}{s[l'/l]}{v}{s'[l'/l]}{\F[l'/l]}.\]
  Moreover, both derivations have the same height.
\end{lemma}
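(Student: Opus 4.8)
The plan is to argue by induction on the height of the given derivation of $\reductionI{M}{s}{v}{s'}$, with a case analysis on the last rule applied. Write $\sigma$ for the renaming $[l'/l]$, which sends $l$ to $l'$ and fixes every other location. The whole proof rests on one observation: thanks to the hypothesis that $l'$ occurs neither in $s$ nor in $\F$ nor in $\rho\concat\rhot$, the map $\sigma$ acts injectively on each of the finite sets of locations that arise, and therefore commutes with every construction used in the rules — environment and store lookup, cleaning $\gc{\cdot}{\cdot}$, update $\subst{\cdot}{\cdot}{\cdot}$, concatenation $\concat$, and restriction $|_{\cdot}$. For instance $\sigma(\gc{\rho}{s}) = \gc{\sigma\rho}{\sigma s}$, $\sigma(\subst{s}{m}{w}) = \subst{\sigma s}{\sigma m}{w}$, and $m \in \dom(s)$ if and only if $\sigma m \in \dom(\sigma s)$. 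Since terms and values contain no locations, $M$ and $v$ are left untouched by $\sigma$, which is why they carry no substitution in the statement.

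Granting these commutations, the cases (val), (var) and (assign) are immediate: the result store is a cleaning of $s$ (resp.\ of $\subst{s'}{m}{v}$), and side-conditions of the form $\rho\,x = m \in \dom\,s$ survive because $\sigma$ is a bijection on the relevant finite domains. For (seq), (if-t.), (if-f.) and (letrec) one applies the induction hypothesis to each premise and reassembles the derivation; the only points to check are that $l'$ is still fresh for the intermediate store $s'$ threaded between the premises, and for the closures and function environments handed to the premises. The latter is clear because the closure built in (letrec) captures only $\F$ together with an environment drawn from $\rho$ and $\rhot$, on all of which $l'$ is fresh. The former follows from two routine side inductions on the intermediate rules, best done simultaneously: a reduction never enlarges $\dom$ of the store, and a term reduced with tail environment $\rhot$ produces a store whose domain is disjoint from $\Image(\rhot)$; combined with $l'\notin\dom(s)$ these keep $l'$ out of every intermediate store.

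The substantial case, and the one I expect to be the main obstacle, is (call). Here the premises mention \emph{fresh} locations $l_1,\dots,l_n$ for the parameters of $f$, the body being reduced in the store $\ajout{s_{n+1}}{l_i}{v_i}$ under split environment $\env{\rho''}{\rho'}$ with $\rho'' = \drange{x}{l}{1}{n}$. If none of the $l_i$ equals $l$ or $l'$, the renaming goes through transparently: $\sigma$ fixes each $l_i$ and fixes $\rho''$, applying $\sigma$ to a store or function environment can only introduce the single new location $l'$ (distinct from every $l_i$), so the freshness conditions on the $l_i$ survive, and the induction hypothesis applies to the premises for $a_1,\dots,a_n$ and for $b$ (using the invariants above to see $l'$ stays fresh for $s_{n+1}$, hence for $\ajout{s_{n+1}}{l_i}{v_i}$). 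The remaining possibility is a clash — some $l_i$ is exactly $l$ or exactly $l'$ — which can genuinely invalidate a freshness condition after renaming. I dispose of it by a preliminary re-choice of that fresh location: invoking the induction hypothesis on the strictly shorter sub-derivation for $b$ (and correspondingly adjusting the $i$-th binding of $\rho''$ and the $l_i$-entry of the body's initial store), I replace $l_i$ by a brand-new location $m$ avoiding $l$, $l'$, and everything occurring in that node. By the disjointness invariant, $l_i\notin\dom(s')$, so this substitution leaves the store $\gc{\rhot}{s'}$ returned by the (call) node unchanged; it merely brings us back to the clash-free situation, to which the preceding argument applies. Finally, in every case the reconstructed derivation has the same shape as the original — each rule still contributes exactly one level and no premise is added or removed — so its height is unchanged; and since the auxiliary re-choices in the (call) case invoke the induction hypothesis only on sub-derivations of strictly smaller height, the whole induction is well-founded.
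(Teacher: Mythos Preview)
Your proof is correct and follows essentially the same approach as the paper: induction on the height of the derivation, with the (call) case as the crux, handled by first invoking the induction hypothesis on the body sub-derivation to move any clashing fresh location $l_i$ out of the way, and then applying the induction hypothesis a second time for the actual $[l'/l]$ renaming. Your treatment is in fact slightly more careful than the paper's: you explicitly flag both possible clashes $l_i=l$ and $l_i=l'$, whereas the paper only discusses $l_j=l'$ and then asserts $\rho''[l'/l]=\rho''$ without justifying why no $l_i$ equals $l$; your preliminary re-choice covers that edge case too.
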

\begin{proof}
  By induction on the height of the derivation.
  For the (call) case, we must ensure that the fresh locations $l_i$ do
  not clash with $l'$.  In case they do, we conclude by applying the
  induction hypotheses twice: first to rename the clashing $l_i$ into a
  fresh $l'_i$, then to rename $l$ into $l'$.

  Two preliminary elementary remarks. First, provided $l'$
  appears neither in $\rho$ or $\rhot$, nor in $s$,
  \[(\gc{\rho}{s})[l'/l] = \gc{(\rho[l'/l])}{(s[l'/l])}\] and
  \[(\rhot\concat\rho)[l'/l] = \rhot[l'/l]\concat\rho[l'/l].\]

  Moreover, if
  $\reductionI{M}{s}{v}{s'}$, then $\dom(s') = \dom(s)\setminus\rhot$
  (straightforward by induction). This leads to:
  $\rhot=\varepsilon \Rightarrow \dom(s') = \dom(s)$.

  By induction on the height of the derivation, because the induction
  hypothesis must be applied twice in the case of the (call) rule.

\paragraph{(call)}
     $\forall i, \dom(s_i) = \dom(s_{i+1})$.
     Thus, $\forall i, l' \notin \dom(s_i)$.
     This leads, by the induction hypotheses, to
     \[\forall
     i,\reductionI[\env{}{(\rhot\concat\rho)[l'/l]}]{a_i}{s_i[l'/l]}{v_i}{s_{i+1}[l'/l]}{\F[l'/l]}\]
     Moreover , $\mathcal{F'}$ is part of $\F$.
     As a result, since $l'$ does not appear in $\F$, it does not appear in
     $\mathcal{F'}$, nor in $\ajout{\mathcal{F'}}{f}{\F\,f}$.
     It does not appear in $\rho'$ either (since $\rho'$ is part of
     $\mathcal{F'}$).
     On the other hand, there might be some $j$ such that $l_j = l'$,
     so $l'$ might appear in $\rho''$.
     In that case, we apply the induction hypotheses a first time to rename $l_j$ in some 
     $l_j' \neq l'$. One can chose $l_j'$ such that it does not appear
     in $s_{n+1}$, $\ajout{\mathcal{F'}}{f}{\F\,f}$ nor in 
     $\rho''\concat\rho$. As a result, $l_j'$ is fresh.
     Since $l_j$ is fresh too, and does not appear in $\dom(s')$
     (because of our preliminary remarks),
     this leads to a mere substitution in $\rho''$:
     \[\reductionIF[\env{\rho''[l'_j/l_j]}{\rho'}]{b}{\ajout{s_{n+1}}{l_i[l'_j/l_j]}{v_i}}{v}{s'}{\ajout{\mathcal{F'}}{f}{\F\,f}}\]
     Once this (potentially) disturbing $l_j$ has been renamed (we
     ignore it in the rest of the proof), we apply the induction hypotheses
     a second time to rename $l$ to $l'$:
     \[\reductionIF[\env{\rho''[l'/l]}{\rho'[l'/l]}]{b}{(\ajout{s_{n+1}}{l_i}{v_i})[l'/l]}{v}{s'[l'/l]}{\ajout{\mathcal{F'}}{f}{\F\,f}}\]
     Now,
     $(\ajout{s_{n+1}}{l_i}{v_i})[l'/l] = \ajout{s_{n+1}[l'/l]}{l_i}{v_i}$.
     Moreover,
     \[\F[l'/l]\ f = \fun{\range{x}{1}{n}}{b}{\rho'[l'/l],\mathcal{F'}[l'/l]}\]
     and 
     \[(\ajout{\mathcal{F'}}{f}{\F\,f})[l'/l] =
     \ajout{\mathcal{F'}[l'/l]}{f}{\F[l'/l]\ f}\]
     Finally, $\rho''[l'/l] = \rho''$.
     Hence:
     \[\reductionIF[\env{\rhot[l'/l]}{\rho[l'/l]}]{f(\range{a}{1}{n})}{s_{1}[l'/l]}{v}{\gc{\rhot[l'/l]}{s'[l'/l]}}{\F[l'/l]}.
     \]

\paragraph{(val)}
     $\reductionIF[\env{\rhot[l'/l]}{\rho[l'/l]}]{v}{s[l'/l]}{v}{\gc{\rhot[l'/l]}{s[l'/l]}}{\F[l'/]}$

\paragraph{(var)} $s[l'/l](\rhot[l'/l]\concat\rho[l'/l]\ x) =
   s(\rhot\concat\rho\ x) = v$ implies
     \[\reductionIF[\env{\rhot[l'/l]}{\rho[l'/l]}]{x}{s[l'/l]}{v}{\gc{\rhot[l'/l]}{s[l'/l]}}{\F[l'/]}\]

\paragraph{(assign)} By the induction hypotheses,
     \[\reductionIF[\env{}{(\rhot\concat\rho)[l'/l]}]{a}{s[l'/l]}{v}{s'[l'/]}{\F[l'/]}\]
     Let $s''=\subst{s'}{\rhot\concat\rho\ x}{v}$. Then,
     \[\subst{s'[l'/l]}{(\rhot\concat\rho)[l'/l]\ x}{v} = s''[l'/l]\]
     Hence
     \[\reductionIF[\env{\rhot[l'/l]}{\rho[l'/l]}]{x \coloneqq a}{s[l'/l]}{\unit}{\gc{\rhot[l'/l]}{s''[l'/l]}}{\F[l'/]}\]

\paragraph{(seq)} By the induction hypotheses,
     \[\reductionIF[\env{}{(\rhot\concat\rho)[l'/l]}]{a}{s[l'/l]}{v}{s'[l'/l]}{\F[l'/]}\]
     Besides, $\dom(s')=\dom(s)$, therefore $l' \notin \dom(s')$.
     Then, by the induction hypotheses,
     \[\reductionIF[\env{\rhot[l'/l]}{\rho[l'/l]}]{b}{s'[l'/l]}{v'}{s''[l'/l]}{\F[l'/]}\]
     Hence
     \[\reductionIF[\env{\rhot[l'/l]}{\rho[l'/l]}]{a\ ;\ b}{s[l'/l]}{v'}{s''[l'/l]}{\F[l'/]}\]

\paragraph{(if-true) and (if-false)} are proved similarly to (seq).

\paragraph{(letrec)}
     Since $l'$ appears neither in $\rho'$ nor in $\F$, it does not
     appear in $\mathcal{F'}$ either.
     By the induction hypotheses,
     \[\reductionIF[\env{\rhot[l'/l]}{\rho[l'/l]}]{b}{s[l'/l]}{v}{s'[l'/l]}{\mathcal{F'}[l'/l]}\]
     Moreover,
     \[\mathcal{F'}[l'/l]=\ajout{\F[l'/l]}{f}{\fun{\range{x}{1}{n}}{a}{\rho'[l'/l],\F}}\]
     Hence
     \[\reductionIF[\env{\rhot[l'/l]}{\rho[l'/l]}]{\letrec{f(\range{x}{1}{n})}{a}{b}}{s}{v}{s'}{\F[l'/]}
   \qedhere
   \]
\end{proof}

To prove that using minimal stores is correct, we need to extend them so as
to recover the full stores of naive reduction.  The following lemma shows
that extending a store before an (intermediate) reduction extends the
resulting store too:
\begin{lemma}[Extending a store in a derivation]
\label{lem:extend-store}
   \[ 
   \text{Given the reduction }\reductionI{M}{s}{v}{s'},
    \text{ then }
   \forall t \extends s, \exists t'\extends s',
   \reductionI{M}{t}{v}{t'}.
    \]
  Moreover, both derivations have the same height.
\end{lemma}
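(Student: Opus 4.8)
The plan is to proceed by induction on the height of the derivation of $\reductionI{M}{s}{v}{s'}$, treating each reduction rule in turn. The key observation is that the intermediate rules only ever \emph{read} locations that are in the domain of the environments $\rhot\concat\rho$ (via (var) and (assign)) and only ever \emph{remove} locations named by $\rhot$ (via the cleaning operator $\gc{\rhot}{\cdot}$) or \emph{overwrite} them (via (assign), and when instantiating fresh parameters in (call)). So extending $s$ to some $t\extends s$ with extra bindings that are disjoint from everything the derivation touches should carry through untouched, producing $t' \extends s'$ where $t'$ differs from $s'$ exactly by those same extra bindings. I would actually prove the slightly sharper statement that one can take $t' = s' \cup (t \setminus s)$, i.e. the ``new part'' of the store is preserved verbatim; this is what makes the induction go through cleanly, especially in (seq), (if) and (call) where intermediate stores are threaded.

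For the base cases (val), (var), (assign): here $s' = \gc{\rhot}{s}$ (possibly after an assignment). Given $t\extends s$, set $t' = \gc{\rhot}{t}$ (resp. $\gc{\rhot}{\subst{t}{l}{v}}$); since $\revextends$ is preserved by $\gc{\rhot}{\cdot}$ and $\subst{\cdot}{l}{v}$ (Property~\ref{prop:ext-order}), we get $t'\extends s'$, and the side conditions ($\rhot\concat\rho\,x = l \in \dom\,t$, etc.) hold because $t$ extends $s$. The structural cases (seq), (if-true), (if-false), (letrec) are routine: apply the induction hypothesis to the first premise with $t\extends s$ to obtain an extended intermediate store, then feed that into the induction hypothesis for the second premise, and reassemble; in (letrec) the environment of functions $\mathcal{F'}$ is unchanged so nothing extra is needed.

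The main obstacle is the (call) case, because of the \emph{freshness} side condition on the locations $l_i$ chosen for the parameters. The locations $l_i$ are required to be fresh for $s_{n+1}$ and for $\ajout{\mathcal{F'}}{f}{\F\,f}$; but when we run the reduction starting from the larger store $t \extends s_1$, the intermediate stores become larger too, and some $l_i$ that was fresh for $s_{n+1}$ may now collide with a location in the extended part of $t_{n+1}$. This is exactly what Lemma~\ref{lem:rename-loc} (alpha-conversion) is for: first apply the induction hypothesis to the premises $\reductionI[\env{}{\rhot\concat\rho}]{a_i}{s_i}{v_i}{s_{i+1}}$ to get $t_i \extends s_i$ with $\reductionI[\env{}{\rhot\concat\rho}]{a_i}{t_i}{v_i}{t_{i+1}}$; then, for each $l_i$ that now clashes with $\dom(t_{n+1})\setminus\dom(s_{n+1})$, use Lemma~\ref{lem:rename-loc} on the derivation of the body $b$ to rename $l_i$ to some genuinely fresh $l_i'$ (possible since that derivation's store and environments only involve finitely many locations); after this renaming the $l_i$ are fresh for $t_{n+1}$, the body reduces by the induction hypothesis from $\ajout{t_{n+1}}{l_i}{v_i} \extends \ajout{s_{n+1}}{l_i}{v_i}$ to some $t' \extends s'$, and finally $\gc{\rhot}{t'} \extends \gc{\rhot}{s'}$ by Property~\ref{prop:ext-order}. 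I would note that the height of the derivation is preserved because the induction hypothesis and Lemma~\ref{lem:rename-loc} both preserve height, so no circularity arises from the double application in (call).
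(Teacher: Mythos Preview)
Your proposal is correct and follows essentially the same approach as the paper: induction on the height of the derivation, with the (call) case as the crux because the fresh locations $l_i$ may collide with the extended part of the store, resolved via alpha-conversion (Lemma~\ref{lem:rename-loc}) before applying the induction hypothesis to the body, and with Property~\ref{prop:ext-order} handling the preservation of $\extends$ under $\gc{\rhot}{\cdot}$ and $\subst{\cdot}{l}{v}$. One small imprecision: you list (assign) among the ``base cases'', but it has a premise reducing $a$, so you must apply the induction hypothesis there first to obtain an extended $t'' \extends s'$ before forming $\gc{\rhot}{\subst{t''}{l}{v}}$; the paper does this explicitly. Your proposed sharpening $t' = s' \cup (t\setminus s)$ is harmless but unnecessary---the paper gets by with the weaker existential statement, threading the extended stores through (seq) and (call) purely via $\extends$.
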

\begin{proof}
  By induction on the height of the derivation.  The most interesting case is
  (call), which requires alpha-converting a location (hence the induction on the
  height rather than the structure of the derivation).
  
(var), (val) and (assign) are straightforward by the induction hypotheses and
Property~\ref{prop:ext-order}; (seq), (if-true), (if-false) and (letrec) are
straightforward by the induction hypotheses.

\paragraph{(call)} Let $t_1 \extends s_1$.
      By the induction hypotheses,
      \begin{align*}
        \exists t_2 \extends s_2&,
        \reductionI[\env{}{\rhot\concat\rho}]{a_1}{t_1}{v_1}{t_2}\\
        \exists t_{i+1} \extends s_{i+1}&,
        \reductionI[\env{}{\rhot\concat\rho}]{a_i}{t_i}{v_i}{t_{i+1}}\\
        \exists t_{n+1} \extends s_{n+1}&,
        \reductionI[\env{}{\rhot\concat\rho}]{a_n}{t_n}{v_n}{t_{n+1}}
      \end{align*}
      The locations $l_i$ might belong to $\dom(t_{n+1})$ and thus not
      be fresh.
      By alpha-conversion (Lemma~\ref{lem:rename-loc}), 
      we chose fresh $l'_i$ (not in $\Image(\rho')$ and $\dom(s')$) such that
      \[\reductionIF[\env{(l'_i,v_i)}{\rho'}]{b}{\ajout{s_{n+1}}{l'_i}{v_i}}{v}{s'}%
      {\ajout{\mathcal{F'}}{f}{\F\,f}}\]
      By Property~\ref{prop:ext-order},
      $\ajout{t_{n+1}}{l'_i}{v_i} \extends \ajout{s_{n+1}}{l'_i}{v_i}$.
      By the induction hypotheses,
      \[\exists t' \extends s',
      \reductionIF[\env{(l'_i,v_i)}{\rho'}]{b}{\ajout{t_{n+1}}{l'_i}{v_i}}{v}{t'}%
      {\ajout{\mathcal{F'}}{f}{\F\,f}}\]
      Moreover,
      $\gc{\rhot}{t'} \extends \gc{\rhot}{s'}$.
      Hence,
      \[\reductionI{f(\range{a}{1}{n})}{t_{1}}{v}{\gc{\rhot}{t'}}.
      \]

\paragraph{(var)} Let $t \extends s$.
      $\reductionI{v}{t}{v}{\gc{\rhot}{t}}$ and
      $\exists t' = \gc{\rhot}{t} \extends \gc{\rhot}{s} = s'$
      (Property~\ref{prop:ext-order}).

\paragraph{(val)} Let $t \extends s$.
      $\reductionI{x}{t}{t\ l}{\gc{\rhot}{t}}$ and
      $\exists t' = \gc{\rhot}{t} \extends \gc{\rhot}{s} = s'$ 
      (Property~\ref{prop:ext-order}).
      Moreover,
      $t\ l = s\ l$
      because
      $l \in \dom(s)$ and $t|_{\dom(s)} = s$.

\paragraph{(assign)} Let $t \extends s$.
      By the induction hypotheses,
      \[\exists t' \extends s',\reductionI[\env{}{\rhot\concat\rho}]{a}{t}{v}{t'}\]
      Hence,
      \[\reductionI{x \coloneqq a}{t}{\unit}{\gc{\rhot}{\subst{t'}{l}{v}}}\]
      concludes, since
      $\gc{\rhot}{\subst{t'}{l}{v}} \extends \gc{\rhot}{\subst{t'}{l}{v}}$
      (Property~\ref{prop:ext-order}).

\paragraph{(seq)} Let $t \extends s$.
      By the induction hypotheses,
      \begin{align*}
        \exists t' \extends s'&, \reductionI[\env{}{\rhot\concat\rho}]{a}{t}{v}{t'}\\
        \exists t''\extends s''&, \reductionI{b}{t'}{v'}{t''}
      \end{align*}
      Hence,
      \[\exists t''\extends s'',\reductionI{a\ ;\ b}{t}{v'}{t''}\]

\paragraph{(if-true) and (if-false)} are proved similarly to (seq).

\paragraph{(letrec)} Let $t \extends s$.
      By the induction hypotheses,
      \[\exists t'\extends s',\reductionIF{b}{s}{v}{s'}{\mathcal{F'}}\]
      Hence,
      \[\exists t'\extends
      s',\reductionI{\letrec{f(\range{x}{1}{n})}{a}{b}}{s}{v}{t'}
      \qedhere
  \]
\end{proof}

Now we can show the required lemmas and prove the equivalence between the
intermediate and naive reduction rules.

\begin{lemma}[Intermediate implies naive]\label{lem:IimpliesN}
  \[\text{If }\reductionI{M}{s}{v}{s'}
    \text{ then }
    \exists t'\extends s', \reductionN[\rhot\concat\rho]{M}{s}{v}{t'}.
  \]
\end{lemma}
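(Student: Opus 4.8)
The plan is to proceed by induction on the structure (or height) of the intermediate derivation, translating each intermediate rule into the corresponding naive rule while carefully tracking how the naive store can only grow compared to the intermediate one. The statement to establish is that whenever $\reductionI{M}{s}{v}{s'}$, there exists $t' \extends s'$ with $\reductionN[\rhot\concat\rho]{M}{s}{v}{t'}$; note that the naive rules ignore the split of the environment, so on the naive side we simply merge $\rhot$ and $\rho$ into a single environment $\rhot\concat\rho$. The discrepancy between $s'$ and $t'$ arises precisely because the intermediate (val), (var), (assign) and (call) rules clean the store with respect to $\rhot$, whereas the naive rules never clean: so $t'$ will in general be $s'$ augmented with the locations of $\rhot$ (and, in the (call) case, with the locations of freshly instantiated parameters of inner calls that were reclaimed in the intermediate reduction).

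First I would dispatch the leaf cases. For (val), the intermediate reduction gives $s' = \gc{\rhot}{s}$, and the naive reduction gives $t' = s$; since $s \extends \gc{\rhot}{s}$ by Property~\ref{prop:ext-order}, this works. The (var) case is identical, using additionally that the lookup $\rhot\concat\rho\ x = l \in \dom s$ is exactly the premise needed for the naive (var) rule and that $s\ l$ is the same value in both. For (assign), I apply the induction hypothesis to the subderivation of $a$ (which uses the all-non-tail environment $\env{}{\rhot\concat\rho}$, so the merged environment is unchanged), obtaining $t' \extends s'$; then the naive (assign) rule produces $\subst{t'}{l}{v}$, and $\subst{t'}{l}{v} \extends \gc{\rhot}{\subst{s'}{l}{v}}$ follows from Property~\ref{prop:ext-order} (the cleaning only discards bindings, the update preserves extension). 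The (seq), (if-true), (if-false) and (letrec) cases are straightforward chainings of the induction hypotheses: for (seq), I first get $t' \extends s'$ for $a$, feed $t'$ as the starting store for $b$ — but here I need the induction hypothesis for $b$ to start from the \emph{larger} store $t'$ rather than $s'$, so I must first invoke Lemma~\ref{lem:extend-store} to lift the intermediate reduction of $b$ from $s'$ to $t'$, and only then apply the induction hypothesis; composing gives the desired $t''$. For (letrec), the closure built in the intermediate rule is $\fun{\range{x}{1}{n}}{a}{\rho,\F}$ — which is exactly the closure built by the naive (letrec) rule — so the environments of functions match and the induction hypothesis on the body transfers directly.

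The main obstacle, as usual, is the (call) case. In the intermediate reduction, after evaluating the arguments $a_i$ (each under $\env{}{\rhot\concat\rho}$, giving stores $s_1, \dots, s_{n+1}$), fresh locations $l_i$ are chosen and the body $b$ is reduced under $\env{\rho''}{\rho'}$ from $\subst{s_{n+1}}{l_i}{v_i}$ to $s'$, and the final result is $\gc{\rhot}{s'}$. To build the naive derivation I proceed left to right on the arguments: the induction hypothesis gives $t_2 \extends s_2$ for $a_1$; then I use Lemma~\ref{lem:extend-store} to re-run the intermediate reduction of $a_2$ starting from $t_2$ (getting some intermediate store $\extends s_3$), then the induction hypothesis to get $t_3 \extends s_3$ on the naive side, and so on, yielding $t_{n+1} \extends s_{n+1}$. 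The delicate point is freshness: the $l_i$ that were fresh for $s_{n+1}$ in the intermediate reduction may now clash with $\dom(t_{n+1})$, so I must first apply Lemma~\ref{lem:rename-loc} (alpha-conversion) to the intermediate body reduction to replace the $l_i$ by locations $l'_i$ fresh with respect to $t_{n+1}$, $\rho'$, $\F$, and $\ajout{\mathcal{F'}}{f}{\F\,f}$; then $\subst{t_{n+1}}{l'_i}{v_i} \extends \subst{s_{n+1}}{l'_i}{v_i}$ by Property~\ref{prop:ext-order}, so the induction hypothesis applies to the body and yields $t' \extends s'$ with the naive body reduction under $\rho''\concat\rho'$ (merging the split environment on the naive side matches the naive (call) rule exactly, since the naive rule also uses $\rho''\concat\rho'$). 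The naive (call) rule then produces $t'$ (no cleaning), and $t' \extends s' \extends \gc{\rhot}{s'}$ by Property~\ref{prop:ext-order} and transitivity, which is the required conclusion. I would also remark that the environment-of-functions side is harmless here: the naive and intermediate (call) rules use the identical $\ajout{\mathcal{F'}}{f}{\F\,f}$, and the closure $\F\,f = \fun{\range{x}{1}{n}}{b}{\rho',\mathcal{F'}}$ is passed through verbatim, so no compaction issue arises at this stage (that was handled separately in Lemmas~\ref{lem:IimpliesO} and~\ref{lem:OimpliesI}).
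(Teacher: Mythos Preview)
Your proposal is correct and follows the same route as the paper: induction on the height of the derivation, with Property~\ref{prop:ext-order} handling the leaf cases and the combination of Lemma~\ref{lem:extend-store} (to enlarge the starting store of a sub-derivation) followed by the induction hypothesis handling the compound cases; alpha-conversion via Lemma~\ref{lem:rename-loc} restores freshness of the $l_i$ in (call).

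One small gap to tighten: in the (call) case, after alpha-converting the body reduction so that it runs from $\subst{s_{n+1}}{l'_i}{v_i}$, you still need one more invocation of Lemma~\ref{lem:extend-store} to lift that intermediate body reduction to start from the larger store $\subst{t_{n+1}}{l'_i}{v_i}$ \emph{before} applying the induction hypothesis --- exactly the same two-step pattern you correctly spelled out for (seq) and for the arguments $a_i$. As written, you jump from ``$\subst{t_{n+1}}{l'_i}{v_i} \extends \subst{s_{n+1}}{l'_i}{v_i}$'' directly to ``so the induction hypothesis applies to the body'', but the induction hypothesis only changes the \emph{output} store, not the input one; it is Lemma~\ref{lem:extend-store} that lets you change the input. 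With that step made explicit, the argument is complete and matches the paper's proof.
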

\begin{proof}
  By induction on the height of the derivation, because some stores are
  modified during the proof.
  The interesting cases are (seq) and (call), where
  Lemma~\ref{lem:extend-store} is used to extend intermediary stores.
  Other cases are straightforward by Property~\ref{prop:ext-order} and
  the induction hypotheses.

\paragraph{(seq)}
      By the induction hypotheses,
      \[\exists t' \extends s', \reductionN{a}{s}{v}{t'}.\]
      Moreover,
      \[\reductionI{b}{s'}{v'}{s''}.\]
      Since $t' \extends s'$, Lemma~\ref{lem:extend-store} leads to:
      \[\exists t\extends s'', \reductionI{b}{t'}{v'}{t}\]
      and the height of the derivation is preserved.
      By the induction hypotheses,
      \[\exists t''\extends t, \reductionN{b}{t'}{v'}{t''}\]
      Hence,
      since $\revextends$ is transitive
      (Property~\ref{prop:ext-order}),
      \[\exists t''\extends s'',\reductionN{a\ ;\ b}{s}{v'}{t''}.\]

\paragraph{(call)}
      Similarly to the (seq) case, we apply the induction
      hypotheses and Lemma~\ref{lem:extend-store}:
      \begin{align*}
        \exists t_2 \extends s_2&,
        \reductionN{a_1}{s_1}{v_1}{t_2}&\text{(Induction)}\\
        \exists t'_{i+1} \extends s_{i+1}&,
        \reductionI[\env{}{\rhot\concat\rho}]{a_i}{t_i}{v_i}{t'_{i+1}}&\text{(Lemma~\ref{lem:extend-store})}\\
        \exists t_{i+1} \extends t'_{i+1} \extends s_{i+1}&,
        \reductionN{a_i}{t_i}{v_i}{t_{i+1}}&\text{(Induction)}\\
        \exists t'_{n+1} \extends s_{n+1}&,
        \reductionI[\env{}{\rhot\concat\rho}]{a_n}{t_n}{v_n}{t'_{n+1}}&\text{(Lemma~\ref{lem:extend-store})}\\
        \exists t_{n+1} \extends t'_{n+1} \extends s_{n+1}&,
        \reductionN{a_n}{t_n}{v_n}{t_{n+1}}&\text{(Induction)}
      \end{align*}
      The locations $l_i$ might belong to $\dom(t_{n+1})$ and thus not
      be fresh.
      By alpha-conversion (Lemma~\ref{lem:rename-loc}), 
      we choose a set of fresh $l'_i$ (not in $\Image(\rho')$ and $\dom(s')$) such that
      \[\reductionIF[\env{(l'_i,v_i)}{\rho'}]{b}{\ajout{s_{n+1}}{l'_i}{v_i}}{v}{s'}%
      {\ajout{\mathcal{F'}}{f}{\F\,f}}.\]
      By Property~\ref{prop:ext-order},
      $\ajout{t_{n+1}}{l'_i}{v_i} \extends \ajout{s_{n+1}}{l'_i}{v_i}$.
      Lemma~\ref{lem:extend-store} leads to,
      \[\exists t \extends s',
      \reductionIF[\env{(l'_i,v_i)}{\rho'}]{b}{\ajout{t_{n+1}}{l'_i}{v_i}}{v}{t}%
      {\ajout{\mathcal{F'}}{f}{\F\,f}}.\]
      By the induction hypotheses,
      \[\exists t' \extends t \extends s',
      \reductionNF[(l'_i,v_i)\concat\rho']{b}{\ajout{t_{n+1}}{l'_i}{v_i}}{v}{t'}%
      {\ajout{\mathcal{F'}}{f}{\F\,f}}.\]
      Moreover,
      $\gc{\rhot}{t'} \extends \gc{\rhot}{s'}$.
      Hence,
      \[\reductionN{f(\range{a}{1}{n})}{s_1}{v}{\gc{\rhot}{t'}}.
      \]

\paragraph{(val)} $\reductionN{v}{s}{v}{t'}$ with $t' = s \extends
      \gc{\rhot}{s} = s'$.

\paragraph{(var)} $\reductionN{x}{s}{s\ l}{s''}$ with $t' = s \extends
      \gc{\rhot}{s} = s'$.

\paragraph{(assign)}
      By the induction hypotheses,
      \[\exists s''\extends s',\reductionN{a}{s}{v}{t'}\]
      Hence,
      \[\reductionN{x \coloneqq a}{s}{\unit}{\subst{t'}{l}{v}}\]
      concludes since
      $\subst{t'}{l}{v}\extends\subst{s'}{l}{v}$
      (Property~\ref{prop:ext-order}).

\paragraph{(if-true) and (if-false)} are proved similarly to (seq).

\paragraph{(letrec)}
      By the induction hypotheses,
      \[\exists t'\extends s',\reductionNF{b}{s}{v}{s'}{\mathcal{F'}}.\]
      Hence,
      \[\exists t'\extends
      s',\reductionN{\letrec{f(\range{x}{1}{n})}{a}{b}}{s}{v}{t'}.
      \qedhere
  \]
\end{proof}

The proof of the converse property --- i.e.\  if a term reduces in the
naive reduction rules, it reduces in the intermediate reduction rules
too --- is more complex because the naive reduction rules provide very weak
invariants about stores and environments.  For that reason, we add an
hypothesis to ensure that every location appearing in the environments
$\rho$, $\rhot$ and $\F$ also appears in the store $s$:
  \[\Image(\rhot\concat\rho)\cup\Loc(\F) \subset \dom(s).\]
Moreover, since stores are often larger in the naive reduction rules
than in the intermediate ones, we need to generalise the induction
hypothesis.

\begin{lemma}[Naive implies intermediate]\label{lem:NimpliesI}

Assume
  $\Image(\rhot\concat\rho)\cup\Loc(\F) \subset \dom(s)$.
  Then,
  $\reductionN[\rhot\concat\rho]{M}{s}{v}{s'}$
  implies
  \[
  \forall t \revextends s
  \text{ such that }
  \Image(\rhot\concat\rho)\cup\Loc(\F) \subset \dom(t),
  \quad
  \reductionI{M}{t}{v}{s'|_{\dom(t)\setminus\Image(\rhot)}}.
  \]
\end{lemma}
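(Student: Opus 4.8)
The plan is to argue by induction on the structure of the naive derivation (its height would serve equally well), keeping the two side conditions $\Image(\rhot\concat\rho)\cup\Loc(\F)\subset\dom(s)$ and $\Image(\rhot\concat\rho)\cup\Loc(\F)\subset\dom(t)$ as part of the statement, and producing, for each such $t\revextends s$, an intermediate derivation whose final store is exactly $s'|_{\dom(t)\setminus\Image(\rhot)}$. Since $\revextends$ is just inclusion of store graphs, every case reduces to checking that the relevant domains line up, using three elementary facts: (a) a naive reduction never removes a location, so $\dom(s)\subseteq\dom(s')$ whenever $\reductionN{M}{s}{v}{s'}$; (b) $(\gc{\rho}{s})|_D=\gc{\rho}{(s|_D)}$; and (c) when $l\in D$, $\dom(\subst{s}{l}{v})=\dom(s)\cup\{l\}$ and $(\subst{s}{l}{v})|_D=\subst{(s|_D)}{l}{v}$. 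It is worth stressing at the outset why the statement quantifies over \emph{all} $t\revextends s$ and not merely $t=s$: in the composite rules the intermediate reduction of the first premise already shrinks the store, so the second premise must be replayed from a strict restriction of the naive store --- hence the need for this stronger induction hypothesis.

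For the base cases (val) and (var) we have $s'=s$, so $t\revextends s$ gives $\gc{\rhot}{t}=s|_{\dom(t)\setminus\Image(\rhot)}$, and the side condition places the looked-up location inside $\dom(t)$. For (assign), apply the induction hypothesis to the premise with the all-non-tail environment $\env{}{\rhot\concat\rho}$ and the same $t$; it reduces to $s'|_{\dom(t)}$, the side condition puts $l$ in $\dom(t)$, and facts (b)--(c) rewrite $\gc{\rhot}{\subst{s'|_{\dom(t)}}{l}{v}}$ as $(\subst{s'}{l}{v})|_{\dom(t)\setminus\Image(\rhot)}$. For (seq) and the two (if) rules, apply the induction hypothesis to the first/test component with $\env{}{\rhot\concat\rho}$ and input $t$ (yielding $s'|_{\dom(t)}$), then to the second component --- whose side condition survives because $\dom(s)\subseteq\dom(s')$ by (a) --- with input $s'|_{\dom(t)}\revextends s'$, yielding $s''|_{\dom(t)\setminus\Image(\rhot)}$, and compose with the corresponding intermediate rule. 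The (letrec) case is the same, noting that the closure added to $\F$ contributes only $\Image(\rhot\concat\rho)$ to $\Loc(\cdot)$, which already lies in $\dom(t)$, so the side condition is preserved for the body.

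The real work is the (call) case. First I thread the arguments: the induction hypothesis for $a_1$ with input $t$ gives $\reductionI[\env{}{\rhot\concat\rho}]{a_1}{t}{v_1}{s_2|_{\dom(t)}}$; since $s_2|_{\dom(t)}\revextends s_2$ and the side condition for $s_2$ holds by (a), the induction hypothesis for $a_2$ applies with this input, and so on, giving $\reductionI[\env{}{\rhot\concat\rho}]{a_i}{s_i|_{\dom(t)}}{v_i}{s_{i+1}|_{\dom(t)}}$ for every $i$. Next, the fresh locations $l_i$ chosen by the naive rule are still admissible in the intermediate rule, because $\dom(s_{n+1}|_{\dom(t)})\subseteq\dom(s_{n+1})$ and the condition on $\ajout{\mathcal{F'}}{f}{\F\,f}$ is literally the same --- so, unlike the converse lemma, no alpha-conversion (and no store-extension lemma) is needed here. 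For the body I check its side condition: $\Image(\rho''\concat\rho')\cup\Loc(\ajout{\mathcal{F'}}{f}{\F\,f})\subset\dom(\ajout{s_{n+1}}{l_i}{v_i})$ holds because $\Image(\rho')\cup\Loc(\mathcal{F'})\subseteq\Loc(\F)\subseteq\dom(s_1)\subseteq\dom(s_{n+1})$ and the $l_i$ are in the store by construction, and the same inclusions hold with the restricted store $u=\ajout{(s_{n+1}|_{\dom(t)})}{l_i}{v_i}$; since the $l_i$ are fresh we get $u\revextends\ajout{s_{n+1}}{l_i}{v_i}$ and $\dom(u)\setminus\Image(\rho'')=\dom(t)$, so the induction hypothesis on the body, taken with split environment $\env{\rho''}{\rho'}$, produces a reduction to $s'|_{\dom(t)}$. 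Finally the intermediate (call) rule cleans this to $\gc{\rhot}{s'|_{\dom(t)}}=s'|_{\dom(t)\setminus\Image(\rhot)}$, using $\dom(t)\subseteq\dom(s_1)\subseteq\dom(s')$, which is exactly the claimed store. The only genuinely fiddly part throughout is this domain bookkeeping; everything else is a direct replay of the naive derivation over smaller stores.
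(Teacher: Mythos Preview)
Your proposal is correct and follows essentially the same approach as the paper's proof: induction on the derivation, threading restricted stores through the argument premises in (call), observing that the naive rule's fresh $l_i$ remain fresh for the smaller store so no alpha-conversion is needed, and closing with the domain arithmetic $\dom(u)\setminus\Image(\rho'')=\dom(t)$ followed by the final $\gc{\rhot}{\cdot}$ cleaning. Your explicit isolation of facts (a)--(c) and your remark on why the statement must quantify over all $t\revextends s$ are presentational additions, but the structure and key steps match the paper exactly.
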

\begin{proof}
  By induction on the structure of the derivation.

\paragraph{(val)}  Let $t \revextends s$. Then
     \begin{align*}
       \gc{\rhot}{t}
       &= s|_{\dom(t)\setminus\Image(\rhot)}
       && \text{because $s|_{\dom(t)} = t$}\\
       &= s'|_{\dom(t)\setminus\Image(\rhot)}
       && \text{because $s' = s$}
     \end{align*}
     Hence, \[\reductionI{v}{t}{v}{\gc{\rhot}{t}}.\]

\paragraph{(var)} Let $t \revextends s$ such that
     $\Image(\rhot\concat\rho)\cup\Loc(\F) \subset \dom(t)$.
     Note that
     $l \in \Image(\rhot\concat\rho) \subset \dom(t)$
     implies
     $t\ l = s\ l$.
     Then,
     \begin{align*}
       \gc{\rhot}{t}
       &= s|_{\dom(t)\setminus\Image(\rhot)}
       && \text{because $s|_{\dom(t)} = t$}\\
       &= s'|_{\dom(t)\setminus\Image(\rhot)}
       && \text{because $s' = s$}
     \end{align*}
     Hence, \[\reductionI{x}{t}{t\ l}{\gc{\rhot}{t}}.\]

\paragraph{(assign)} Let $t \revextends s$ such that
     $\Image(\rhot\concat\rho)\cup\Loc(\F) \subset \dom(t)$.
     By the induction hypotheses,
     since $\Image(\varepsilon) = \emptyset$,
     \[\reductionI[\env{}{\rhot\concat\rho}]{a}{t}{v}{s'|_{\dom(t)}}\]
     Note that
     $l \in \Image(\rhot\concat\rho) \subset \dom(t)$
     implies
     $l \in \dom(s'|_{\dom(t)})$.
     Then
     \begin{align*}
       \gc{\rhot}{(\subst{s'|_{\dom(t)}}{l}{v})}
       &=\gc{\rhot}{(\subst{s'}{l}{v})|_{\dom(t)}}
         &\text{because $l \in \dom(s'|_{\dom(t)})$}\\
       &=(\subst{s'}{l}{v})|_{\dom(t)\setminus\Image(\rhot)}
     \end{align*}
     Hence, \[ \reductionI{x \coloneqq a}{s}{\unit}{\gc{\rhot}%
     {(\subst{s'|_{\dom(t)}}{l}{v})}}. \]

\paragraph{(seq)} Let $t \revextends s$ such that
     $\Image(\rhot\concat\rho)\cup\Loc(\F) \subset \dom(t)$.
     By the induction hypotheses,
     since $\Image(\varepsilon) = \emptyset$,
     \[\reductionI[\env{}{\rhot\concat\rho}]{a}{t}{v}{s'|_{\dom(t)}}\]
     Moreover,
     $s'|_{\dom(t)} \revextends s'$
     and
     $\Image(\rhot\concat\rho)\cup\Loc(\F) \subset \dom(s'|_{\dom(t)})=\dom(t)$.
     By the induction hypotheses, this leads to:
     \[\reductionI{b}{s'|_{\dom(t)}}{v'}{s''|_{\dom(s'|_{\dom(t)})\setminus\Image(\rhot)}}.\]
     Hence,
     with $\dom(s'|_{\dom(t)})=\dom(t)$,
     \[\reductionI{a\ ;\ b}{t}{v'}{s''|_{\dom(t)\setminus\Image(\rhot)}}.\]

\paragraph{(if-true) and (if-false)} are proved similarly to (seq).

\paragraph{(letrec)} Let $t \revextends s$ such that
     $\Image(\rhot\concat\rho)\cup\Loc(\F) \subset \dom(t)$.
     \[\Loc(\mathcal{F'}) = \Loc(\F)\cup\Image(\rhot\concat\rho)
     \text{ implies }
     \Image(\rhot\concat\rho)\cup\Loc(\mathcal{F'}) \subset \dom(t).\]
     Then, by the induction hypotheses,
     \[\reductionIF{b}{t}{v}{s'|_{\dom(t)\setminus\Image(\rhot)}}{\mathcal{F'}}.\]
     Hence,
     \[\reductionI{\letrec{f(\range{x}{1}{n})}{a}{b}}{t}{v}{s'|_{\dom(t)\setminus\Image(\rhot)}}.\]

\paragraph{(call)} Let $t \revextends s_1$ such that
     $\Image(\rhot\concat\rho)\cup\Loc(\F) \subset \dom(t)$.
     Note the following equalities:
      \begin{align*}
      s_1|_{\dom(t)} &= t\\
      s_2|_{\dom(t)} &\revextends s_2\\
      \Image(\rhot\concat\rho)\cup\Loc(\F) \subset \dom(s_2|_{\dom(t)})
      & = \dom(t)\\
      s_3|_{\dom(s_2|_{\dom(t)})} &= s_3|_{\dom(t)}
      \end{align*}
     By the induction hypotheses, they yield:
     \begin{align*}
       \reductionI[\env{}{\rhot\concat\rho}]{a_1}{t}{v_1}{s_2|_{\dom(t)}}\\
       \reductionI[\env{}{\rhot\concat\rho}]{a_2}{s_2|_{\dom(t)}}{v_1}{s_3|_{\dom(t)}}\\
       \forall i,\reductionI[\env{}{\rhot\concat\rho}]{a_i}{s_i|_{\dom(t)}}{v_i}{s_{i+1}|_{\dom(t)}}
     \end{align*}
     Moreover,
     $s_{n+1}|_{\dom(t)} \revextends s_{n+1}$
     implies
     $\ajout{s_{n+1}|_{\dom(t)}}{l_i}{v_i} \revextends \ajout{s_{n+1}}{l_i}{v_i}$ 
     (Property~\ref{prop:ext-order})
     and:
     \begin{align*}
       \Image(\rho''\concat\rho')\cup\Loc(\ajout{\mathcal{F'}}{f}{\F\,f})
       &= \Image(\rho'') \cup (\Image(\rho')\cup\Loc(\mathcal{F'}))\\
       &\subset \{l_i\} \cup \Loc(\F)\\
       &\subset \{l_i\} \cup \dom(t)\\
       &\subset \dom(\ajout{s_{n+1}|_{\dom(t)}}{l_i}{v_i})
     \end{align*}
     Then, by the induction hypotheses,
     \[\reductionIF[\env{\rho''}{\rho'}]{b}{\ajout{s_{n+1}|_{\dom(t)}}{l_i}{v_i}}{v}%
     {s'|_{\dom(\ajout{s_{n+1}|_{\dom(t)}}{l_i}{v_i})\setminus\Image(\rho'')}}%
     {\ajout{\mathcal{F'}}{f}{\F\,f}}\]
     Finally,
     \begin{align*}
       \gc{\rhot}{s'|_{\dom(\ajout{s_{n+1}|_{\dom(t)}}{l_i}{v_i})\setminus\Image(\rho'')}}
       &= \gc{\rhot}{s'|_{\dom(t)\cup\{l_i\}\setminus\{l_i\}}} = \gc{\rhot}{s'|_{\dom(t)}}\\
       &= (\gc{\rhot}{s'})|_{\dom(t)\setminus\Image(\rhot)}
       \quad\text{(by definition of $\gc{\cdot}{\cdot}$)}
     \end{align*}
     Hence,
     \[\reductionI{f(\range{a}{1}{n})}{t}{v}%
     {(\gc{\rhot}{s'})|_{\dom(t)\setminus\Image(\rhot)}}.
     \qedhere
     \]
\end{proof}

\subsection{Correctness of lambda-lifting\label{sec:correction-ll}}

In this section, we prove the correctness of lambda-lifting
(Theorem~\ref{thm:lambda-lifting-correctness},
p.~\pageref{thm:lambda-lifting-correctness}) by induction on the height of the
optimised reduction.

Section~\ref{sec:strong-invariants} defines stronger invariants and
rewords the correctness theorem with them.  Section~\ref{sec:overview}
gives an overview of the proof.  Sections~\ref{sec:rewriting-lemmas}
and~\ref{sec:aliasing-lemmas} prove a few lemmas needed for the proof.
Section~\ref{sec:proof-correctness} contains the actual proof of
correctness.

\subsubsection{Strengthened hypotheses}
\label{sec:strong-invariants}

We need strong induction hypotheses to ensure that key invariants about
stores and environments hold at every step.  For that purpose, we define
\emph{aliasing-free environments}, in which locations may not be
referenced by more than one variable, and \emph{local positions}.  They
yield a strengthened version of liftable parameters
(Definition~\ref{dfn:var-liftable}).  We then define lifted environments
(Definition~\ref{dfn:lifted-env}) to mirror the effect of
lambda-lifting in lifted terms captured in closures, and finally
reformulate the correctness of lambda-lifting in
Theorem~\ref{thm:correction-ll} with hypotheses strong enough to be
provable directly by induction.

\begin{definition}[Aliasing]\label{dfn:aliasing}
A set of environments $\mathcal{E}$ is \emph{aliasing-free}
when:
\[\forall \rho,\rho' \in \mathcal{E}, \forall x \in \dom(\rho), \forall y
\in \dom(\rho'),\
\rho\ x = \rho'\ y \Rightarrow x = y.
\]
By extension, an environment of functions $\F$ is aliasing-free when
$\Env(\F)$ is aliasing-free.
\end{definition}
The notion of aliasing-free environments is not an artifact of our small
language, but translates a fundamental property of the C semantics:
distinct function parameters or local variables are always bound to
distinct memory locations (Section~6.2.2, paragraph~6 in ISO/IEC 9899
\cite{iso9899}).  

A local position is any position in a term except inner functions.
Local positions are used to distinguish functions defined directly in a
term from deeper nested functions, because we need to enforce
Invariant~\ref{case:loc} (Definition~\ref{dfn:var-liftable}) on the
former only.
\begin{definition}[Local position]
\emph{Local positions} are defined inductively as follows:
\begin{enumerate}
\item $M$ is in local position in $M$, $x \coloneqq M$, $M \ ;\ M$,
  \ite{M}{M}{M} and $f(M,\dotsc,M)$.
\item $N$ is in local position in \letrec{f(\range{x}{1}{n})}{M}{N}.
    \qedhere
\end{enumerate}
\end{definition}

We extend the notion of liftable parameter
(Definition~\ref{dfn:var-liftable-simple},
p.~\pageref{dfn:var-liftable-simple}) to enforce invariants on stores and
environments.
\begin{definition}[Extended liftability]\label{dfn:var-liftable}
  The parameter $x$ is \emph{liftable} in $(M,\F,\rhot,\rho)$ when:
  \begin{enumerate}
    \item $x$ is defined as the parameter of a function $g$,
    either in $M$ or in $\F$,
    \label{case:def}
    \item in both $M$ and $\F$,
    inner functions in $g$, named $h_i$, are defined and called
    exclusively:
      \begin{enumerate}
      \item in tail position in $g$, or
      \item in tail position in some $h_j$ (with possibly $i=j$), or
      \item in tail position in $M$,
      \end{enumerate}
    \label{case:pos}
    \item for all $f$ defined in local position in $M$,
    $x \in \dom(\rhot\concat\rho) \Leftrightarrow \exists i, f = h_i$,
    \label{case:loc}
    \item moreover,
    if $h_i$ is called in tail position in $M$,
    then $x \in \dom(\rhot)$,
    \label{case:term}
    \item in \F,
    $x$ appears necessarily and exclusively in the environments of the
    $h_i$'s closures,
    \label{case:exclu}
    \item $\F$ contains only compact closures and
    $\Env(\F)\cup\{\rho,\rhot\}$ is aliasing-free.
    \label{case:share}
    \qedhere
  \end{enumerate}
\end{definition}

We also extend the definition of lambda-lifting
(Definition~\ref{dfn:lifted-term}, p.~\pageref{dfn:lifted-term}) to
environments, in order to reflect changes in lambda-lifted parameters
captured in closures.
\begin{definition}[Lifted form of an environment]\label{dfn:lifted-env}
  \begin{align*}
  \text{If } \F\,f =&
  \fun{\range{x}{1}{n}}{b}{\rho',\mathcal{F'}}\qquad\text{then}\\
  \lift{\F}\ f=&
  \begin{cases}
  \fun{\range{x}{1}{n}x}{\lift{b}}{\rho'|_{\dom(\rho')\setminus\{x\}},\lift{\mathcal{F'}}}&\text{when
  $f = h_i$ for some $i$}\\
  \fun{\range{x}{1}{n}}{\lift{b}}{\rho',\lift{\mathcal{F'}}}&\text{otherwise}
  \qedhere
  \end{cases}
  \end{align*}
\end{definition}
Lifted environments are defined such that a liftable parameter never
appears in them.  This property will be useful during the proof of
correctness.
\begin{lemma}\label{lem:Fstarclean}
  If $x$ is a liftable parameter in $(M,\F,\rhot,\rho)$,
  then $x$ does not appear in \lift{\F}.
\end{lemma}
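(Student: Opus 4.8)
The plan is to argue directly from the shape of Definition~\ref{dfn:lifted-env} together with invariant~\ref{case:exclu} of Definition~\ref{dfn:var-liftable}. The claim ``$x$ does not appear in $\lift{\F}$'' means, for a variable, that $x$ lies in the domain of no environment captured inside $\lift{\F}$, by analogy with the definition of a location appearing in an environment of functions. Note that the whole argument concerns only the \emph{environment} components of the closures of $\F$: $\Env(\cdot)$ ignores term components, so the lifted bodies $\lift{b}$ produced by Definition~\ref{dfn:lifted-env} are irrelevant here. The only hypotheses I will use are invariant~\ref{case:exclu} --- in $\F$, $x$ lies in the domain of a captured environment only for the closures of the inner functions $h_i$ --- and the precise two cases of Definition~\ref{dfn:lifted-env}.

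The key step is to spell out, by induction following the well-founded recursion that defines $\Env$ (equivalently, the recursion defining $\lift{\cdot}$ on environments), how $\Env(\lift{\F})$ relates to $\Env(\F)$. I claim that every environment in $\Env(\lift{\F})$ is obtained from some environment $\rho' \in \Env(\F)$, either as $\rho'$ itself --- when $\rho'$ is captured in the closure of a function that is \emph{not} one of the $h_i$ --- or as the restriction $\rho'|_{\dom(\rho')\setminus\{x\}}$ --- when $\rho'$ is captured in the closure of some $h_i$. For the environments captured directly at the top level of $\F$ this is exactly what the two cases of Definition~\ref{dfn:lifted-env} state; for an environment buried inside a nested $\mathcal{F'}$ it follows from the induction hypothesis applied to $\lift{\mathcal{F'}}$, using that $\Env(\mathcal{F'}) \subseteq \Env(\F)$. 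In particular the domain of any member of $\Env(\lift{\F})$ is contained in the domain of the corresponding member of $\Env(\F)$, with $x$ additionally removed in the $h_i$ case.

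Combining the two ingredients then finishes the proof: pick any $\rho \in \Env(\lift{\F})$ and let $\rho' \in \Env(\F)$ be the environment it comes from. If $\rho'$ belongs to a non-$h_i$ closure then $\rho = \rho'$ and $x \notin \dom(\rho')$ by invariant~\ref{case:exclu}, so $x \notin \dom(\rho)$; if $\rho'$ belongs to an $h_i$ closure then $\rho = \rho'|_{\dom(\rho')\setminus\{x\}}$, so $x \notin \dom(\rho)$ by construction. Since $\rho$ was arbitrary, $x$ appears in no environment of $\lift{\F}$, which is the claim.

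I expect the only real difficulty to be the bookkeeping hidden in the inductive step: one must check that the hypothesis it relies on --- ``$x$ lies in the domain of a captured environment of $\mathcal{F'}$ only for closures of the $h_i$'' --- is inherited by every environment of functions $\mathcal{F'}$ captured inside $\F$, so that the induction hypothesis is applicable. This is immediate, since $\Env(\mathcal{F'}) \subseteq \Env(\F)$ for any such $\mathcal{F'}$, so invariant~\ref{case:exclu} for $\F$ already entails the corresponding statement for $\mathcal{F'}$, and the set of inner functions $h_i$ is fixed throughout. Compactness and aliasing-freeness (invariant~\ref{case:share}) play no role in this lemma.
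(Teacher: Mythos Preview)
Your proposal is correct and follows the same approach as the paper's proof: invoke invariant~\ref{case:exclu} to locate $x$ exclusively in the environments of the $h_i$ closures, then observe that Definition~\ref{dfn:lifted-env} removes $x$ precisely there. The paper compresses this into two sentences and leaves the recursion over nested function environments implicit; you spell out that recursion and the accompanying bookkeeping, which is a legitimate elaboration rather than a different argument.
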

\begin{proof}
  Since $x$ is liftable in $(M, \F, \rhot, \rho)$,
  it appears exclusively in the environments of $h_i$.
  By definition, it is removed when building \lift{\F}. \qedhere
\end{proof}

These invariants and definitions lead to a correctness theorem with stronger
hypotheses.
\begin{theorem}[Correctness of lambda-lifting]\label{thm:correction-ll}
If $x$ is a liftable parameter in $(M,\F,\rhot,\rho)$, then
\[\reduction{M}{s}{v}{s'} \text{ implies }
\reductionF{\lift{M}}{s}{v}{s'}{\lift{\F}}\]
\end{theorem}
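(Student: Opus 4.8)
The plan is to prove Theorem~\ref{thm:correction-ll} by induction on the height of the optimised derivation $\reduction{M}{s}{v}{s'}$, doing a case analysis on the last rule applied. The liftability hypotheses (Definition~\ref{dfn:var-liftable}) are precisely the invariants that must be threaded through: in each inductive step I will have to check that the sub-derivations still satisfy a liftability hypothesis for the appropriate sub-term, sub-environment and split environment, and that the lifted reduction $\reductionF{\lift{M}}{s}{v}{s'}{\lift{\F}}$ can be assembled from the lifted sub-reductions. Note that the store $s$ and the result store $s'$ are claimed to be \emph{identical} on both sides; this is plausible precisely because compact closures and lifted environments (Lemma~\ref{lem:Fstarclean}) guarantee that the liftable parameter $x$ never appears in $\lift{\F}$, so lifting introduces no new locations into closures, and because $x$ itself is already a genuine parameter whose location is reused rather than freshly allocated.

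The easy cases are (val), (var), (assign), (seq), (if-true), (if-false): lifting commutes with the term structure ($\lift{a\ ;\ b} = \lift{a}\ ;\ \lift{b}$, etc.), the stores are manipulated identically, and one only needs to check that liftability is preserved when passing to the tail sub-derivation (where $\rhot$ is kept) versus the non-tail sub-derivation (where the split environment becomes $\env{}{\rhot\concat\rho}$, emptying the tail part). Invariant~\ref{case:term} of Definition~\ref{dfn:var-liftable} — that $x\in\dom(\rhot)$ whenever some $h_i$ is called in tail position in $M$ — is exactly what is needed to know that the tail environment still contains $x$ when we recurse into a tail branch, and Invariant~\ref{case:loc} tracks whether the functions defined directly in $M$ are among the $h_i$, which controls whether the (letrec) case lifts the bound function or not.

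The two substantive cases are (letrec) and (call). For (letrec), if the defined function $f$ is one of the $h_i$, then $\lift{M}$ adds $x$ as a new last parameter and $\lift{\F}$ stores the lifted body with $x$ removed from the captured environment; I must check that the compact-closure construction in the optimised (letrec) rule, applied to $\lift{\F}$ and the extra parameter $x$, produces exactly $\lift{\mathcal{F'}}$ as given by Definition~\ref{dfn:lifted-env}, and that liftability transfers to the body sub-derivation with the enlarged $\F$. For (call), there are two sub-cases: calling an $h_i$ (so $\lift{M}$ passes the extra argument $x$, which must be looked up in $\rhot\concat\rho$ — available by Invariants~\ref{case:loc}/\ref{case:term} — yielding the value that the fresh $n{+}1$-st parameter location will receive) versus calling some other function. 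In the former I must match up the fresh location allocated for the lifted $x$-parameter against the location already holding $x$ in the caller, and argue the body reduction goes through with the lifted closure environment; the aliasing-freeness Invariant~\ref{case:share} is what prevents the new location from colliding with an existing binding, and one or more of the rewriting/aliasing lemmas announced for Sections~\ref{sec:rewriting-lemmas}--\ref{sec:aliasing-lemmas} will be invoked here to reconcile the store before and after the body.

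The main obstacle I expect is the (call) case when $f=h_i$: reconciling the two reductions requires showing that evaluating $\lift{b}$ under the lifted, freshly-parametrised environment produces the same store as evaluating $b$ under the original environment, even though the lifted version carries an \emph{extra} location for $x$ while the original carries $x$'s location through the closure. This is where the invariants must be shown to propagate into the recursive call on $b$ (with split environment $\env{\rho''}{\rho'}$ on one side and a correspondingly lifted version on the other), and where the delicate bookkeeping about which copy of $x$'s location is "live" takes place — precisely the point the introduction flags as the reason the naive rules were insufficient and the optimised rules with compact closures and minimal stores were introduced. I would isolate that reconciliation into the auxiliary lemmas rather than inlining it, keeping the main induction a matter of routing the hypotheses correctly.
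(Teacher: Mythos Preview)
Your overall plan matches the paper's: induction on derivation height, routine invariant-checking in the easy cases, and the (call)-on-$h_i$ case handled via auxiliary rewriting lemmas. However, one point in your description of that hard case would not go through as stated. You write that ``the invariants must be shown to propagate into the recursive call on $b$ (with split environment $\env{\rho''}{\rho'}$)'' and that the rewriting lemmas then ``reconcile the store before and after the body''; this suggests you intend to apply the induction hypothesis to the body sub-derivation and fix up the stores afterwards. But liftability \emph{fails} for $(b,\ajout{\mathcal{F'}}{f}{\F\,f},\rho'',\rho')$: Invariant~\ref{case:term} requires $x\in\dom(\rho'')$ whenever some $h_j$ is called in tail position in $b$, yet $x\notin\dom(\rho'')$ since $x$ is not (yet) a parameter of $h_i$. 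So the induction hypothesis is inapplicable as the derivation stands.

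The paper's fix is to observe that $x\in\dom(\rho')$ (by Invariant~\ref{case:exclu}), write $\rho'=(x,l)\concat\rho'''$, and invoke Lemma~\ref{lem:switch-x} to move $(x,l)$ into the tail environment \emph{before} applying the induction hypothesis; this restores liftability for the body. Only after the hypothesis yields the lifted body reduction does one alpha-convert the now non-fresh $l$ to a fresh $l'$ and reinsert $l$ as a spurious location and variable (Lemmas~\ref{lem:rename-loc-opt}--\ref{lem:intro-in-env}) so that the premises of the lifted (call) rule are met. In short, the rewriting is split across the inductive step --- one lemma before, three after --- rather than all occurring afterwards as your sketch implies.
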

Since naive and optimised reductions rules are equivalent
(Theorem~\ref{thm:sem-equiv}, p.~\pageref{thm:sem-equiv}), the proof of
Theorem~\ref{thm:lambda-lifting-correctness}
(p.~\pageref{thm:lambda-lifting-correctness}) is a direct corollary of this
theorem.
\begin{corollary}
If $x$ is a liftable parameter in $M$, then
\[\exists t,
\reductionNF[\varepsilon]{M}{\varepsilon}{v}{t}{\varepsilon} \text{ implies }
\exists t',
\reductionNF[\varepsilon]{\lift{M}}{\varepsilon}{v}{t'}{\varepsilon}.\]
\end{corollary}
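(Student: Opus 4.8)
The plan is to read off the corollary from Theorem~\ref{thm:correction-ll} and Theorem~\ref{thm:sem-equiv}, using that a reduction started from the empty environment and empty store is exactly the degenerate case of the strengthened setting in which $\F$, $\rhot$ and $\rho$ are all empty. First I would check that plain liftability of $x$ in $M$ in the sense of Definition~\ref{dfn:var-liftable-simple} entails liftability of $x$ in $(M,\varepsilon,\varepsilon,\varepsilon)$ in the strong sense of Definition~\ref{dfn:var-liftable}: conditions~\ref{case:def} and~\ref{case:pos}, once the (absent) $\F$ is discarded, are precisely the content of Definition~\ref{dfn:var-liftable-simple}; condition~\ref{case:exclu} is vacuous since $\F=\varepsilon$ carries no closures; condition~\ref{case:share} holds because $\varepsilon$ is vacuously compact and $\Env(\varepsilon)\cup\{\varepsilon,\varepsilon\}$ is trivially aliasing-free; and conditions~\ref{case:loc} and~\ref{case:term} hold because $\dom(\rhot\concat\rho)=\emptyset$ falsifies their left-hand sides, once one observes that none of the $h_i$ is defined in a local position of $M$ nor called in a tail position of $M$ --- being inner functions of $g$, they occur strictly inside the body of $g$, which is never a local or tail position of the whole program.

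The argument is then a four-link chain. From the hypothesis $\exists t,\ \reductionNF[\varepsilon]{M}{\varepsilon}{v}{t}{\varepsilon}$, Theorem~\ref{thm:sem-equiv} yields $\reductionF[\env{\varepsilon}{\varepsilon}]{M}{\varepsilon}{v}{\varepsilon}{\varepsilon}$; Theorem~\ref{thm:correction-ll}, instantiated with $\F=\varepsilon$, $s=s'=\varepsilon$ and the liftability just established, yields $\reductionF[\env{\varepsilon}{\varepsilon}]{\lift{M}}{\varepsilon}{v}{\varepsilon}{\lift{\varepsilon}}$; since by Definition~\ref{dfn:lifted-env} the lifted form of an empty environment is empty, $\lift{\varepsilon}=\varepsilon$, so this is $\reductionF[\env{\varepsilon}{\varepsilon}]{\lift{M}}{\varepsilon}{v}{\varepsilon}{\varepsilon}$; and a second use of Theorem~\ref{thm:sem-equiv}, applied to $\lift{M}$, produces some $t'$ with $\reductionNF[\varepsilon]{\lift{M}}{\varepsilon}{v}{t'}{\varepsilon}$, which is exactly the desired conclusion.

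The only part that requires a genuine, if short, argument is the first step, matching the two liftability predicates; with all three side environments empty, every invariant of Definition~\ref{dfn:var-liftable} that is not already present in Definition~\ref{dfn:var-liftable-simple} is either vacuous or trivially satisfied, so this is not a real obstacle. Everything else is the mechanical composition of Theorems~\ref{thm:sem-equiv} and~\ref{thm:correction-ll}, which is why the result is a direct corollary.
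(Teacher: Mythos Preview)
Your proposal is correct and follows exactly the route the paper takes: the paper's own proof is a single sentence noting that the corollary follows from Theorem~\ref{thm:correction-ll} together with the equivalence of naive and optimised rules (Theorem~\ref{thm:sem-equiv}). You have simply spelled out the details the paper leaves implicit, in particular the verification that simple liftability in the sense of Definition~\ref{dfn:var-liftable-simple} yields extended liftability in $(M,\varepsilon,\varepsilon,\varepsilon)$, which the paper does not bother to write down.
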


\subsubsection{Overview of the proof}
\label{sec:overview}

With the enhanced liftability definition, we have invariants strong enough
to perform a proof by induction of the correctness theorem.
This proof is detailed in Section~\ref{sec:proof-correctness}.

The proof is not by structural induction but by induction on the height
of the derivation.  This is necessary because, even with the stronger
invariants, we cannot apply the induction hypotheses directly to the
premises in the case of the (call) rule: we have to change the stores
and environments, which means rewriting the whole derivation tree,
before using the induction hypotheses.

To deal with this most difficult case, we distinguish between calling one
of the lifted functions ($f = h_i$) and calling another function (either
$g$, where $x$ is defined, or any other function outside of $g$).  Only the
former requires rewriting; the latter follows directly from the induction
hypotheses.

In the (call) rule with $f = h_i$, issues arise when reducing the body
$b$ of the lifted function.  During this reduction, indeed, the store
contains a new location $l'$ bound by the environment to the lifted
variable $x$, but also contains the location $l$ which
contains the original value of $x$.  Our goal is to show that the
reduction of $b$ implies the reduction of $\lift{b}$, with store and
environments fulfilling the constraints of the (call) rule.

To obtain the reduction of the lifted body $\lift{b}$, we modify the
reduction of $b$ in a series of steps, using several lemmas:
\begin{itemize}
    \item the location $l$ of the free variable $x$ is moved to the tail
        environment (Lemma~\ref{lem:switch-x});
    \item the resulting reduction meets the induction hypotheses, which
        we apply to obtain the reduction of the lifted body $\lift{b}$;
    \item however, this reduction does not meet the constraints of the
        optimised reduction rules because the location $l$ is not fresh:
        we rename it to a fresh location $l'$ to hold the lifted
        variable (Lemma~\ref{lem:rename-loc-opt});
    \item finally, since we renamed $l$ to $l'$, we need to reintroduce
        a location $l$ to hold the original value of $x$
        (Lemmas~\ref{lem:intro-in-store} and~\ref{lem:intro-in-env}).
\end{itemize}
The rewriting lemmas used in the (call) case are shown in
Section~\ref{sec:rewriting-lemmas}.

For every other case, the proof consists in checking thoroughly that the
induction hypotheses apply, in particular that $x$ is liftable in the
premises.  These verifications consist in checking Invariants~\ref{case:loc}
to~\ref{case:share} of the extended liftability definition
(Definition~\ref{dfn:var-liftable}) --- Invariants~\ref{case:def}
and~\ref{case:pos} are obvious enough not to be detailed.  To keep the main
proof as compact as possible, the most difficult cases of liftability, related
to aliasing, are proven in some preliminary lemmas
(Section~\ref{sec:aliasing-lemmas}).

One last issue arises during the induction when one of the premises does
not contain the lifted variable $x$.  In that case, the invariants do
not hold, since they assume the presence of $x$.  But it turns out that
in this very case, the lifting function is the identity (since there is
no variable to lift) and lambda-lifting is trivially correct.

\subsubsection{Rewriting lemmas} \label{sec:rewriting-lemmas}
Calling a lifted function has an impact on the resulting store: new
locations are introduced for the lifted parameters and the earlier
locations, which are not modified anymore, are hidden.  Because of these
changes, the induction hypotheses do not apply directly in the case of
the (call) rule for a lifted function $h_i$. We use the following four
lemmas to obtain, through several rewriting steps, a reduction of lifted
terms meeting the induction hypotheses.

\begin{itemize}
    \item Lemma~\ref{lem:switch-x} shows that moving a variable from the non-tail
environment $\rho$ to the tail environment $\rhot$ does not change the
result, but restricts the domain of the store.  It is used transform the
original free variable $x$ (in the non-tail environment) to its lifted
copy (which is a parameter of $h_i$, hence in the tail environment).
\item Lemma~\ref{lem:rename-loc-opt} handles alpha-conversion in stores
    and is used when choosing a fresh location.
\item Lemmas~\ref{lem:intro-in-store} and~\ref{lem:intro-in-env} finally
    add into the store and the environment a fresh location, bound to an
    arbitrary value.  It is used to reintroduce the location containing
    the original value of $x$, after it has been alpha-converted to
    $l'$.
\end{itemize}

\begin{lemma}[Switching to tail environment]\label{lem:switch-x}
  If $\reduction[\env{\rhot}{(x,l)\concat\rho}]{M}{s}{v}{s'}$ and $x \notin
  \dom(\rhot)$ then
  $\reduction[\env{\rhot\concat(x,l)}{\rho}]{M}{s}{v}{s'|_{\dom(s')\setminus\{l\}}}$.
  Moreover, both derivations have the same height.
\end{lemma}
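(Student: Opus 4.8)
The plan is to proceed by induction on the structure of the derivation of $\reduction[\env{\rhot}{(x,l)\concat\rho}]{M}{s}{v}{s'}$, treating the function environment as a schematic parameter so that the induction hypothesis stays available inside the (letrec) case (as in the proof of Lemma~\ref{lem:intro-in-env2}). Two elementary observations carry the whole argument. First, concatenation of environments is associative as a partial function, so the combined environment $\rhot\concat(x,l)\concat\rho$ used by every rule for variable lookup is literally the same function whether $(x,l)$ is counted in the tail part or in the non-tail part; hence every lookup side condition, and every looked-up location and value, is untouched by the switch. Second, since $x\notin\dom(\rhot)$ we have $\Image(\rhot\concat(x,l))=\Image(\rhot)\cup\{l\}$, so for every store $t$
\[\gc{(\rhot\concat(x,l))}{t}=(\gc{\rhot}{t})|_{\dom(\gc{\rhot}{t})\setminus\{l\}}.\]
This identity is exactly what turns the cleaning $\gc{\rhot}{\cdot}$ performed in the (val), (var), (assign) and (call) conclusions into ``the old result store with $l$ removed'', which is the shape the statement demands.

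Granting these, each rule is dispatched by classifying its premises. The premises carrying an empty tail environment --- the argument of (assign), the first component of (seq), the guard of (if-t.)/(if-f.), and every argument reduction $a_i$ of (call) --- have outer environment $\env{}{\rhot\concat(x,l)\concat\rho}$, which equals $\env{}{(\rhot\concat(x,l))\concat\rho}$ by associativity, so the very same subderivations are reused unchanged, with no appeal to the induction hypothesis. The body premise of (call) runs under a freshly built split environment $\env{\rho''}{\rho'}$ that mentions none of $\rhot$, $\rho$, $x$, $l$, and whose freshness side condition constrains only $\dom(s_{n+1})$ and the closures; it, and the closure retrieved for $f$, are therefore reused verbatim. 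Only the premises that still carry the split environment $\env{\rhot}{(x,l)\concat\rho}$ --- the continuation of (seq), the taken branch of (if), and the body of (letrec) --- invoke the induction hypothesis (legitimate since $x\notin\dom(\rhot)$), yielding the corresponding reduction under $\env{\rhot\concat(x,l)}{\rho}$ whose result store is the old one with $l$ deleted, at the same height. Reassembling: in (val)/(var)/(assign)/(call) the new conclusion store is $\gc{(\rhot\concat(x,l))}{\cdot}$, which by the displayed identity is precisely the old result store minus $l$; in (seq)/(if)/(letrec) the conclusion store is copied from a subderivation already trimmed by the induction hypothesis, hence again the old result store minus $l$. In the (letrec) case one additionally notes in passing that the captured environment $\rho'=(\rhot\concat(x,l)\concat\rho)|_{\dom(\rhot\concat(x,l)\concat\rho)\setminus\{\range{x}{1}{n}\}}$, and therefore the closure placed into $\mathcal{F'}$, is unchanged, once more by associativity. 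Since every reused subderivation is unchanged and every subderivation produced by the induction hypothesis has the same height, the constructed derivation has the same height as the original.

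I do not expect a genuine obstacle here: the lemma is, in essence, a relabelling of which variables count as ``tail'', glued together by the cleaning identity above. The only place requiring a moment's attention is the bookkeeping in (call) and (letrec) --- checking that the closure captured in (letrec), the closure retrieved in (call), and the freshness requirement on the new locations $l_i$ are all insensitive to the switch --- and this is immediate because none of those is defined in terms of the tail/non-tail split: each depends only on the combined environment and on data (stores and closures) that the transformation leaves alone.
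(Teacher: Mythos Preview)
Your proposal is correct and follows essentially the same approach as the paper's proof: induction on the derivation, reusing unchanged those premises that carry the empty tail environment (or, in (call), the body premise under the fresh $\env{\rho''}{\rho'}$), applying the induction hypothesis to the premises that retain the split $\env{\rhot}{(x,l)\concat\rho}$, and invoking the cleaning identity $\gc{(\rhot\concat(x,l))}{t}=(\gc{\rhot}{t})|_{\dom(\gc{\rhot}{t})\setminus\{l\}}$ in the (val), (var), (assign) and (call) conclusions. Your write-up is in fact more explicit than the paper's (you spell out why the captured environment in (letrec) and the freshness condition in (call) are insensitive to the switch), but the argument is the same.
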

\begin{proof}
  By induction on the structure of the derivation.
  For the (val), (var), (assign) and (call) cases, we use the fact that
  $\gc{\rhot\concat(x,l)}{s} = s'|_{\dom(s')\setminus\{l\}}$ when
  $s' = \gc{\rhot}{s}$.

\paragraph{(val)}
      $\reduction[\env{\rhot\concat(x,l)}{\rho}]{v}{s}{v}{\gc{\rhot\concat(x,l)}{s}}$ and
      $\gc{\rhot\concat(x,l)}{s} = s'|_{\dom(s')\setminus\{l\}}$ with
      $s' = \gc{\rhot}{s}$.

\paragraph{(var)}
      $\reduction[\env{\rhot\concat(x,l)}{\rho}]{y}{s}{s\ l'}{\gc{\rhot\concat(x,l)}{s}}$ and
      $\gc{\rhot\concat(x,l)}{s} = s'|_{\dom(s')\setminus\{l\}}$, with
      $l' = \rhot\concat(x,l)\concat\rho\ y$ and $s' = \gc{\rhot}{s}$.

\paragraph{(assign)}
    By hypothesis,
      $\reduction[\env{}{\rhot\concat(x,l)\concat\rho}]{a}{s}{v}{s'}$
    hence
      $\reduction[\env{\rhot\concat(x,l)}{\rho}]{y \coloneqq a}{s}{\unit}{\gc{\rhot\concat(x,l)}{\subst{s'}{l'}{v}}}$
      and $\gc{\rhot\concat(x,l)}{\subst{s'}{l'}{v}} = s'|_{\dom(s')\setminus\{l\}}$
      with $l' = \rhot\concat(x,l)\concat\rho\ y$ and $s' = \gc{\rhot}{\subst{s'}{l'}{v}}$.

\paragraph{(seq)}
    By hypothesis,
     $\reduction[\env{}{\rhot\concat(x,l)\concat\rho}]{a}{s}{v}{s'}$
     and, by the induction hypotheses,
     $\reduction[\env{\rhot\concat(x,l)}{\rho}]{b}{s'}{v}{s''|_{\dom(s'')\setminus\{l\}}}$
     hence
     \[\reduction[\env{\rhot\concat(x,l)}{\rho}]{a\ ;\ b}{s}{v}{s''|_{\dom(s'')\setminus\{l\}}}.\]

\paragraph{(if-true) and (if-false)} are proved similarly to (seq).

\paragraph{(letrec)}
     By the induction hypotheses,
     \[\reductionF[\env{\rhot\concat(x,l)}{\rho}]{b}{s}{v}{s'|_{\dom(s')\setminus\{l\}}}{\mathcal{F'}}\]
      hence
     \[\reduction[\env{\rhot\concat(x,l)}{\rho}]{\letrec{f(\range{x}{1}{n})}{a}{b}}{s}{v}{s'|_{\dom(s')\setminus\{l\}}}\]

\paragraph{(call)}
      The hypotheses do not change, and the conclusion becomes: 
      \[\reduction[\env{\rhot\concat(x,l)}{\rho}]{f(\range{a}{1}{n})}{s_{1}}{v}{\gc{\rhot\concat(x,l)}{s'}}\]
      as expected, since
      $\gc{\rhot\concat(x,l)}{s'} = s''|_{\dom(s'')\setminus\{l\}}$
      with
      $s'' = \gc{\rhot}{s'}$
  \qedhere
\end{proof}
\begin{lemma}[Alpha-conversion]\label{lem:rename-loc-opt}
  If \reduction{M}{s}{v}{s'}
  then, for all $l$, for all $l'$ appearing neither in $s$ nor in $\F$
  nor in $\rho\concat\rhot$,
  \[\reductionF[\env{\rhot[l'/l]}{\rho[l'/l]}]{M}{s[l'/l]}{v}{s'[l'/l]}{\F[l'/l]}\]
  Moreover, both derivations have the same height.
\end{lemma}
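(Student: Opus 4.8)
The plan is to mirror almost exactly the proof of the analogous intermediate-rules lemma (Lemma~\ref{lem:rename-loc}, p.~\pageref{lem:rename-loc}), since the optimised rules differ from the intermediate ones only in the (letrec) case (where the captured environment $\rho'$ is restricted to remove the parameters $\range{x}{1}{n}$) and in the equivalences already established. I would therefore proceed \emph{by induction on the height of the derivation} rather than on its structure, because the (call) case forces us to invoke the induction hypothesis twice: once to alpha-convert away a clashing fresh location $l_j$ and a second time to perform the intended renaming of $l$ into $l'$. As in Lemma~\ref{lem:rename-loc}, I would first record the two elementary commutation facts that, whenever $l'$ appears neither in $\rho$, $\rhot$ nor $s$, we have $(\gc{\rho}{s})[l'/l] = \gc{(\rho[l'/l])}{(s[l'/l])}$ and $(\rhot\concat\rho)[l'/l] = \rhot[l'/l]\concat\rho[l'/l]$, together with the domain fact that an optimised reduction $\reduction{M}{s}{v}{s'}$ satisfies $\dom(s') = \dom(s)\setminus\Image(\rhot)$ (immediate by induction), so in particular $\rhot = \varepsilon$ forces $\dom(s') = \dom(s)$.

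For the easy cases the argument is routine and identical in spirit to Lemma~\ref{lem:rename-loc}: in (val), (var) and (assign) the renaming commutes with the cleaning operator $\gc{\rhot}{\cdot}$ and with the store update $\subst{\cdot}{l}{v}$ by the commutation facts above and by Property~\ref{prop:ext-order}; in (seq), (if-true) and (if-false) one uses the domain fact to see that $l'$ is still fresh for the intermediary store $s'$ before applying the induction hypothesis to the second premise; and in (letrec) one observes that since $l'$ appears in neither $\rho'$ nor $\F$, it does not appear in $\mathcal{F'}$, so the induction hypothesis applies to the body $b$ and the renamed closure $\fun{\range{x}{1}{n}}{a}{\rho'[l'/l],\F[l'/l]}$ still equals what the optimised (letrec) rule prescribes — here the only difference from the intermediate proof is that $\rho'$ is already the compacted environment, but the restriction $\rho|_{\dom(\rho)\setminus\{\range{x}{1}{n}\}}$ commutes with location renaming, so nothing changes. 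Throughout, the claim that both derivations have the same height is transparent, since each rule application is replayed once.

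The main obstacle, as in Lemma~\ref{lem:rename-loc}, is the (call) case. The difficulty is that the target location $l'$ — chosen fresh for $s$, $\F$ and $\rho\concat\rhot$ — may nonetheless coincide with one of the freshly chosen parameter locations $l_j$ appearing in $\rho''$ for the called function's body, so a naive renaming would destroy the freshness side-condition of the optimised (call) rule. The plan is to handle this exactly as before: first apply the induction hypothesis to each $a_i$ (using $\dom(s_i)=\dom(s_{i+1})$ from the domain fact to keep $l'$ fresh at each step); then, noting that $l'$ cannot appear in $\mathcal{F'}$, $\rho'$ or $\ajout{\mathcal{F'}}{f}{\F\,f}$, use a \emph{first} application of the induction hypothesis to rename the offending $l_j$ to some fresh $l_j'\neq l'$ that avoids $s_{n+1}$, $\ajout{\mathcal{F'}}{f}{\F\,f}$ and $\rho''\concat\rho$ (so $l_j'$ remains fresh in the sense of the (call) rule); then apply the induction hypothesis a \emph{second} time to rename $l$ to $l'$ in the body reduction, using $(\ajout{s_{n+1}}{l_i}{v_i})[l'/l] = \ajout{s_{n+1}[l'/l]}{l_i}{v_i}$ (since the $l_i$ are untouched), the closure-renaming identities $\F[l'/l]\,f = \fun{\range{x}{1}{n}}{b}{\rho'[l'/l],\mathcal{F'}[l'/l]}$ and $(\ajout{\mathcal{F'}}{f}{\F\,f})[l'/l] = \ajout{\mathcal{F'}[l'/l]}{f}{\F[l'/l]\,f}$, and $\rho''[l'/l] = \rho''$; finally, the commutation of renaming with $\gc{\rhot}{\cdot}$ gives the cleaned result store $\gc{\rhot[l'/l]}{s'[l'/l]}$, and reassembling the (call) rule yields the conclusion with the derivation height preserved (each of the two renaming passes preserves height, and the double application does not increase it because we induct on height, not structure).
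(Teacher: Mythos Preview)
Your proposal is correct and takes essentially the same approach as the paper: the paper's own proof is simply ``See Lemma~\ref{lem:rename-loc}'', and your plan is precisely a detailed replay of that lemma's argument adapted to the optimised rules, with the one nontrivial observation (that the restricted environment $\rho'$ in the optimised (letrec) rule still commutes with location renaming) correctly identified.
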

\begin{proof}
    See Lemma~\ref{lem:rename-loc}, p.~\ref{lem:rename-loc}.
\qedhere
\end{proof}
\begin{lemma}[Spurious location in store]\label{lem:intro-in-store}
  If $\reduction{M}{s}{v}{s'}$ and $k$ does not appear in either $s$, $\F$
  or $\rhot\concat\rho$,
  then, for all value $u$,
  $\reduction{M}{\ajout{s}{k}{u}}{v}{\ajout{s'}{k}{u}}$.
  Moreover, both derivations have the same height.
\end{lemma}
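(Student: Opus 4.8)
The plan is to prove Lemma~\ref{lem:intro-in-store} by induction on the height of the derivation \reduction{M}{s}{v}{s'}, mirroring the structure used in Lemma~\ref{lem:rename-loc} and Lemma~\ref{lem:extend-store}. The key observation is that adding a fresh binding $(k,u)$ to the store never interferes with any lookup, assignment, cleaning, or freshness condition used in the reduction, precisely because $k$ appears neither in $s$, nor in $\F$, nor in $\rhot\concat\rho$ --- and, as one verifies along the way, this disjointness is preserved in every premise. The two elementary facts I would isolate first are: (i) cleaning commutes with the new binding, $\gc{\rhot}{(\ajout{s}{k}{u})} = \ajout{\gc{\rhot}{s}}{k}{u}$, since $k\notin\Image(\rhot)$; and (ii) updating commutes with it, $\subst{(\ajout{s}{k}{u})}{l}{v} = \ajout{(\subst{s}{l}{v})}{k}{u}$ whenever $l\neq k$, which holds because $l\in\Image(\rhot\concat\rho)$ while $k$ does not appear there.

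With these in hand, the leaf and structural cases are routine. For (val), (var), (assign) we apply facts (i) and (ii) directly; for (var) we additionally note that $(\ajout{s}{k}{u})\,l = s\,l$ since $l\in\dom(s)$ and $l\neq k$. For (seq), (if-true), (if-false), (letrec) we simply thread the induction hypotheses through, checking in the (letrec) case that $k$ still does not appear in $\mathcal{F'} = \ajout{\F}{f}{\fun{\range{x}{1}{n}}{a}{\rho',\F}}$, which follows because $\rho' = \rhot\concat\rho|_{\dom(\rhot\concat\rho)\setminus\{\range{x}{1}{n}\}}$ is a restriction of $\rhot\concat\rho$ and $k$ appears in neither $\rhot\concat\rho$ nor $\F$; and that $k$ does not appear in the split environment $\env{}{\rhot\concat\rho}$ used for the premise $a$ in (seq)/(if), which is immediate. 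In each case the height of the derivation is manifestly unchanged.

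The main obstacle, as in the analogous earlier lemmas, is the (call) case, because of the freshly chosen locations $l_i$ for the parameters of the called function: the spurious location $k$ we are inserting might happen to coincide with one of these $l_i$ (there is no a priori constraint forcing the reduction to have avoided $k$ when it picked fresh locations). When that clash occurs, $k$ would appear in $\rho''= \drange{x}{l}{1}{n}$ and in the extended store $\ajout{s_{n+1}}{l_i}{v_i}$, so the induction hypothesis for the body $b$ would not apply with the binding $(k,u)$. The remedy is the one already used in Lemma~\ref{lem:rename-loc}: first invoke the alpha-conversion lemma (Lemma~\ref{lem:rename-loc-opt}) to rename the offending $l_j$ to some fresh $l'_j$ chosen to differ from $k$ and to appear nowhere relevant, so that after this renaming $k$ is disjoint from all the parameter locations; then apply the induction hypothesis to the (rewritten) reduction of $b$ to insert $(k,u)$; and finally reassemble the (call) rule, using fact (i) to push the new binding past the final $\gc{\rhot}{\cdot}$. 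For the premises $\forall i,\reduction[\env{}{\rhot\concat\rho}]{a_i}{s_i}{v_i}{s_{i+1}}$ one applies the induction hypothesis directly, noting $k\notin\dom(s_i)$ for all $i$ (since $\dom(s_{i})$ only shrinks or stays equal and $k\notin\dom(s_1)=\dom(s)$). It is precisely because of this double use of the induction hypothesis in the (call) case that the induction is on the height of the derivation rather than on its structure.
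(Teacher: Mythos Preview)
Your proposal is correct and follows essentially the same approach as the paper's proof: induction on the height of the derivation, with the (call) case handled by first alpha-converting (via Lemma~\ref{lem:rename-loc-opt}) any parameter location $l_j$ that happens to equal $k$, and then applying the induction hypothesis to the rewritten premise for the body. Your explicit isolation of the two commutation facts (cleaning and updating past the new binding) makes explicit what the paper uses implicitly; the only minor imprecision is your phrase ``double use of the induction hypothesis'' --- the reason height induction is needed here is not a double application of the IH but that the IH is applied to the \emph{alpha-converted} sub-derivation of $b$, which has the same height as the original premise but is not a structural sub-derivation.
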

\begin{proof}
  By induction on the height of the derivation.  
  The key idea is to add $(k,u)$ to every store in the derivation tree.
  A collision might occur in the (call) rule, if there is some  $j$ such
  that $l_j = k$.  In that case, we need to rename $l_j$ to some fresh
  variable $l'_j \neq k$ (by alpha-conversion) before applying the
  induction hypotheses.

\paragraph{(call)} By the induction hypotheses,
     \[\forall i,
     \reduction[\env{}{\rhot\concat\rho}]{a_i}{\ajout{s_i}{k}{u}}{v_i}{\ajout{s_{i+1}}{k}{u}}\]
     Because $k$ does not appear in $\F$,
     \[k \notin \Loc(\ajout{\mathcal{F'}}{f}{\F\,f}) \subset \Loc(\F)\]
     For the same reason, it does not appear in $\rho'$.
     On the other hand, there might be a $j$ such that $l_j = k$,
     so $k$ might appear in $\rho''$.
     In that case,  we rename $l_j$ in some fresh $l'_j \neq k$,
     appearing in neither $s_{n+1}$, nor $\mathcal{F'}$ or $\rho''\concat\rho'$
     (Lemma~\ref{lem:rename-loc-opt}).
     After this alpha-conversion, $k$ does not appear in either
     $\rho''\concat\rho'$, $\ajout{\mathcal{F'}}{f}{\F\,f}$, or
     $\ajout{s_{n+1}}{l_i}{v_i}$.
     By the induction hypotheses,
     \[\reductionF[\env{\rho''}{\rho'}]{b}{\ajout{\ajout{s_{n+1}}{l_i}{v_i}}{k}{u}}{v}%
     {\ajout{s'}{k}{u}}{\ajout{\mathcal{F'}}{f}{\F\,f}}\]
     Moreover,
     $\gc{\rhot}{\ajout{s'}{k}{u}} = \ajout{\gc{\rhot}{s'}}{k}{u}$
     (since $k$ does not appear in $\rhot$).
     Hence
     \[\reduction{f(\range{a}{1}{n})}{\ajout{s_{1}}{k}{u}}{v}{\gc{\rhot}{\ajout{s'}{k}{u}}}. \]

\paragraph{(val)}
      $\reduction{v}{\ajout{s}{k}{u}}{v}{\gc{\rhot}{\ajout{s}{k}{u}}}$
      and
      $\gc{\rhot}{\ajout{s}{k}{u}} = \ajout{\gc{\rhot}{s}}{k}{u}$
      since $k$ does not appear in $\rhot$.

\paragraph{(var)} $\reduction{x}{\ajout{s}{k}{u}}{(\ajout{s}{k}{u})\
      l}{\gc{\rhot}{\ajout{s}{k}{u}}}$,
      with
      $\gc{\rhot}{\ajout{s}{k}{u}} = \ajout{\gc{\rhot}{s}}{k}{u}$
      since $k$ does not appear in $\rhot$,
      and
      $(\ajout{s}{k}{u})\ l = s\ l$ 
      since
      $k \neq l$ ($k$ does not appear in $s$).

\paragraph{(assign)} By the induction hypotheses,
      $\reduction[\env{}{\rhot\concat\rho}]{a}{\ajout{s}{k}{u}}{v}{\ajout{s'}{k}{u}}$.
      And $k \neq l$ (since $k$ does not appear in $s$)
      then
      $\subst{\ajout{s'}{k}{u}}{l}{v} = \ajout{\subst{s'}{l}{v}}{k}{u}$.
      Moreover, $k$ does not appear in $\rhot$
      then
      $\gc{\rhot}{\ajout{\subst{s'}{l}{v}}{k}{u}} = \ajout{\gc{\rhot}{\subst{s'}{l}{v}}}{k}{u}$.
      Hence
      \[\reduction{x \coloneqq a}{\ajout{s}{k}{u}}{\unit}{\ajout{\gc{\rhot}{\subst{s'}{l}{v}}}{k}{u}}\]

\paragraph{(seq)} By the induction hypotheses,
      \[\reduction[\env{}{\rhot\concat\rho}]{a}{\ajout{s}{k}{u}}{\true}{\ajout{s'}{k}{u}}\]
      \[\reduction{b}{\ajout{s'}{k}{u}}{v'}{\ajout{s''}{k}{u}}\]
      Hence
      \[\reduction{a\ ;\ b}{\ajout{s}{k}{u}}{v'}{\ajout{s''}{k}{u}}\]

\paragraph{(if-true) and (if-false)} are proved similarly to (seq).

\paragraph{(letrec)}
      The location $k$ does not appear in $\mathcal{F'}$,
      because it does not appear in either $\F$ or $\rho'\subset\rhot\concat\rho$
      ($\mathcal{F'}=\ajout{\F}{f}{\fun{\range{x}{1}{n}}{a}{\rho',\F}}$).
      Then, by the induction hypotheses,
      \[\reductionF{b}{\ajout{s}{k}{u}}{v}{\ajout{s'}{k}{u}}{\mathcal{F'}}\]
      Hence
      \[\reduction{\letrec{f(\range{x}{1}{n})}{a}{b}}{\ajout{s}{k}{u}}{v}{\ajout{s'}{k}{u}}.
      \qedhere
  \]
\end{proof}
\begin{lemma}[Spurious variable in environments]\label{lem:intro-in-env}
  \begin{align*}
  \forall l,l', \reduction[\env{\rhot\concat(x,l)}{\rho}]{M}{s}{v}{s'} \ssi&
  \reduction[\env{\rhot\concat(x,l)}{(x,l')\concat\rho}]{M}{s}{v}{s'}
  \end{align*}
  Moreover, both derivations have the same height.
\end{lemma}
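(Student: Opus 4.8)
The plan is to observe that this statement is, word for word, the second equivalence already established in Lemma~\ref{lem:intro-in-env2} (Hidden variables elimination): there the very same claim --- with the split environment $\env{\rhot\concat(x,l)}{\rho}$ versus $\env{\rhot\concat(x,l)}{(x,l')\concat\rho}$, the same stores, the same value, and equal derivation heights --- was proved simultaneously for the intermediate \emph{and} the optimised reduction rules. So the quickest route is simply to invoke that lemma; nothing new is needed.

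For completeness I would nonetheless re-run the short induction on the structure of the derivation, exactly as in the proof of Lemma~\ref{lem:intro-in-env2}. The single fact driving every case is that, viewed as partial functions, $\rhot\concat(x,l)\concat\rho$ and $\rhot\concat(x,l)\concat(x,l')\concat\rho$ coincide: the leftmost binding of $x$ (namely $(x,l)$, sitting in the tail component) shadows the spurious $(x,l')$ inserted at the head of the non-tail component, and the two environments have the same domain. Hence every variable lookup --- in (var), in (assign), and in the guards of (if) --- returns the same location, and every cleaning step, which here only ever removes $\Image(\rhot\concat(x,l))$ from the store, is untouched by the extra binding.

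The cases then go through mechanically. For (val), (var), (assign) the conclusion is literally identical on both sides once the observation above is made. For (seq), (if-true), (if-false), (letrec) the non-tail subterm is reduced under the merged environment $\env{}{\rhot\concat(x,l)\concat(x,l')\concat\rho}$, equal as a function to $\env{}{\rhot\concat(x,l)\concat\rho}$, while the tail subterm is reduced under $\env{\rhot\concat(x,l)}{(x,l')\concat\rho}$, to which the induction hypothesis applies directly; in (letrec) one also notes that the captured environment $\rho' = (\rhot\concat(x,l)\concat\rho)|_{\dom(\cdot)\setminus\{\range{x}{1}{n}\}}$ built by the rule is the same with or without $(x,l')$, since restricting a partial function and reading off its domain depend only on that function and its domain. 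In (call), the argument reductions again use the merged environment, whereas the body is reduced under the freshly constructed split environment $\env{\rho''}{\rho'}$, which mentions neither $\rho$, $\rhot$, $l$ nor $l'$, so no rewriting is required there. I expect no genuine obstacle: the only point needing a moment's care is the bookkeeping of where $(x,l')$ is inserted --- it must land in the non-tail component, so that it is shadowed and so that it is not among the variables reclaimed by $\gc{\rhot\concat(x,l)}{\cdot}$ --- and that is precisely what the statement stipulates; the induction visibly preserves the height of the derivation.
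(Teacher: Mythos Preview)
Your proposal is correct and matches the paper's own proof exactly: the paper simply writes ``See Lemma~\ref{lem:intro-in-env2}'' since that lemma already establishes the equivalence for the optimised rules as well. Your additional re-run of the induction is accurate bonus detail, including the observation about the restricted $\rho'$ in the optimised (letrec) rule.
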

\begin{proof}
    See Lemma~\ref{lem:intro-in-env2}, p.~\ref{lem:intro-in-env2}.
       \qedhere
\end{proof}

\subsubsection{Aliasing lemmas} \label{sec:aliasing-lemmas}
We need three lemmas to show that environments remain aliasing-free
during the proof by induction in Section~\ref{sec:proof-correctness}.
The first lemma states that concatenating two environments in an
aliasing-free set yields an aliasing-free set.  The other two prove that
the aliasing invariant (Invariant~\ref{case:share},
Definition~\ref{dfn:var-liftable}) holds in the context of the (call)
and (letrec) rules, respectively.

\begin{lemma}[Concatenation]\label{lem:alias-concat}
    If $\mathcal{E}\cup\{\rho,\rho'\}$ is aliasing-free then
    $\mathcal{E}\cup\{\rho\concat\rho'\}$ is aliasing-free.
\end{lemma}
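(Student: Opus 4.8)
The plan is to unfold the definition of aliasing-freeness for the set $\mathcal{E}\cup\{\rho\concat\rho'\}$ and reduce every case to the hypothesis that $\mathcal{E}\cup\{\rho,\rho'\}$ is aliasing-free. Recall that $\dom(\rho\concat\rho') = \dom(\rho)\cup\dom(\rho')$, and for $z$ in this domain, $(\rho\concat\rho')\,z = \rho\,z$ if $z\in\dom(\rho)$ and $(\rho\concat\rho')\,z = \rho'\,z$ otherwise (since, by the definition of $\concat$ as iterated $\ajout{\cdot}{\cdot}{\cdot}$, the leftmost binding wins). So a location appearing in $\rho\concat\rho'$ is either a location of $\rho$ or a location of $\rho'$, carried by the same variable name in each case.

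First I would take $\sigma,\tau \in \mathcal{E}\cup\{\rho\concat\rho'\}$, $a\in\dom(\sigma)$, $b\in\dom(\tau)$ with $\sigma\,a = \tau\,b$, and show $a = b$ by case analysis on whether $\sigma$ (resp. $\tau$) is $\rho\concat\rho'$ or an element of $\mathcal{E}$. If neither is $\rho\concat\rho'$, this is immediate from the hypothesis. If, say, $\sigma = \rho\concat\rho'$, I would split on whether $a\in\dom(\rho)$ or $a\in\dom(\rho')\setminus\dom(\rho)$: in the first subcase $\sigma\,a = \rho\,a$, so applying the hypothesis to $\rho$ and $\tau$ (both in $\mathcal{E}\cup\{\rho,\rho'\}$) gives $a=b$; in the second subcase $\sigma\,a = \rho'\,a$ and the hypothesis applied to $\rho'$ and $\tau$ gives $a=b$. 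The case where both $\sigma$ and $\tau$ equal $\rho\concat\rho'$ is handled the same way, splitting both $a$ and $b$; whichever of $\rho,\rho'$ realises each location, the hypothesis on that pair of environments yields $a=b$. (Note that when $a\in\dom(\rho)\cap\dom(\rho')$, we may freely use either representative; consistency is not an issue since we only need $a=b$.)

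I do not expect a genuine obstacle here — the statement is essentially bookkeeping. The one point requiring a little care is making precise that every location occurring in $\rho\concat\rho'$ is witnessed, under the \emph{same} variable name, by one of $\rho$ or $\rho'$; once that observation is in place, each case of the definition collapses directly onto an instance of the hypothesis for $\mathcal{E}\cup\{\rho,\rho'\}$, and there is nothing left to do.
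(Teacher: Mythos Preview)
Your proposal is correct and follows essentially the same approach as the paper: an exhaustive case analysis on whether each of the two environments under consideration is $\rho\concat\rho'$ or belongs to $\mathcal{E}$, reducing every instance to the hypothesis via the observation that $(\rho\concat\rho')\,z$ equals either $\rho\,z$ or $\rho'\,z$. If anything, your write-up is more careful than the paper's, which dispatches the case split in a single sentence.
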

\begin{proof}
  By exhaustive check of cases.
  We want to prove
  \begin{align*}
  \forall \rho_1,\rho_2 \in \mathcal{E}\cup\{\rho\concat\rho'\},
  \forall x \in \dom(\rho_1),
  \forall y \in \dom(\rho_2),\
  \rho_1\ x = \rho_2\ y \Rightarrow x = y.\\
  \intertext{given that}
  \forall \rho_1,\rho_2 \in \mathcal{E}\cup\{\rho,\rho'\},
  \forall x \in \dom(\rho_1),
  \forall y \in \dom(\rho_2),\
  \rho_1\ x = \rho_2\ y \Rightarrow x = y.
  \end{align*}
  If $\rho_1\in \mathcal{E}$ and $\rho_2\in \mathcal{E}$, immediate.
  If $\rho_1\in \{\rho\concat\rho'\}$,
  $\rho_1\ x = \rho\ x\ \text{or}\ \rho'\ x$. This is the same for
  $\rho_2$. Then $\rho_1\ x = \rho_2\ y$ is equivalent to
   $\rho\ x = \rho'\ y$ (or some other combination, depending on $x$, $y$,
   $\rho_1$ and $\rho_2$) which leads to the expected result.
  \qedhere
\end{proof}

\begin{lemma}[Aliasing in (call) rule]\label{lem:aliasing-call}
Assume that, in a (call) rule,
\begin{itemize}
    \item $\F\,f = \fun{\range{x}{1}{n}}{b}{\rho',\mathcal{F'}}$,
    \item $\Env(\F)$ is aliasing-free, and
    \item $\rho''= \drange{x}{l}{1}{n}$, with fresh and
    distinct locations $l_{i}$.
\end{itemize}
Then
$\Env(\ajout{\mathcal{F'}}{f}{\F\,f})\cup\{\rho',\rho''\}$ is also aliasing-free.
\end{lemma}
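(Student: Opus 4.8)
The plan is to reduce the statement to the aliasing-freeness of $\Env(\F)$ together with two elementary facts about the locations $l_i$: they are pairwise distinct, and, being fresh, they do not appear in $\F$. The observation that makes this work is that the only environment in the conclusion not already accounted for in $\Env(\F)$ is $\rho''$, whose image consists of fresh locations, so $\rho''$ cannot alias with anything.

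First I would establish the inclusion $\Env(\ajout{\mathcal{F'}}{f}{\F\,f})\cup\{\rho'\}\subseteq\Env(\F)$ by unfolding the definition of $\Env$. Every closure in $\Image(\ajout{\mathcal{F'}}{f}{\F\,f})$ is either $\F\,f=\fun{\range{x}{1}{n}}{b}{\rho',\mathcal{F'}}$ itself, or a closure of $\mathcal{F'}$ carrying a name other than $f$. Since $\F\,f\in\Image(\F)$, the definition of $\Env$ already puts $\rho'$ into $\Env(\F)$ and yields $\Env(\mathcal{F'})\subseteq\Env(\F)$; and any closure of $\mathcal{F'}$ contributes only environments already counted in $\Env(\mathcal{F'})$. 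Hence the whole set $\Env(\ajout{\mathcal{F'}}{f}{\F\,f})\cup\{\rho'\}$ lies inside $\Env(\F)$, which is aliasing-free by hypothesis.

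Next I would pick arbitrary $\rho_1,\rho_2$ in the set appearing in the conclusion, together with $x\in\dom(\rho_1)$ and $y\in\dom(\rho_2)$ such that $\rho_1\,x=\rho_2\,y$, and split on whether $\rho''$ occurs among $\rho_1,\rho_2$. If neither is $\rho''$, both lie in $\Env(\F)$ by the inclusion above, so aliasing-freeness of $\Env(\F)$ gives $x=y$. If both equal $\rho''$, then $\rho_1\,x=l_i$ and $\rho_2\,y=l_j$ with $x=x_i$ and $y=x_j$; distinctness of the $l_i$ forces $i=j$, hence $x=y$. If exactly one is $\rho''$, say $\rho_1=\rho''$, then $\rho_1\,x=l_i$ with $l_i$ fresh, so $l_i\notin\Loc(\F)$, whereas $\rho_2\,y\in\Image(\rho_2)\subseteq\Loc(\F)$ since $\rho_2\in\Env(\F)$; the equality is then impossible and this case is vacuous. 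The mirror case is identical.

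The only step that needs care is the first one: because $\Env$ is defined by mutual recursion through the function environments nested inside closures, I would spell out that overwriting the $f$-entry of $\mathcal{F'}$ by $\F\,f$ creates no new environment, using the fact that $\mathcal{F'}$ is already reachable from $\F$ through the closure of $f$. Once this inclusion is in hand, the remainder is a routine three-way case analysis driven entirely by the distinctness and freshness of the $l_i$; Lemma~\ref{lem:alias-concat} is not required here, but it is the natural tool for the corresponding step in the (letrec) rule.
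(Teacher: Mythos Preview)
Your proposal is correct and follows essentially the same approach as the paper: first observe that $\Env(\ajout{\mathcal{F'}}{f}{\F\,f})\cup\{\rho'\}\subseteq\Env(\F)$, then do the three-way case split on whether $\rho''$ occurs among $\rho_1,\rho_2$, using distinctness of the $l_i$ and freshness to handle the new cases. The paper states the inclusion without justification, so your more careful unfolding of the definition of $\Env$ is a welcome addition rather than a deviation.
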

\begin{proof}
  Let $\mathcal{E} = \Env(\ajout{\mathcal{F'}}{f}{\F\,f})\cup\{\rho'\}$.
  We know that $\mathcal{E}\subset\Env(\F)$ so $\mathcal{E}$ is aliasing-free
  We want to show that adding fresh and distinct locations from
  $\rho''$ preserves this lack of freedom.  More precisely,
  we want to show that
  \begin{align*}
  \forall \rho_1,\rho_2 \in \mathcal{E}\cup\{\rho''\},
  \forall x \in \dom(\rho_1),
  \forall y \in \dom(\rho_2),\
  \rho_1\ x = \rho_2\ y \Rightarrow x = y\\
  \intertext{given that}
  \forall \rho_1,\rho_2 \in \mathcal{E},
  \forall x \in \dom(\rho_1),
  \forall y \in \dom(\rho_2),\
  \rho_1\ x = \rho_2\ y \Rightarrow x = y.
  \end{align*}
  We reason by checking of all cases.
  If $\rho_1\in \mathcal{E}$ and $\rho_2\in \mathcal{E}$, immediate.
  If $\rho_1 = \rho_2 = \rho''$ then
  $\rho''\ x = \rho''\ y \Rightarrow x = y$
  holds because the locations of $\rho''$ are distinct.
  If $\rho_1 = \rho''$ and
  $\rho_2 \in \mathcal{E}$ then
  $\rho_1\ x = \rho_2\ y \Rightarrow x = y$
  holds because
  $\rho_1\ x \neq \rho_2\ y$
  (by freshness hypothesis). \qedhere
\end{proof}
\begin{lemma}[Aliasing in (letrec) rule]\label{lem:aliasing-letrec}
    If\/ $\Env(\F)\cup\{\rho,\rhot\}$ is aliasing free,
    then, for all $x_i$,
    \[\Env(\F) \cup \{\rho,\rhot\} \cup \{\rhot\concat\rho\
      |_{\dom(\rhot\concat\rho)\setminus\{\range{x}{1}{n}\}}\}\]
    is aliasing free.
\end{lemma}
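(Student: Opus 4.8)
The plan is to obtain this from the Concatenation Lemma (Lemma~\ref{lem:alias-concat}) together with the elementary remark that restricting the domain of an environment can never introduce aliasing. Write $\sigma = \rhot\concat\rho\,|_{\dom(\rhot\concat\rho)\setminus\{\range{x}{1}{n}\}}$ for the environment being added, so that the goal is to prove that $\Env(\F)\cup\{\rho,\rhot\}\cup\{\sigma\}$ is aliasing-free.

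First I would apply Lemma~\ref{lem:alias-concat} with $\mathcal{E} = \Env(\F)\cup\{\rho,\rhot\}$, concatenating $\rhot$ and $\rho$. Since $\mathcal{E}$ already contains both $\rho$ and $\rhot$, the required hypothesis ``$\mathcal{E}\cup\{\rhot,\rho\}$ is aliasing-free'' is exactly the hypothesis of the present lemma, and Lemma~\ref{lem:alias-concat} yields that $\Env(\F)\cup\{\rho,\rhot\}\cup\{\rhot\concat\rho\}$ is aliasing-free --- the distinguished environments $\rho$ and $\rhot$ are retained here precisely because they sit inside $\mathcal{E}$ rather than being the pair that is concatenated. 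Next I would note that, viewed as function graphs, $\sigma\subseteq\rhot\concat\rho$: indeed $\dom(\sigma)\subseteq\dom(\rhot\concat\rho)$ and $\sigma\,z = (\rhot\concat\rho)\,z$ for every $z\in\dom(\sigma)$. Hence, for any set $\mathcal{E}'$ such that $\mathcal{E}'\cup\{\rhot\concat\rho\}$ is aliasing-free, the set $\mathcal{E}'\cup\{\rhot\concat\rho\}\cup\{\sigma\}$ is aliasing-free as well: given $\rho_1,\rho_2$ in it with $\rho_1\,z_1 = \rho_2\,z_2$ for some $z_1\in\dom(\rho_1)$ and $z_2\in\dom(\rho_2)$, replace each occurrence of $\sigma$ among $\rho_1,\rho_2$ by $\rhot\concat\rho$ --- legitimate since $z_i\in\dom(\sigma)$, so the looked-up values are unchanged --- and conclude $z_1 = z_2$ from aliasing-freeness of $\mathcal{E}'\cup\{\rhot\concat\rho\}$. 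Taking $\mathcal{E}' = \Env(\F)\cup\{\rho,\rhot\}$ and combining with the first step gives the claim.

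I do not expect a genuine obstacle: both ingredients are one-line verifications on function graphs, in the same spirit as the proof of Lemma~\ref{lem:alias-concat}. The only points demanding a little care are that the Concatenation Lemma is stated so as to discard the two concatenated environments, so one must invoke it with $\rho$ and $\rhot$ living inside the ambient set $\mathcal{E}$; and that the ``restriction is harmless'' step must be formulated for the enlarged set that still contains $\rhot\concat\rho$ itself, since $\sigma$ and $\rhot\concat\rho$ may legitimately coexist among the $\rho_i$.
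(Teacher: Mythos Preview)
Your proposal is correct and follows essentially the same idea as the paper: both arguments rest on the observation that any lookup in the new environment $\sigma$ coincides with a lookup in $\rhot$ or in $\rho$, so no new aliasing can appear. The paper carries this out by a direct case check ``as in the proof of Lemma~\ref{lem:alias-concat}'', whereas you package it slightly more modularly --- first invoking Lemma~\ref{lem:alias-concat} as a black box to add $\rhot\concat\rho$, then observing that a domain restriction of an already-present environment is harmless --- but the content is the same.
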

\begin{proof}
  Let $\mathcal{E} = \Env(\F)\cup\{\rho,\rhot\}$ and
  $\rho'' = \rhot\concat\rho|_{\dom(\rhot\concat\rho)\setminus\{\range{x}{1}{n}\}}$.
  Adding $\rho''$, a restricted concatenation of $\rhot$ and $\rho$, to
  $\mathcal{E}$ preserves aliasing freedom, as in the proof of
  Lemma~\ref{lem:alias-concat}.
  If $\rho_1\in \mathcal{E}$ and $\rho_2\in \mathcal{E}$, immediate.
  If $\rho_1\in \{\rho''\}$,
  $\rho_1\ x = \rho\ x\ \text{or}\ \rho'\ x$. This is the same for
  $\rho_2$. Then $\rho_1\ x = \rho_2\ y$ is equivalent to
   $\rho\ x = \rho'\ y$ (or some other combination, depending on $x$, $y$,
   $\rho_1$ and $\rho_2$) which leads to the expected result. \qedhere
\end{proof}

\subsubsection{Proof of correctness} \label{sec:proof-correctness}

We finally show Theorem~\ref{thm:correction-ll}.
\begin{xrefthm}{thm:correction-ll}
If $x$ is a liftable parameter in $(M,\F,\rhot,\rho)$, then
\[\reduction{M}{s}{v}{s'} \text{ implies }
\reductionF{\lift{M}}{s}{v}{s'}{\lift{\F}}\]
\end{xrefthm}

Assume that $x$ is a liftable parameter in $(M,\F,\rhot,\rho)$. The proof is by
induction on the height of the reduction of $\reduction{M}{s}{v}{s'}$.  To keep
the proof readable, we detail only the non-trivial cases when checking the
invariants of Definition~\ref{dfn:var-liftable} to ensure that the induction
hypotheses hold.

\paragraph{(call) --- first case}
    First, we consider the most interesting case where there exists $i$ such
    that $f = h_i$.
    The variable
    $x$ is a liftable parameter in $(h_i(\range{a}{1}{n}),\F,\rhot,\rho)$
    hence in $(a_i,\F,\varepsilon,\rhot\concat\rho)$ too.

    Indeed, the invariants of Definition~\ref{dfn:var-liftable} hold:
    \begin{itemize}
    \item Invariant~\ref{case:loc}: By definition of a local position, every $f$
    defined in local position in $a_i$ is in local position in
    $h_i(\range{a}{1}{n})$, hence the expected property by the induction hypotheses.
    \item Invariant~\ref{case:term}: Immediate since the premise does not hold : since the
    $a_i$ are not in tail position in $h_i(\range{a}{1}{n})$,
    they cannot feature calls to $h_i$ (by Invariant~\ref{case:pos}).
    \item Invariant~\ref{case:share}: Lemma~\ref{lem:alias-concat},
    p.~\pageref{lem:alias-concat}.
    \end{itemize}
    The other invariants hold trivially.

    By the induction hypotheses, we get
    \[\reduclift[\env{}{\rhot\concat\rho}]{a_{i}}{s_{i}}{v_{i}}{s_{i+1}}.\]
    By definition of lifting, $\lift{h_i(\range{a}{1}{n})} =
    h_i(\lift{a_{1}},\dotsc,\lift{a_{n}},x)$.  But $x$ is not a liftable
    parameter in $(b,\mathcal{F'},\rho'',\rho')$ since the
    Invariant~\ref{case:term} might be broken: $x \notin \dom(\rho'')$
    ($x$ is not a parameter of $h_i$) but $h_j$ might appear in tail
    position in $b$.

    On the other hand, we have $x \in \dom(\rho')$: since, by
    hypothesis, $x$ is a liftable parameter in
    $(h_i(\range{a}{1}{n}),\F,\rhot,\rho)$, it appears necessarily  in
    the environments of the closures of the $h_i$, such as $\rho'$.
    This allows us to split $\rho'$ into two parts:
    $\rho' = (x,l)\concat\rho'''$.
    It is then possible to move $(x,l)$ to the tail environment,
    according to Lemma~\ref{lem:switch-x}:
    \[\reductionF[\env{\rho''(x,l)}{\rho'''}]{b}{\ajout{s_{n+1}}{l_i}{v_i}}{v}{s'|_{\dom(s')\setminus\{l\}}}{\
    \ajout{\mathcal{F'}}{f}{\F\,f}}\]
    This rewriting ensures that
    $x$ is a liftable parameter in $(b,\ajout{\mathcal{F'}}{f}{\F\,f},\rho''\concat(x,l),\rho''')$.

    Indeed, the invariants of Definition~\ref{dfn:var-liftable} hold:
    \begin{itemize}
    \item Invariant~\ref{case:loc}: Every function defined in local position in $b$ is
    an inner function in $h_i$ so, by Invariant~\ref{case:pos},
    it is one of the $h_i$ and
    $x \in \dom(\rho''\concat(x,l)\concat\rho''')$.
    \item Invariant~\ref{case:term}: Immediate since $x \in \dom(\rho''\concat(x,l)\concat\rho''')$.
    \item Invariant~\ref{case:exclu}: Immediate since $\mathcal{F'}$ is included in
    $\F$.
    \item Invariant~\ref{case:share}: Immediate for the compact closures.
    Aliasing freedom is guaranteed by Lemma~\ref{lem:aliasing-call}
    (p.~\pageref{lem:aliasing-call}).
    \end{itemize}
    The other invariants hold trivially.

    By the induction hypotheses,
    \[\reductionF[\env{\rho''(x,l)}{\rho'''}]{\lift{b}}{\ajout{s_{n+1}}{l_i}{v_i}}{v}{s'|_{\dom(s')\setminus\{l\}}}{\
    \lift{\ajout{\mathcal{F'}}{f}{\F\,f}}}\]
    The $l$ location is not fresh: it must be rewritten into a fresh
    location, since $x$ is now a parameter of $h_i$.
    Let $l'$ be a location appearing in neither 
    $\lift{\ajout{\mathcal{F'}}{f}{\F\,f}}$, nor $\ajout{s_{n+1}}{l_i}{v_i}$ or $\rho''\concat\rhot'$.
    Then $l'$ is a fresh location, which is to act as $l$ in the
    reduction of $\lift{b}$.

    We will show that, after the reduction, $l'$ is not in the
    store (just like $l$ before the lambda-lifting).  In the meantime, the value
    associated to $l$ does not change (since $l'$ is modified instead
    of $l$).

    Lemma~\ref{lem:Fstarclean} implies that $x$  does not appear in the
    environments of \lift{\F}, so it does not appear in the environments
    of $\lift{\ajout{\mathcal{F'}}{f}{\F\,f}}\subset\lift{\F}$ either.
    As a consequence, lack of aliasing implies by Definition~\ref{dfn:aliasing}
    that the label $l$, associated to $x$, does not appear in
    $\lift{\ajout{\mathcal{F'}}{f}{\F\,f}}$ either, so
    \[\lift{\ajout{\mathcal{F'}}{f}{\F\,f}}[l'/l] = \lift{\ajout{\mathcal{F'}}{f}{\F\,f}}.\]
    Moreover, $l$ does not appear in $s'|_{\dom(s')\setminus\{l\}}$.
    By alpha-conversion (Lemma~\ref{lem:rename-loc-opt}, since $l'$ does
    not appear in the store or the environments of the reduction, we
    rename $l$ to $l'$:
    \[\reductionF[\env{\rho''(x,l')}{\rho'''}]{\lift{b}}{\ajout{s_{n+1}[l'/l]}{l_i}{v_i}}{v}{s'|_{\dom(s')\setminus\{l\}}}{\
    \lift{\ajout{\mathcal{F'}}{f}{\F\,f}}}.\]
    We want now to reintroduce $l$.  
    Let $v_x = s_{n+1}\ l$.  The location $l$  does not appear in
    $\ajout{s_{n+1}[l'/l]}{l_i}{v_i}$, $\lift{\ajout{\mathcal{F'}}{f}{\F\,f}}$, or $\rho''(x,l')\concat\rho'''$.
    Thus, by Lemma~\ref{lem:intro-in-store},
    \[\reductionF[\env{\rho''(x,l')}{\rho'''}]{\lift{b}}{\ajout{\ajout{s_{n+1}[l'/l]}{l_i}{v_i}}{l}{v_x}}{v}{\ajout{s'|_{\dom(s')\setminus\{l\}}}{l}{v_x}}{\
    \lift{\ajout{\mathcal{F'}}{f}{\F\,f}}}.\]
    Since
    \begin{align*}
      \ajout{\ajout{s_{n+1}[l'/l]}{l_i}{v_i}}{l}{v_x}
      &= \ajout{\ajout{s_{n+1}[l'/l]}{l}{v_x}}{l_i}{v_i}
      && \text{because $\forall i, l \neq l_i$}\\
      &= \ajout{\ajout{s_{n+1}}{l'}{v_x}}{l_i}{v_i}
      && \text{because $v_x = s_{n+1} l$}\\
      &= \ajout{\ajout{s_{n+1}}{l_i}{v_i}}{l'}{v_x}
      && \text{because $\forall i, l' \neq l_i$}
    \end{align*}
    and
    $\ajout{s'|_{\dom(s')\setminus\{l\}}}{l}{v_x} = \ajout{s'}{l}{v_x}$,
    we finish the rewriting by Lemma~\ref{lem:intro-in-env},
    \[\reductionF[\env{\rho''(x,l')}{(x,l)\concat\rho'''}]{\lift{b}}{\ajout{\ajout{s_{n+1}}{l_i}{v_i}}{l'}{v_x}}{v}{\
    \ajout{s'}{l}{v_x}}{\lift{\ajout{\mathcal{F'}}{f}{\F\,f}}}.\]
    Hence the result:
    \[\inferrule*[Left=(call)]{\
    \lift{\F}\ h_i = \fun{\range{x}{1}{n}x}{\lift{b}}{\rho',\lift{\mathcal{F'}}}\\
    \rho''= \drange{x}{l}{1}{n}(x,\rhot\ x)\\
    \text{$l'$ and $l_{i}$ fresh and distinct}\\\\
    \forall i,\reduclift[\env{}{\rhot\concat\rho}]{a_{i}}{s_{i}}{v_{i}}{s_{i+1}} \\
    \reduclift[\env{}{\rhot\concat\rho}]{x}{s_{n+1}}{v_x}{s_{n+1}} \\
    \reductionF[\env{\rho''(x,l')}{\rho'}]{\lift{b}}{\ajout{\ajout{s_{n+1}}{l_i}{v_i}}{l'}{v_x}}{v}{\
    \ajout{s'}{l}{v_x}}{\lift{\ajout{\mathcal{F'}}{f}{\F\,f}}}
    }{\
    \reduclift{h_i(\range{a}{1}{n})}{s_{1}}{v}{\gc{\rhot}{\ajout{s'}{l}{v_x}}}}\]
    Since $l \in \dom(\rhot)$ (because $x$ is a liftable parameter in $(h_i(\range{a}{1}{n}),\F,\rhot,\rho)$),
    the extraneous location is reclaimed as expected:
    $\gc{\rhot}{\ajout{s'}{l}{v_x}} = \gc{\rhot}{s'}$.

\paragraph{(call) --- second case}
    We now consider the case where $f$ is not one of the $h_i$.
    The variable
    $x$ is a liftable parameter in $(f(\range{a}{1}{n}),\F,\rhot,\rho)$
    hence in
    $(a_i,\F,\varepsilon,\rhot\concat\rho)$ too.

    Indeed, the invariants of Definition~\ref{dfn:var-liftable} hold:
    \begin{itemize}
    \item Invariant~\ref{case:loc}: By definition of a local position, every $f$
    defined in local position in $a_i$ is in local position in
    $f(\range{a}{1}{n})$, hence the expected property by the induction hypotheses.
    \item Invariant~\ref{case:term}: Immediate since the premise does not hold : the
    $a_i$ are not in tail position in $f(\range{a}{1}{n})$
    so they cannot feature calls to $h_i$ (by Invariant~\ref{case:pos}:).
    \item Invariant~\ref{case:share}: Lemma~\ref{lem:alias-concat},
    p.~\pageref{lem:alias-concat}.
    \end{itemize}
    The other invariants hold trivially.

    By the induction hypotheses, we get
    \[\reduclift[\env{}{\rhot\concat\rho}]{a_{i}}{s_{i}}{v_{i}}{s_{i+1}},\]
    and, by Definition~\ref{dfn:lifted-term},
    \[\lift{f(\range{a}{1}{n})} = f(\lift{a_{1}},\dotsc,\lift{a_{n}}).\]
    If $x$ is not defined in $b$ or $\F$, then $\lift{}$ is the identity
    function and can trivially be applied to the reduction of $b$.  Otherwise,
    $x$ is a liftable parameter in $(b,\ajout{\mathcal{F'}}{f}{\F\,f},\rho'',\rho')$.

    Indeed, the invariants of Definition~\ref{dfn:var-liftable} hold.
    Assume that $x$ is defined as a parameter of some function $g$, in
    either $b$ or $\F$:
    \begin{itemize}
    \item Invariant~\ref{case:loc}: We have to distinguish the cases where $f = g$
    (with $x \in \dom(\rho'')$)
    and $f \neq g$ (with $x \notin \dom(\rho'')$ and $x \notin \dom(\rho')$).
    In both cases, the result is immediate by the induction hypotheses.
    \item Invariant~\ref{case:term}: If $f \neq g$,
    the premise cannot hold (by the induction hypotheses, Invariant~\ref{case:pos}).
    If  $f = g$, $x \in \dom(\rho'')$ (by the induction hypotheses, Invariant~\ref{case:pos}).
    \item Invariant~\ref{case:exclu}: Immediate since $\mathcal{F'}$ is included in
    $\F$.
    \item Invariant~\ref{case:share}: Immediate for the compact closures.
    Aliasing freedom is guaranteed by Lemma~\ref{lem:aliasing-call}
    (p.~\pageref{lem:aliasing-call}).
    \end{itemize}
    The other invariants hold trivially.

    By the induction hypotheses,
    \[\reductionF[\env{\rho''}{\rho'}]{\lift{b}}{\ajout{s_{n+1}}{l_{i}}{v_{i}}}{v}{s'}{\lift{\ajout{\mathcal{F'}}{f}{\F\,f}}}\]
    hence:
    \[\inferrule*[Left=(call)]{\
    \lift{\F}\ f = \fun{\range{x}{1}{n}}{\lift{b}}{\rho',\lift{\mathcal{F'}}}\\
    \rho''= \drange{x}{l}{1}{n}\\
    \text{$l_{i}$ fresh and distinct}\\\\
    \forall i,\reduclift[\env{}{\rhot\concat\rho}]{a_{i}}{s_{i}}{v_{i}}{s_{i+1}} \\
    \reductionF[\env{\rho''}{\rho'}]{\lift{b}}{\ajout{s_{n+1}}{l_{i}}{v_{i}}}{v}{s'}{\lift{\ajout{\mathcal{F'}}{f}{\F\,f}}}
    }{\
    \reduclift{f(\range{a}{1}{n})}{s_{1}}{v}{\gc{\rhot}{s'}}}\]

\paragraph{(letrec)}
    The parameter $x$ is a liftable in
    $(\letrec{f(\range{x}{1}{n})}{a}{b},\F,\rhot,\rho)$
    so $x$ is a liftable parameter in $(b,\mathcal{F'},\rhot,\rho)$ too.

    Indeed, the invariants of Definition~\ref{dfn:var-liftable} hold:
    \begin{itemize}
    \item Invariants~\ref{case:loc} and~\ref{case:term}: Immediate by the
    induction hypotheses and definition of tail and local positions.
    \item Invariant~\ref{case:exclu}: By the induction hypotheses,
    Invariant~\ref{case:loc}
    ($x$ is to appear in the new closure if and only if $f = h_i$).
    \item Invariant~\ref{case:share}: Lemma~\ref{lem:aliasing-letrec}
    (p.~\pageref{lem:aliasing-letrec}).
    \end{itemize}
    The other invariants hold trivially.

    By the induction hypotheses, we get
    \[\reductionF{\lift{b}}{s}{v}{s'}{\lift{\mathcal{F'}}}.\]
    If $f \neq h_i$,
    \[\lift{ \letrec{f(\range{x}{1}{n})}{a}{b} } = \letrec{f(\range{x}{1}{n})}{\lift{a}}{\lift{b}}\]
    hence, by definition of $\lift{\mathcal{F'}}$,
    \[\inferrule*[Left=(letrec)]{\
    \reductionF{\lift{b}}{s}{v}{s'}{\lift{\mathcal{F'}}} \\\\
    \rho' = \rhot\concat\rho|_{\dom(\rhot\concat\rho)\setminus\{\range{x}{1}{n}\}} \\
    \lift{\mathcal{F'}}=
    \ajout{\lift{\F}}{f}{\fun{\range{x}{1}{n}}{\lift{a}}{\rho',F}}    }{\
    \reduclift{\letrec{f(\range{x}{1}{n})}{a}{b}}{s}{v}{s'}}\]
    On the other hand, if $f = h_i$,
    \[\lift{ \letrec{f(\range{x}{1}{n})}{a}{b} } = \letrec{f(\range{x}{1}{n}x)}{\lift{a}}{\lift{b}}\]
    hence, by definition of $\lift{\mathcal{F'}}$,
    \[\inferrule*[Left=(letrec)]{\
    \reductionF{\lift{b}}{s}{v}{s'}{\lift{\mathcal{F'}}} \\\\
    \rho' = \rhot\concat\rho|_{\dom(\rhot\concat\rho)\setminus\{\range{x}{1}{n}x\}} \\
    \lift{\mathcal{F'}}=
    \ajout{\lift{\F}}{h_i}{\fun{\range{x}{1}{n}x}{\lift{a}}{\rho',F}}    }{\
    \reduclift{\letrec{h_i(\range{x}{1}{n})}{a}{b}}{s}{v}{s'}}\]

\paragraph{(val)}
    $\lift{v}=v$ so
    \[\inferrule*[Left=(val)]{ }{\reduclift{v}{s}{v}{\gc{\rhot}{s}}}\]

\paragraph{(var)}
    $\lift{y}=y$ so
    \[\inferrule*[Left=(var)]{\rhot\concat\rho\ y = l \in \dom\ s}{\reduclift{y}{s}{s\ l}{\gc{\rhot}{s}}}\]

\paragraph{(assign)}
    The parameter $x$ is liftable in $(y \coloneqq a,\F,\rhot,\rho)$ so in
    $(a,\F,\varepsilon,\rhot\concat\rho)$ too.

    Indeed, the invariants of Definition~\ref{dfn:var-liftable} hold:
    \begin{itemize}
    \item Invariant~\ref{case:share}: Lemma~\ref{lem:alias-concat},
    p.~\pageref{lem:alias-concat}.
    \end{itemize}
    The other invariants hold trivially.

    By the induction hypotheses, we get
    \[\reduclift[\env{}{\rhot\concat\rho}]{a}{s}{v}{s'}.\]
    Moreover
    \[\lift{y \coloneqq a}=y\coloneqq \lift{a},\]
    so :
    \[\inferrule*[Left=(assign)]{\reduclift[\env{}{\rhot\concat\rho}]{a}{s}{v}{s'}  \\
    \rhot\concat\rho\ y = l \in \dom\ s'}{\
    \reduclift{y \coloneqq a}{s}{\unit}{\gc{\rhot}{\subst{s'}{l}{v}}}}\]

\paragraph{(seq)}
    The parameter $x$ is liftable in $(a\ ;\ b,\F,\rhot,\rho)$.
    If $x$ is not defined in $a$ or $\F$, then $\lift{}$ is the identity
    function and can trivially be applied to the reduction of $a$.  Otherwise,
    $x$ is a liftable parameter in $(a,\F,\varepsilon,\rhot\concat\rho)$.

    Indeed, the invariants of Definition~\ref{dfn:var-liftable} hold:
    \begin{itemize}
    \item Invariant~\ref{case:share}: Lemma~\ref{lem:alias-concat},
    p.~\pageref{lem:alias-concat}.
    \end{itemize}
    The other invariants hold trivially.

    If $x$ is not defined in $b$ or $\F$, then $\lift{}$ is the identity
    function and can trivially be applied to the reduction of $b$.  Otherwise,
    $x$ is a liftable parameter in $(b,\F,\rhot,\rho)$.
    Indeed, the invariants of Definition~\ref{dfn:var-liftable} hold trivially.

    By the induction hypotheses, we get
    \reduclift[\env{}{\rhot\concat\rho}]{a}{s}{v}{s'} and
    \reduclift{b}{s'}{v'}{s''}.\\
    Moreover,
    \[\lift{a\ ;\ b} = \lift{a}\ ;\ \lift{b},\]
    hence:
    \[\inferrule*[Left=(seq)]{\reduclift[\env{}{\rhot\concat\rho}]{a}{s}{v}{s'} \\
    \reduclift{b}{s'}{v'}{s''}}{\
    \reduclift{a\ ;\ b}{s}{v'}{s''}}\]

\paragraph{(if-true) and (if-false)} are proved similarly to (seq).\qedhere

\newpage
\section{CPS conversion}\label{sec:cps-conversion}

In this section, we prove the correctness of the CPS-conversion performed by the
CPC translator.  This conversion is defined only on a subset of C programs that
we call \emph{CPS-convertible terms} (Section~\ref{sec:cps-convertible}).  We
first show that the \emph{early evaluation} of function parameters in CPS-convertible
terms is correct (Section~\ref{sec:early-eval}).  To simplify the proof of
correctness of CPS-conversion, we then introduce small-step reduction rules
featuring contexts and early evaluation (Section~\ref{sec:ss-reduction}).

In Section~\ref{sec:cps-terms}, we define \emph{CPS terms}, with the
\texttt{push} and \texttt{invoke} operators to build and execute continuations,
and the associated reduction rules.  Since the syntax of CPS-terms does not
ensure a correct reduction, we also define \emph{well-formed} CPS-terms, which
are the image of CPS-convertible terms by CPS-conversion.

The proof of correctness of CPS-conversion is finally carried out in
Section~\ref{sec:translation}. It consists merely in checking that the reduction
rules for CPS-convertible terms and well-formed CPS-terms execute in lock-step.

\subsection{CPS-convertible form}\label{sec:cps-convertible}

CPS conversion is not defined for every C function; instead, we restrict
ourselves to a subset of functions, which we call the {\em
    CPS-convertible\/} subset.
The CPS-convertible form restricts the calls to cps functions to
make it straightforward to capture their continuation.  In
CPS-convertible form, a call to a cps function \texttt{f} is either
in tail position, or followed by a tail call to another cps function
whose parameters are \emph{non-shared} variables that cannot be
modified by \texttt{f}.

In the C language, we define the CPS-convertible form as follows:
\begin{definition}[CPS-convertible form]\label{def:cps-c}
    A function {\tt h} is in \emph{CPS-convertible form} if every call to
    a cps function that it contains matches one of the following
    patterns, where both \texttt{f} and \texttt{g} are cps functions,
    \texttt{e$_\text{\tt 1}$, ..., e$_\text{\tt n}$} are any C expressions and
    \texttt{x, y$_\text{\tt 1}$, ..., y$_\text{\tt n}$} are distinct, non-shared
    variables:
    \begin{eqnarray}
        \text{\tt return f(e$_\text{\tt 1}$, ..., e$_\text{\tt n}$);}\\
        \text{\tt x = f(e$_\text{\tt 1}$, ..., e$_\text{\tt n}$); return g(x,
            y$_\text{\tt 1}$, ..., y$_\text{\tt n}$);}\\
        \text{\tt f(e$_\text{\tt 1}$, ..., e$_\text{\tt n}$); return g(x,
            y$_\text{\tt 1}$, ..., y$_\text{\tt n}$);}\label{useless1}\\
        \text{\tt f(e$_\text{\tt 1}$, ..., e$_\text{\tt n}$); return;}\\
        \text{\tt f(e$_\text{\tt 1}$, ..., e$_\text{\tt n}$); g(x,
            y$_\text{\tt 1}$, ..., y$_\text{\tt n}$); return;}\\
        \text{\tt x = f(e$_\text{\tt 1}$, ..., e$_\text{\tt n}$); g(x,
            y$_\text{\tt 1}$, ..., y$_\text{\tt n}$);
            return;}\label{useless2}
    \end{eqnarray}
    \qedhere
\end{definition}
Note the use of \texttt{return} to explicitly mark calls in tail position.  The
forms (\ref{useless1}) to (\ref{useless2}) are only necessary to
handle the cases where \texttt{f} and \texttt{g} return \texttt{void};
in the rest of the proof, we ignore these cases that are a syntactical
detail of the C language, and focus on the essential cases (1) and (2).

To prove the correctness of CPS-conversion, we need to express this definition
in our small imperative language.  This is done by defining CPS-convertible
terms, which are a subset of the terms introduced in
Definition~\ref{def:full-language} (Section~\ref{sec:definitions}).  A program
in CPS-convertible form consists of a set of mutually-recursive functions with
no free variables, the body of each of which is a CPS-convertible term.

A CPS\hyp{}convertible term has two parts: the head and the tail.  The head
is a (possibly empty) sequence of assignments, possibly embedded within
conditional statements.  The tail is a (possibly empty) sequence of
function calls in a highly restricted form: their parameters are
(side-effect free) expressions, except possibly for the last one, which
can be another function call of the same form.  Values and expressions
are left unchanged.

\begin{definition}[CPS-convertible terms]\label{def:cps-language}
 \begin{align*}
 v \Coloneqq & \quad\unit \;|\; \true \;|\; \false \;|\; n \in
 \mathbf{N}\tag{values}\\
 \expr \Coloneqq & \quad v \;|\; x\;|\; \ldots\tag{expressions}\\
    F \Coloneqq & f(\expr, \ldots, \expr) \;|\; f(\expr, \ldots, \expr, F)
    \tag{nested function calls}\\
    Q \Coloneqq & \epsilon \;|\; Q\ ;\ F \tag{tail}\\
    T \Coloneqq & \expr 
        \;|\; x \coloneqq \expr\ ;\ T
        \;|\; \ite{e}{T}{T}
        \;|\; Q\tag{head}
 \end{align*}
\end{definition}

The essential property of CPS\hyp{}convertible terms, which makes their CPS
conversion immediate to perform, is the guarantee that there is no cps
call outside of the tails.  It makes continuations easy to represent as
a series of function calls (tails) and separates them clearly from
imperative blocks (heads), which are not modified by the CPC translator.

The tails are a generalisation of Definition~\ref{def:cps-c}, which will be
useful for the proof of correctness of CPS-conversion.  Note that {\tt x =
    f(e$_\text{\tt 1}$, ..., e$_\text{\tt n}$); return g(x, y$_\text{\tt 1}$,
    ..., y$_\text{\tt n}$)} is represented by $g(f(\range{e}{1}{n}),
\range{y}{1}{n})$: this translation is correct because, contrary to C, our
language guarantees a left-to-right evaluation of function parameters.

Also noteworthy are the facts that:
\begin{itemize}
    \item there is no letrec construct anymore since every function is defined
        at top-level,
    \item assignments, conditions and function parameters of $f$ are restricted
        to expressions, to ensure that function calls only appear in tail
        position,
    \item there is no need to forbid shared variables in the parameters of $g$
        because they are ruled out of our language by design.
\end{itemize}

\subsection{Early evaluation}
\label{sec:early-eval}

In this section, we prove that correctness of \emph{early evaluation}, ie.\
evaluating the expressions $\expr$ before $F$ when reducing $f(\expr, \ldots,
\expr, F)$ in a tail.  This result is necessary to show the correctness of the
CPS-conversion, because function parameters are evaluated before any function
call when building continuations.

The reduction rules may be simplified somewhat for CPS-convertible terms.  We do
not need to keep an explicit environment of functions since there are no inner
functions any more; for the same reason, the (letrec) rule disappears.  Instead,
we use a constant environment $\F$ holding every function used in the reduced
term $M$.  To account for the absence of free variables, the closures in $\F$
need not carry an environment.  As a result, in the (call) rule,
$\rho'=\varepsilon$ and $\mathcal{F'}=\F$.

Early evaluation is correct for lifted terms because a lifted term can never
modify the variables that are not in its environment, since it cannot access
them through closures.
\begin{lemma}\label{lem:lift-store-invariant}
  Let $M$ be a lambda-lifted term. Then,
  \[\reductionN{M}{s}{v}{s'}\]
  implies
  \[s|_{\dom(s)\setminus\Image(\rho)}=s'|_{\dom(s)\setminus\Image(\rho)}.\]
\end{lemma}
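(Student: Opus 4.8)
The plan is to argue by induction on the derivation of $\reductionN{M}{s}{v}{s'}$ in the naive rules; this must be an induction on the derivation rather than on the term $M$, since in the (call) rule the callee's body $b$ is not a subterm of $M$ (and, with recursion, need not even be smaller). As a preliminary step I would record the elementary fact, provable by the same induction, that naive reduction never shrinks the store: $\reductionN{M}{s}{v}{s'}$ implies $\dom(s)\subseteq\dom(s')$. This is what lets the induction hypotheses be composed across the rules with several premises, where the intermediate stores live on larger domains and one must restrict everything back down to $\dom(s)\setminus\Image(\rho)$ at the end.

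The cases (val) and (var) are immediate since $s'=s$. In the CPS-convertible setting there is no letrec and $\F$ is a constant environment of closed top-level functions, so the only rule that enlarges the environment is (call); for (seq), (if-true) and (if-false) the environment $\rho$ is unchanged between conclusion and premises, the subterms are again lambda-lifted, and the claim follows by chaining the induction hypotheses on the premises and restricting with the store-growth fact. The first case with real content is (assign), $x\coloneqq a$: the induction hypothesis covers the reduction of $a$ under $\rho$, and the sole additional write is at $l=\rho\,x\in\Image(\rho)$, which lies outside $\dom(s)\setminus\Image(\rho)$; hence $\subst{s'}{l}{v}$ still agrees with $s$ on that set, its domain having only grown.

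The main obstacle is the (call) case. I would first apply the induction hypothesis to each argument $a_i$ (lambda-lifted, evaluated under $\rho$) and, using store growth, obtain that $s_{n+1}$ agrees with $s_1$ on $\dom(s_1)\setminus\Image(\rho)$. The crux is that, $M$ being lambda-lifted, the callee's closure captures no free variable — here $\rho'=\varepsilon$ — so its body $b$ is itself a lambda-lifted term, reduced under $\rho'' = \drange{x}{l}{1}{n}$ with $\Image(\rho'')=\{l_1,\dots,l_n\}$ and store $\subst{s_{n+1}}{l_i}{v_i}$, whose domain is $\dom(s_{n+1})\cup\{l_1,\dots,l_n\}$, disjointly, by freshness of the $l_i$. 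The induction hypothesis applied to $b$ then gives that $s'$ agrees with $\subst{s_{n+1}}{l_i}{v_i}$ on $\dom(\subst{s_{n+1}}{l_i}{v_i})\setminus\Image(\rho'')=\dom(s_{n+1})$, hence agrees with $s_{n+1}$ there, again by freshness. Finally, since $\dom(s)=\dom(s_1)\subseteq\dom(s_{n+1})$ and $\dom(s_1)\setminus\Image(\rho)\subseteq\dom(s_1)$, composing the two agreements yields $s|_{\dom(s)\setminus\Image(\rho)}=s'|_{\dom(s)\setminus\Image(\rho)}$. The bookkeeping to be careful about is precisely that the fresh $l_i$ are disjoint from $\dom(s_{n+1})$ and that the closure environment is closed, so that the only new locations the body is permitted to disturb are the $l_i$, which then vanish under the final domain restriction.
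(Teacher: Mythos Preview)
Your proposal is correct and follows essentially the same route as the paper: an induction on the derivation, with the trivial (val)/(var) cases, the (assign) case handled by $l=\rho\,x\in\Image(\rho)$, the (seq)/(if) cases by chaining via store growth, no (letrec), and the (call) case resolved exactly by $\rho'=\varepsilon$ together with freshness of the $l_i$ so that the body's induction hypothesis yields agreement on all of $\dom(s_{n+1})$. The paper is slightly terser and leaves the store-growth fact implicit, but the argument is the same.
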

\begin{proof}
  By induction on the structure of the reduction.  The key points are
  the use of $\rho'=\varepsilon$ in the (call) case, and the absence of
  (letrec) rules.

\paragraph{(val) and (var)} Trivial ($s = s'$).

\paragraph{(assign)} By the induction hypotheses, 
  \[s|_{\dom(s)\setminus\Image(\rho)}=s'|_{\dom(s)\setminus\Image(\rho)}
  \text{ and } l \in \Image(\rho),\] hence
  \[s|_{\dom(s)\setminus\Image(\rho)}=(\subst{s'}{l}{v})|_{\dom(s)\setminus\Image(\rho)}.\]

\paragraph{(seq)} By the induction hypotheses,
  \[s|_{\dom(s)\setminus\Image(\rho)}=s'|_{\dom(s)\setminus\Image(\rho)}
  \text{ and }
  s'|_{\dom(s')\setminus\Image(\rho)}=s''|_{\dom(s')\setminus\Image(\rho)}.\]
  Since,
  $\dom(s)\subset\dom(s')$, the second equality can be restricted to
  \[s'|_{\dom(s)\setminus\Image(\rho)}=s''|_{\dom(s)\setminus\Image(\rho)}.\]
  Hence,
  \[s|_{\dom(s)\setminus\Image(\rho)}=s''|_{\dom(s)\setminus\Image(\rho)}.\]

\paragraph{(if-true) and (if-false)} are proved similarly to (seq).

\paragraph{(letrec)} doesn't occur since $M$ is lambda-lifted.

\paragraph{(call)} By the induction hypotheses,
  \[
  (\subst{s_{n+1}}{l_i}{v_i})|_{\dom(\subst{s_{n+1}}{l_i}{v_i})\setminus\Image(\rho''\concat\rho')}=
    s'|_{\dom(\subst{s_{n+1}}{l_i}{v_i})\setminus\Image(\rho''\concat\rho')}\]
  Since $\rho' = \varepsilon$, $\Image(\rho'') = \{l_i\}$ and
  $\dom(s_{n+1})\cap\{l_i\} = \emptyset$ (by freshness),
  \[
  (\subst{s_{n+1}}{l_i}{v_i})|_{\dom(s_{n+1})}=
    s'|_{\dom(s_{n+1})}\]
  so $s_{n+1} = s'|_{\dom(s_{n+1})}$.\\
  Since $\dom(s)\setminus\Image(\rho)\subset\dom(s)\subset\dom(s_{n+1})$,
  \[s_{n+1}|_{\dom(s)\setminus\Image(\rho)} = s'|_{\dom(s)\setminus\Image(\rho)}.\]
  Finally, we can prove similarly to the (seq) case that
  \[s|_{\dom(s)\setminus\Image(\rho)}=s_{n+1}|_{\dom(s)\setminus\Image(\rho)}.\]
  Hence,
  \[s|_{\dom(s)\setminus\Image(\rho)}=s'|_{\dom(s)\setminus\Image(\rho)}. 
      \qedhere\]
\end{proof}

As a consequence, a tail of function calls cannot modify the current
store, only extend it with the parameters of the called functions.
\begin{corollary}\label{cor:lift-store-invariant}
For every tail $Q$,
\[\reductionN{Q}{s}{v}{s'}
\text{ implies }
s \revextends s'.\]
\end{corollary}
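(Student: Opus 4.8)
The plan is to prove the corollary by induction on the structure of the tail $Q$ (Definition~\ref{def:cps-language}), using Lemma~\ref{lem:lift-store-invariant} to control the store after each nested call. Recall that $s \revextends s'$ unfolds to $s'|_{\dom(s)} = s$, and that by Property~\ref{prop:ext-order} the relation $\revextends$ is a partial order, in particular transitive, so it is enough to deal with one call at a time. The base case $Q = \epsilon$ is immediate since the empty tail does not touch the store ($s = s'$). For $Q = Q'\ ;\ F$, the (seq) rule decomposes the reduction as $\reductionN{Q'}{s}{v}{s_1}$ and $\reductionN{F}{s_1}{v'}{s'}$; the induction hypothesis gives $s \revextends s_1$, so by transitivity it suffices to show that a single nested call $F$ satisfies $s_1 \revextends s'$.

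For that I would run a secondary induction on the structure of $F \Coloneqq f(\expr,\dots,\expr) \mid f(\expr,\dots,\expr,F)$. Applying the (call) rule to $F$, every argument but possibly the last is an expression, which by the (val) and (var) rules leaves the store unchanged; the last argument, when it is itself a nested call, extends the store by the secondary induction hypothesis. Hence the store $s_{n+1}$ reached just before the body of $f$ is entered satisfies $s_1 \revextends s_{n+1}$. Fresh, distinct locations $l_i$ are then introduced and the body $b$ of $f$ is reduced from $\ajout{s_{n+1}}{l_i}{v_i}$ under the environment $\rho''\concat\rho'$; in this restricted setting $\rho' = \varepsilon$ and $\Image(\rho'') = \{l_i\}$, and $b$, being a closed top-level function body, is a lambda-lifted term. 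Lemma~\ref{lem:lift-store-invariant} then yields $(\ajout{s_{n+1}}{l_i}{v_i})|_{\dom(\ajout{s_{n+1}}{l_i}{v_i})\setminus\{l_i\}} = s'|_{\dom(\ajout{s_{n+1}}{l_i}{v_i})\setminus\{l_i\}}$.

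The last step, and the only point needing care, is the bookkeeping around the fresh locations: since the $l_i$ are fresh, $\{l_i\}\cap\dom(s_{n+1}) = \emptyset$, so $\dom(\ajout{s_{n+1}}{l_i}{v_i})\setminus\{l_i\} = \dom(s_{n+1})$ and the left-hand side of the equality above collapses to $s_{n+1}$. Thus $s'|_{\dom(s_{n+1})} = s_{n+1}$, i.e.\ $s_{n+1}\revextends s'$, and transitivity with $s_1 \revextends s_{n+1}$ gives $s_1 \revextends s'$, which closes the induction. In short, all the substance is already in Lemma~\ref{lem:lift-store-invariant}; the work here is merely to notice that expression arguments are store-neutral and that restricting away the freshly allocated parameter locations coincides with restricting to the caller's store, so that the lemma — which a priori only pins down $s'$ outside the called function's parameters — in fact pins it down on the whole of the caller's store.
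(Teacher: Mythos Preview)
Your proof is correct and follows essentially the same approach as the paper: induction on the structure of the tail, with the (seq) rule and transitivity handling $Q';F$, a secondary induction on $F$ for nested calls, and Lemma~\ref{lem:lift-store-invariant} plus the freshness bookkeeping ($\rho'=\varepsilon$, $\Image(\rho'')=\{l_i\}$, $\{l_i\}\cap\dom(s_{n+1})=\emptyset$) to conclude $s_{n+1}=s'|_{\dom(s_{n+1})}$ in the base case. The paper's write-up is terser but the argument is the same.
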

\begin{proof}
We prove the corollary by induction on the structure of a tail.
First remember that \emph{store extension} (written $\revextends$) is a
partial order over stores (Property~\ref{prop:ext-order}), defined in
Section~\ref{subsec:second-step} as follows: $s \revextends s' \ssi
s'|_{\dom(s)} = s$.

The case $\epsilon$ is  trivial.  The case $Q\ ;\ F$ is immediate by
induction ((seq) rule), since $\revextends$ is transitive.  Similarly,
it is pretty clear that $f(\expr, \ldots, \expr, F)$ follows by
induction and transitivity from $f(\expr, \ldots, \expr)$ ((call) rule).
We focus on this last case.

Lemma~\ref{lem:lift-store-invariant} implies:
  \[
  (\subst{s_{n+1}}{l_i}{v_i})|_{\dom(\subst{s_{n+1}}{l_i}{v_i})\setminus\Image(\rho''\concat\rho')}=
    s'|_{\dom(\subst{s_{n+1}}{l_i}{v_i})\setminus\Image(\rho''\concat\rho')}.\]
  Since $\rho' = \varepsilon$, $\Image(\rho'') = \{l_i\}$ and
  $\dom(s_{n+1})\cap\{l_i\} = \emptyset$ (by freshness),
  \[
  (\subst{s_{n+1}}{l_i}{v_i})|_{\dom(s_{n+1})}=
    s'|_{\dom(s_{n+1})}\]
  so $s_{n+1} = s'|_{\dom(s_{n+1})}$.

The evaluation of $\expr$ parameters do not change the store:
$s_{n+1}=s$.  The expected result follows: $s = s'|_{\dom(s)}$, hence
$s \revextends s'$.
\end{proof}

This leads to the correctness of early evaluation.
\begin{theorem}[Early evaluation]
\label{thm:preevaluation}
For every tail $Q$,
$\reductionN{Q}{s}{v}{s'}$ implies
$\reductionN{Q[x\setminus s(\rho\ x)]}{s}{v}{s'}$ (provided $x\in\dom(\rho)$ and
$\rho\ x\in\dom(s)$).
\end{theorem}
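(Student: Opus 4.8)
The plan is to prove the theorem by induction on the structure of the tail $Q$, with a subsidiary induction on the structure of a nested function call $F$, mirroring the case split already used for Corollary~\ref{cor:lift-store-invariant}. The guiding observation is that reducing a tail of a lambda\hyp{}lifted, CPS\hyp{}convertible term --- or any nested call inside it --- never overwrites a location already present in the store: with $\rho'=\varepsilon$ in the (call) rule this is exactly Lemma~\ref{lem:lift-store-invariant}, and Corollary~\ref{cor:lift-store-invariant} records it as $s\revextends s'$. Hence the location $\rho\,x$ keeps the value $s(\rho\,x)$ throughout the whole reduction, so evaluating $x$ ``early'', i.e.\ replacing it by that value before reducing, cannot change the outcome. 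Note also that the substitution is purely syntactic on $Q$ and never reaches inside a function body, which lives in $\F$; since, moreover, the parameters of called functions are given fresh locations in the (call) rule, there is no variable\hyp{}capture issue, and $s(\rho\,x)$ is a value, hence a legitimate expression to substitute.

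For the base case $Q=\epsilon$ we have $\epsilon[x\setminus s(\rho\,x)]=\epsilon$ and there is nothing to do. For $Q=Q'\ ;\ F$, the reduction ends with (seq): $Q'$ reduces from $s$ to some $s_1$ (with value $v'$) and $F$ reduces from $s_1$ to $s'$ (with value $v$). By Corollary~\ref{cor:lift-store-invariant}, $s\revextends s_1$, so $\rho\,x\in\dom(s)\subseteq\dom(s_1)$ and $s_1(\rho\,x)=s(\rho\,x)$. The induction hypothesis on $Q'$ gives a reduction of $Q'[x\setminus s(\rho\,x)]$ from $s$ to $s_1$, and the subsidiary claim for $F$ (below), together with $s_1(\rho\,x)=s(\rho\,x)$, gives a reduction of $F[x\setminus s(\rho\,x)]$ from $s_1$ to $s'$; recombining with (seq) and using $(Q'\ ;\ F)[x\setminus s(\rho\,x)] = Q'[x\setminus s(\rho\,x)]\ ;\ F[x\setminus s(\rho\,x)]$ concludes this case.

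The subsidiary claim states that, for a nested call $F$, a reduction of $F$ from $\sigma$ to $\sigma'$ (with $\rho\,x\in\dom(\sigma)$) implies a reduction of $F[x\setminus\sigma(\rho\,x)]$ from $\sigma$ to $\sigma'$ with the same value; I would prove it by induction on the structure of $F$. For $F=f(e_1,\dots,e_n)$ the reduction uses (call); each $e_i$ is an expression, reduced by (val) or (var) without changing the store, so all intermediate stores equal $\sigma$ and $v_i$ is either the constant $e_i$ or $\sigma(\rho\,e_i)$. Substituting $x$ by $\sigma(\rho\,x)$ turns an occurrence $e_i=x$ into the literal value $\sigma(\rho\,x)=v_i$, which (val) re\hyp{}evaluates to $v_i$ leaving the store at $\sigma$, while the other $e_i$ are untouched; the callee's body, its (empty) closure environment, the fresh locations $l_i$ and the store $\ajout{\sigma}{l_i}{v_i}$ are all unchanged, so the body reduction is the very same derivation, and $f(e_1[\dots],\dots,e_n[\dots])$ reduces from $\sigma$ to $\sigma'$. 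For $F=f(e_1,\dots,e_n,F')$, treat $e_1,\dots,e_n$ as before (the store stays $\sigma$), invoke Corollary~\ref{cor:lift-store-invariant} to see the store still agrees with $\sigma$ on $\dom(\sigma)$ when $F'$ is reached (so $F'$'s starting store still maps $\rho\,x$ to $\sigma(\rho\,x)$), apply the induction hypothesis to $F'$, and recombine via (call).

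I do not expect a serious obstacle here: the only real work is the bookkeeping that $\sigma(\rho\,x)$ is still the value stored at $\rho\,x$ at each point where a substituted occurrence of $x$ is evaluated, and this is supplied verbatim by Lemma~\ref{lem:lift-store-invariant} and Corollary~\ref{cor:lift-store-invariant}. The one place to be slightly careful is the nested\hyp{}call argument $F'$, where one must use the store\hyp{}invariance result a second time to justify that the value of $x$ has not drifted before $F'$ is evaluated.
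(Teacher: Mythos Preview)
Your proposal is correct and follows essentially the same approach as the paper: an induction on the structure of tails (and nested calls/expressions), using Corollary~\ref{cor:lift-store-invariant} to guarantee that the value stored at $\rho\,x$ is unchanged at every point where a substituted occurrence of $x$ would be evaluated. The paper's proof is a two-line sketch of exactly this argument, singling out the (seq) case for $Q\ ;\ F$ and the (call) case for $f(\expr,\ldots,\expr,F)$ as the places where the store-invariance corollary is invoked; your write-up simply spells out the details.
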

\begin{proof}
Immediate induction on the structure of tails and expressions:
Corollary~\ref{cor:lift-store-invariant} implies that
$s \revextends s''$ and $\rho\ x\in\dom(s)$ ensures
that $s(\rho\ x) = s''(\rho\ x)$ in the relevant cases (namely the (seq) rule
for $Q\ ;\ F$ and the (call) rule for $f(\expr, \ldots, \expr, F)$).
\end{proof}

\subsection{Small-step reduction}
\label{sec:ss-reduction}

We define the semantics of CPS\hyp{}convertible terms through a set of
small-step reduction rules.  We distinguish three kinds of reductions:
$\rightarrow_T$ to reduce the head of terms, $\rightarrow_Q$ to reduce
the tail, and $\rightarrow_e$ to evaluate expressions.

These rules describe a stack machine with a store $\sigma$ to keep the
value of variables.  Since free and shared variables have been
eliminated in earlier passes, there is a direct correspondence at any
point in the program between variable names and locations, with no need
to dynamically maintain an extra environment.

We use contexts as a compact representation for stacks.  The head rules
$\rightarrow_T$ reduce triples made of a term, a context and a store:
$\langle T, C[\ ], \sigma \rangle$.  The tail rules $\rightarrow_Q$,
which merely unfold tails with no need of a store, reduce couples of a
tail and a context:  $\langle Q, C[\ ], \rangle$.  The expression rules
do not need context to reduce, thus operating on couples made of an
expression and a store: $\langle e, \sigma \rangle$.

\paragraph{Contexts}
Contexts are sequences of function calls.  In those sequences, function
parameters shall be already evaluated: constant expressions are allowed,
but not variables.  As a special case, the last parameter might be a
``hole'' instead, written $\circleddash$, to be filled with the return
value of the next, nested function.

\begin{definition}[Contexts] Contexts are defined inductively:
    \[C \Coloneqq [\ ]
    \;|\; C[[\ ]\ ;\ f(v, \ldots, v)]
    \;|\; C[[\ ]\ ;\ f(v, \ldots, v, \circleddash)]
    \]
\end{definition}

\begin{definition}[CPS\hyp{}convertible reduction rules]
\begin{align}
\langle x \coloneqq \expr\ ;\ T, C[\ ], \sigma \rangle &\rightarrow_T
\langle T, C[\ ], \sigma[x\mapsto v]  \rangle
\\&\quad\text{when $\langle \expr,\sigma \rangle \rightarrow_e^\star
    v$}\notag\\
\langle \ite{\expr}{T_1}{T_2}, C[\ ], \sigma \rangle &\rightarrow_T
\langle T_1, C[\ ], \sigma \rangle
\\&\quad\text{when }\langle \expr,\sigma \rangle \rightarrow_e^\star
    \true\notag\\
\langle \ite{\expr}{T_1}{T_2}, C[\ ], \sigma \rangle &\rightarrow_T
\langle T_2, C[\ ], \sigma \rangle
\\&\quad\text{when }\langle \expr,\sigma \rangle \rightarrow_e^\star
    \false\notag\\
\langle \expr, C[[\ ]\ ;\ f(v_1, \ldots, v_n)], \sigma \rangle &\rightarrow_T
\langle \epsilon, C[[\ ]\ ;\ f(v_1, \ldots, v_n)] \rangle\\
\langle \expr, C[[\ ]\ ;\ f(v_1, \ldots, v_n, \circleddash)], \sigma
\rangle &\rightarrow_T
\langle \epsilon, C[[\ ]\ ;\ f(v_1, \ldots, v_n, v)] \rangle
\\&\quad\text{when $\langle \expr,\sigma \rangle \rightarrow_e^\star
    v$}\notag\\
\langle \expr, [\ ], \sigma \rangle &\rightarrow_T v \quad\text{when $\langle \expr,\sigma \rangle \rightarrow_e^\star
    v$}\notag\\
\langle Q, C[\ ], \sigma \rangle &\rightarrow_T
\langle Q[x_i\setminus \sigma\ x_i], C[\ ] \rangle
\label{rule:preevaluation}
\\&\quad\text{for every $x_i$ in $\dom(\sigma)$}\notag\\
\notag\\
\langle Q\ ;\ f(v_1, \ldots, v_n), C[\ ] \rangle &\rightarrow_Q
\langle Q, C[[\ ]\ ;\ f(v_1, \ldots, v_n)] \rangle
\label{rule:context}
\\
\langle Q\ ;\ f(v_1, \ldots, v_n, F), C[\ ] \rangle &\rightarrow_Q
\langle Q\ ;\ F, C[[\ ]\ ;\ f(v_1, \ldots, v_n, \circleddash)] \rangle
\label{rule:context-hole}
\\
\langle \epsilon, C[[\ ]\ ;\ f(v_1, \ldots, v_n)] \rangle &\rightarrow_Q
\langle T, C[\ ], \sigma \rangle
\label{rule:exec}
\\&\quad\text{when } f(x_1,\ldots,x_n) = T
    \text{ and }\sigma = \{x_i\mapsto v_i\}\notag
\end{align}
We do not detail the rules for $\rightarrow_e$, which simply looks for
variables in $\sigma$ and evaluates arithmetical and boolean operators.
\end{definition}

\paragraph{Early evaluation}
Note that Rule~\ref{rule:preevaluation} evaluates every function parameter in a
tail before the evaluation of the tail itself.  This is precisely the early
evaluation process described above, which is correct by
Theorem~\ref{thm:preevaluation}.  We introduce early evaluation directly in the
reduction rules rather than using it as a lemma to simplify the proof of
correctess of the CPS-conversion.

\subsection{CPS terms}
\label{sec:cps-terms}

Unlike classical CPS conversion techniques \cite{plotkin}, our CPS terms
are not continuations, but a procedure which builds and executes the
continuation of a term.  Construction is performed by $\push$, which
adds a function to the current continuation, and execution by $\invoke$,
which calls the first function of the continuation, optionally passing
it the return value of the current function.

\begin{definition}[CPS terms]
 \begin{align*}
 v \Coloneqq & \quad\unit \;|\; \true \;|\; \false \;|\; n \in
 \mathbf{N}\tag{values}\\
 \expr \Coloneqq & \quad v \;|\; x\;|\; \ldots\tag{expressions}\\
    Q \Coloneqq & \invoke
        \;|\; \push\ f(\expr, \ldots, \expr)\ ;\ Q
        \;|\; \push\ f(\expr, \ldots, \expr, \boxdot)\ ;\ Q\tag{tail}\\
    T \Coloneqq & \patch\ \expr 
        \;|\; x \coloneqq \expr\ ;\ T
        \;|\; \ite{e}{T}{T}
        \;|\; Q\tag{head}
 \end{align*}
\end{definition}

\paragraph{Continuations and reduction rules}
A continuation is a sequence of function calls to be performed, with
already evaluated parameters.  We write $\cconcat$ for appending a
function to a continuation, and $\boxdot$ for a ``hole'', i.e.\ an
unknown parameter.

\begin{definition}[Continuations]
    \[\mathcal{C} \Coloneqq \varepsilon
    \;|\; \ f(v, \ldots, v) \cconcat \mathcal{C}
    \;|\; \ f(v, \ldots, v, \boxdot) \cconcat \mathcal{C}
    \]
\end{definition}

The reduction rules for CPS terms are isomorphic to the rules for
CPS\hyp{}convertible terms, except that they use continuations instead of
contexts.

\begin{definition}[CPS reduction rules]
{\allowdisplaybreaks
\begin{align}
\langle x \coloneqq \expr\ ;\ T, \mathcal{C}, \sigma \rangle
&\rightarrow_T
\langle T, \mathcal{C}, \sigma[x\mapsto v] \rangle
\\&\quad\text{when $\langle \expr,\sigma \rangle \rightarrow_e^\star
    v$}\notag\\
\langle \ite{\expr}{T_1}{T_2}, \mathcal{C}, \sigma \rangle
&\rightarrow_T
\langle T_1, \mathcal{C}, \sigma \rangle
\\&\quad\text{if }\langle \expr,\sigma \rangle \rightarrow_e^\star
    \true\notag\\
\langle \ite{\expr}{T_1}{T_2}, \mathcal{C}, \sigma \rangle
&\rightarrow_T
\langle T_2, \mathcal{C}, \sigma \rangle
\\&\quad\text{if }\langle \expr,\sigma \rangle \rightarrow_e^\star
    \false\notag\\
\langle \patch\ \expr, f(v_1, \ldots, v_n) \cconcat \mathcal{C},
\sigma \rangle &\rightarrow_T
\langle \invoke, f(v_1, \ldots, v_n) \cconcat \mathcal{C} \rangle\\
\langle \patch\ \expr, f(v_1, \ldots, v_n, \boxdot) \cconcat
\mathcal{C}, \sigma \rangle &\rightarrow_T
\langle \invoke, f(v_1, \ldots, v_n, v) \cconcat \mathcal{C} \rangle
\\&\quad\text{when $\langle \expr,\sigma \rangle \rightarrow_e^\star
    v$}\notag\\
\langle \patch\ \expr, \varepsilon, \sigma \rangle &\rightarrow_T v \quad\text{when $\langle \expr,\sigma \rangle \rightarrow_e^\star
    v$}\notag\\
\langle Q, \mathcal{C}, \sigma \rangle &\rightarrow_T
\langle Q[x_i\setminus \sigma\ x_i], \mathcal{C} \rangle
\\&\quad\text{for every $x_i$ in $\dom(\sigma)$}\notag\\
\notag\\
\langle \push\ f(v_1, \ldots, v_n)\ ;\ Q, \mathcal{C} \rangle
&\rightarrow_Q
\langle Q, f(v_1, \ldots, v_n) \cconcat \mathcal{C} \rangle\\
\langle \push\ f(v_1, \ldots, v_n, \boxdot)\ ;\ Q, \mathcal{C} \rangle
&\rightarrow_Q
\langle Q, f(v_1, \ldots, v_n, \boxdot) \cconcat \mathcal{C} \rangle\\
\langle \invoke, f(v_1, \ldots, v_n) \cconcat \mathcal{C} \rangle
&\rightarrow_Q
\langle T, \mathcal{C}, \sigma \rangle
\\&\quad\text{when } f(x_1,\ldots,x_n) = T
    \text{ and }\sigma = \{x_i\mapsto v_i\}\notag
\end{align}
}
\end{definition}

\paragraph{Well-formed terms}
Not all CPS term will lead to a correct reduction.  If we $\push$ a
function expecting the result of another function and $\invoke$ it
immediately, the reduction blocks:
\[\langle \push\ f(v_1, \ldots, v_n, \boxdot)\ ;\ \invoke, \mathcal{C}, \sigma \rangle \rightarrow
\langle \invoke, f(v_1, \ldots, v_n, \boxdot) \cconcat \mathcal{C}, \sigma
\rangle \not\rightarrow\]

\emph{Well-formed terms} avoid this behaviour.
\begin{definition}[Well-formed term]
A continuation queue is \emph{well-formed} if it does not end with:
\[\push\ f(\expr,\ldots, \expr, \boxdot)\ ;\ \invoke.\]

A term is \emph{well-formed} if every continuation queue in this term is
well-formed.
\end{definition}

\subsection{Correctess of the CPS-conversion}\label{sec:translation}

We define the CPS conversion as a mapping from CPS\hyp{}convertible terms to
CPS terms.
\begin{definition}[CPS conversion]
\begin{align*}
 (Q\ ;\ f(\expr, \ldots, \expr))^\blacktriangle &= \push\ f(\expr, \ldots, \expr)\ ;\
 Q^\blacktriangle\\
 (Q\ ;\ f(\expr, \ldots, \expr, F))^\blacktriangle &=  \push\ f(\expr, \ldots, \expr,
 \boxdot)\ ;\ (Q\ ;\ F)^\blacktriangle\\
 \epsilon^\blacktriangle &= \invoke\\
 (x \coloneqq \expr\ ;\ T)^\blacktriangle &= x \coloneqq \expr\ ;\ T^\blacktriangle\\
 (\ite{\expr}{T_1}{T_2})^\blacktriangle &= \ite{\expr}{T_1^\blacktriangle}{T_2^\blacktriangle}\\
 \expr^\blacktriangle &= \patch\ \expr
\end{align*}
\end{definition}

In the rest of this section, we prove that this mapping yields an
isomorphism between the reduction rules of CPS\hyp{}convertible terms and
well-formed CPS terms, whence the correctness of our CPS conversion
(Theorem~\ref{thm:cps-correct}).

We first prove two lemmas to show that $^\blacktriangle$ yields only
well-formed CPS terms.  This leads to a third lemma to show that
$^\blacktriangle$ is a bijection between CPS\hyp{}convertible terms and
well-formed CPS terms.

CPS\hyp{}convertible terms have been carefully designed to make CPS conversion
as simple as possible.  Accordingly, the following three proofs, while long
and tedious, are fairly trivial.

\begin{lemma}
Let $Q$ be a continuation queue. Then $Q^\blacktriangle$ is well-formed.
\end{lemma}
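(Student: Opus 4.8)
The statement is that for any continuation queue $Q$ in a CPS-convertible term, its image $Q^\blacktriangle$ under CPS conversion is well-formed, i.e.\ does not end with $\push\ f(\expr, \ldots, \expr, \boxdot)\ ;\ \invoke$. The plan is a straightforward structural induction on the grammar of tails $Q \Coloneqq \epsilon \;|\; Q\ ;\ F$, with a nested case analysis on the shape of the nested function call $F \Coloneqq f(\expr, \ldots, \expr) \;|\; f(\expr, \ldots, \expr, F)$, tracking what the \emph{last} item of $Q^\blacktriangle$ is.

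First I would dispatch the base case $Q = \epsilon$: here $\epsilon^\blacktriangle = \invoke$, which is a single $\invoke$ with nothing before it, hence vacuously well-formed (it does not \emph{end with} a $\push\ \ldots\ ;\ \invoke$ because there is no $\push$ at all). Next, for $Q = Q'\ ;\ F$, I would split on $F$. If $F = f(\expr, \ldots, \expr)$ (no hole), then by definition $(Q'\ ;\ f(\expr, \ldots, \expr))^\blacktriangle = \push\ f(\expr, \ldots, \expr)\ ;\ Q'^\blacktriangle$; the last item of this queue is the last item of $Q'^\blacktriangle$, which is well-formed by the induction hypothesis, and the freshly prepended $\push$ carries no hole, so nothing new can create the forbidden pattern. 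If $F = f(\expr, \ldots, \expr, F')$ (with a hole), then $(Q'\ ;\ f(\expr, \ldots, \expr, F'))^\blacktriangle = \push\ f(\expr, \ldots, \expr, \boxdot)\ ;\ (Q'\ ;\ F')^\blacktriangle$; I must show the tail $(Q'\ ;\ F')^\blacktriangle$ does not \emph{begin} with $\invoke$, for otherwise the new $\push\ \ldots(\boxdot)\ ;\ \invoke$ would be ill-formed. But $(Q'\ ;\ F')^\blacktriangle$ is by definition itself of the form $\push\ g(\ldots)\ ;\ (\cdots)^\blacktriangle$ — whether or not $g$ has a hole — since the conversion of $Q'\ ;\ F'$ always emits a $\push$ first (the $\invoke$ case only arises from $\epsilon$, and $Q'\ ;\ F'$ is never $\epsilon$). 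Hence its first item is a $\push$, not $\invoke$, and the whole queue is well-formed by the induction hypothesis applied to its suffix.

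The only mildly delicate point — and the one I expect to be the main obstacle, though it is minor — is making precise the bookkeeping about ``first item'' versus ``last item'' of a converted queue, i.e.\ establishing the auxiliary observation that $(Q'\ ;\ F)^\blacktriangle$ always starts with a $\push$. This follows immediately by inspecting the two recursion clauses for $(Q\ ;\ \cdot)^\blacktriangle$, both of which have a $\push$ as their head symbol, but it is worth stating explicitly so that the hole case goes through cleanly. With that observation in hand, every case is immediate, and the induction closes. I would assume nothing beyond the definitions of CPS conversion and well-formedness given in this section.
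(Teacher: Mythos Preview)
Your proposal is correct and follows essentially the same route as the paper: induction along the recursion structure of $\blacktriangle$, with a case split on the shape of $F$. The paper separates the sub-case $Q' = \epsilon$ of your non-hole case as its own base case, but otherwise the arguments coincide; if anything, your explicit auxiliary observation that $(Q'\ ;\ F')^\blacktriangle$ always begins with a $\push$ (hence is not $\invoke$) makes the hole case cleaner than the paper's terse ``well-formed by induction''. One small terminological point: the induction is not literally structural on the tail grammar, since in the hole case you recurse on the freshly built $Q'\ ;\ F'$; it is well-founded on the total size of the tail (equivalently, on the recursion depth of $\blacktriangle$), which both you and the paper rely on implicitly.
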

\begin{proof}
By induction on the structure of a tail.
\[
 \epsilon^\blacktriangle = \invoke
 \]
 and
\[
(\epsilon\ ;\ f(\expr, \ldots, \expr))^\blacktriangle = \push\ f(\expr, \ldots, \expr)\ ;\
 \invoke
 \]
 are well-formed by definition.
\[
((Q\ ;\ F)\ ;\ f(\expr, \ldots, \expr))^\blacktriangle = \push\ f(\expr, \ldots, \expr)\ ;\
 (Q\ ;\ F)^\blacktriangle
 \]
and
\[
 (Q\ ;\ f(\expr, \ldots, \expr, F))^\blacktriangle =  \push\ f(\expr, \ldots, \expr,
 \boxdot)\ ;\ (Q\ ;\ F)^\blacktriangle
\]
are well-formed by induction.
\end{proof}

\begin{lemma}
Let $T$ be a CPS\hyp{}convertible term. Then $T^\blacktriangle$ is well-formed.
\end{lemma}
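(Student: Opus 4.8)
The plan is to prove the statement by structural induction on the CPS\hyp{}convertible term $T$, following the grammar of Definition~\ref{def:cps-language} and reusing the previous lemma for the tail case. Recall that, by definition, $T^\blacktriangle$ is well-formed precisely when every continuation queue occurring in it is well-formed; so the entire argument amounts to locating those continuation queues and checking each of them.

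First I would dispatch the two base cases. If $T$ is an expression $\expr$, then $\expr^\blacktriangle = \patch\ \expr$ contains no continuation queue at all, hence is vacuously well-formed. If $T$ is itself a tail $Q$, then the conclusion is exactly the statement of the previous lemma ($Q^\blacktriangle$ well-formed), so there is nothing more to do. Then come the two inductive cases. For $T = x \coloneqq \expr\ ;\ T'$, the conversion gives $T^\blacktriangle = x \coloneqq \expr\ ;\ (T')^\blacktriangle$; prefixing an assignment creates no new continuation queue, so the continuation queues of $T^\blacktriangle$ are exactly those of $(T')^\blacktriangle$, which are well-formed by the induction hypothesis. For $T = \ite{e}{T_1}{T_2}$, we have $T^\blacktriangle = \ite{e}{T_1^\blacktriangle}{T_2^\blacktriangle}$, and every continuation queue in $T^\blacktriangle$ lies inside $T_1^\blacktriangle$ or $T_2^\blacktriangle$, both well-formed by induction. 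In all cases every continuation queue of $T^\blacktriangle$ is well-formed, which is the claim.

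I do not expect any real obstacle here: the only non-routine ingredient, controlling the shape of the tails produced by $^\blacktriangle$, has already been isolated in the preceding lemma, and the head constructors of the conversion ($\coloneqq$, conditionals, and the terminal $\patch$) are simply ``transparent'' with respect to the occurrence of continuation queues. The single point deserving a line of care is to make explicit that the case analysis is exhaustive for ``every continuation queue in $T^\blacktriangle$'': a continuation queue in $T^\blacktriangle$ either is a maximal tail (covered by the previous lemma, via the $T = Q$ case) or occurs strictly inside a proper sub-term reached through an assignment tail or a conditional branch (covered by the induction hypothesis), and no continuation queue can arise in any other position given the shape of the conversion.
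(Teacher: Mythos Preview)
Your proposal is correct and follows exactly the approach of the paper, which merely states ``Induction on the structure of $T$, using the above lemma.'' You have simply spelled out the four cases (expression, tail, assignment, conditional) that the paper leaves implicit.
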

\begin{proof}
Induction on the structure of $T$, using the above lemma.
\end{proof}

\begin{lemma}\label{lemma:cps-iso}
The $^\blacktriangle$ relation is a bijection between CPS\hyp{}convertible terms
and well-formed CPS terms.
\end{lemma}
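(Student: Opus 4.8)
The plan is to prove the lemma by exhibiting an explicit inverse. By the two preceding lemmas, $^\blacktriangle$ sends every CPS\hyp{}convertible term to a well-formed CPS term, so it suffices to construct a map $\psi$ from well-formed CPS terms to CPS\hyp{}convertible terms and check that $\psi$ and $^\blacktriangle$ are mutually inverse.

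I would define $\psi$ by structural recursion on CPS terms, reading the clauses of the CPS conversion backwards: $\psi(\invoke)=\epsilon$, $\psi(\patch\ \expr)=\expr$, $\psi(x\coloneqq\expr\ ;\ T)=x\coloneqq\expr\ ;\ \psi(T)$, $\psi(\ite{\expr}{T_1}{T_2})=\ite{\expr}{\psi(T_1)}{\psi(T_2)}$, $\psi(\push\ f(\expr,\ldots,\expr)\ ;\ Q)=\psi(Q)\ ;\ f(\expr,\ldots,\expr)$, and --- in the one delicate clause --- $\psi(\push\ f(\expr,\ldots,\expr,\boxdot)\ ;\ Q)=Q'\ ;\ f(\expr,\ldots,\expr,F)$, where $Q'$ and $F$ are obtained from the decomposition $\psi(Q)=Q'\ ;\ F$ of $\psi(Q)$ at its last top-level ``$;$''. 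The first thing to check is that $\psi$ is well defined on well-formed terms: only the last clause can fail, and it requires $\psi(Q)\neq\epsilon$; a one-line induction shows $\psi(Q)=\epsilon$ exactly when $Q=\invoke$, and well-formedness of $\push\ f(\expr,\ldots,\expr,\boxdot)\ ;\ Q$ both forbids $Q=\invoke$ (that is precisely the excluded pattern) and guarantees that $Q$ is itself well-formed, so the recursion is legitimate. A second short induction shows $\psi$ takes values in CPS\hyp{}convertible terms: a non-empty tail splits uniquely as (tail)\,$;$\,(nested call) at its last ``$;$'', so $F$ above is a nested function call and $Q'\ ;\ f(\expr,\ldots,\expr,F)$ is again a tail.

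Finally I would establish the two round-trip identities $\psi(M^\blacktriangle)=M$ for every CPS\hyp{}convertible term $M$ and $(\psi(N))^\blacktriangle=N$ for every well-formed CPS term $N$, each by structural induction following the clauses above; combined with the fact that $^\blacktriangle$ lands in the well-formed CPS terms this yields the announced bijection. Every case is an immediate syntactic match except the $\boxdot$ clause, where one unfolds $(Q'\ ;\ f(\expr,\ldots,\expr,F))^\blacktriangle=\push\ f(\expr,\ldots,\expr,\boxdot)\ ;\ (Q'\ ;\ F)^\blacktriangle$ and applies the induction hypothesis to the structurally smaller tail $Q'\ ;\ F$ (respectively to the smaller, still well-formed queue $Q$). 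I expect the main --- indeed essentially the only --- obstacle to be the bookkeeping of this $\boxdot$ clause: one has to make precise that a well-formed continuation queue beginning with a hole\hyp{}carrying $\push$ is never immediately followed by $\invoke$, so the remainder always carries a further call that can be peeled off as the nested argument $F$. This is the single point at which the well-formedness hypothesis enters, and it is exactly what makes $\psi$ total; everything else is routine structural matching, as the remark preceding these lemmas already warns.
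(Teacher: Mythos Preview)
Your proposal is correct and follows essentially the same approach as the paper: the paper also defines an explicit inverse (written $^\blacktriangledown$) with exactly your clauses, justifies the $\boxdot$ case via well-formedness by the same argument ($Q^\blacktriangledown=\epsilon$ only when $Q=\invoke$, which well-formedness excludes), and then asserts the two round-trip identities. Your write-up is in fact slightly more detailed than the paper's, but the strategy and the key observation are identical.
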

\begin{proof}
Consider the following mapping from well-formed CPS terms to CPS\hyp{}convertible
terms:
\begin{align*}
 (\push\ f(\expr, \ldots, \expr)\ ;\ Q)^\blacktriangledown &=
 Q^\blacktriangledown\ ;\ f(\expr, \ldots, \expr)\\
 (\push\ f(\expr, \ldots, \expr, \boxdot)\ ;\ Q)^\blacktriangledown &=
 Q'\ ;\ f(\expr, \ldots, \expr, F)\\
 &\quad\text{with $Q^\blacktriangledown = Q'\ ;\ F$}\tag{*}\\
 \invoke^\blacktriangledown &= \epsilon\\
 (x \coloneqq \expr\ ;\ T)^\blacktriangledown &= x \coloneqq \expr\ ;\ T^\blacktriangledown\\
 \ite{\expr}{T_1}{T_2}^\blacktriangledown &= \ite{\expr}{T_1^\blacktriangledown}{T_2^\blacktriangledown}\\
 (\patch\ \expr)^\blacktriangledown &= \expr
\end{align*}
(*) The existence of $Q'$ is guaranteed by well-formedness:
\begin{itemize}
\item $\forall T,\ T^\blacktriangledown=\epsilon\ \Rightarrow\ T=\invoke$ (by
disjunction on the definition of~$^\blacktriangledown$),
\item here, $Q\neq\invoke$ because $(\push\ f(\expr, \ldots, \expr, \boxdot)\ ;\
Q)$ is well-formed,
\item hence $Q^\blacktriangledown\neq\epsilon$.
\end{itemize}
One checks easily that $(T^\blacktriangledown)^\blacktriangle=T$ and
$(T^\blacktriangle)^\blacktriangledown=T$.
\end{proof}

To conclude the proof of isomorphism, we also need an (obviously
bijective) mapping from contexts to continuations:
\begin{definition}[Conversion of contexts]
\begin{align*}
([\ ])^\vartriangle &= \varepsilon\\
(C[[\ ]\ ;\ f(v_1, \ldots, v_n)])^\vartriangle &=
    f(v_1, \ldots, v_n) \cconcat \mathcal{C}
\\&\quad\text{with $(C[\ ])^\vartriangle = \mathcal{C}$}\\
(C[[\ ]\ ;\ f(v_1, \ldots, v_n, \circleddash)])^\vartriangle &=
    f(v_1,\ldots, v_n, \boxdot) \cconcat \mathcal{C}
\\&\quad\text{with $(C[\ ])^\vartriangle = \mathcal{C}$}
\end{align*}
\end{definition}

The correctness theorem follows:
\begin{theorem}[Correctness of CPS conversion]\label{thm:cps-correct}
  The $^\blacktriangle$ and $^\vartriangle$ mappings are two bijections,
  the inverses of which are written $^\blacktriangledown$ and
  $^\triangledown$.  They yield an isomorphism between reduction rules
  of CPS\hyp{}convertible terms and CPS terms.
\end{theorem}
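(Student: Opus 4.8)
The plan is to establish the isomorphism by exhibiting a bisimulation between configurations of the two reduction systems, with the correspondence mediated by the already-defined mappings. First I would fix the relation $R$ on configurations: a CPS\hyp{}convertible configuration $\langle T, C[\ ], \sigma\rangle$ is related to the CPS configuration $\langle T^\blacktriangle, (C[\ ])^\vartriangle, \sigma\rangle$; similarly $\langle Q, C[\ ]\rangle$ relates to $\langle Q^\blacktriangle, (C[\ ])^\vartriangle\rangle$, and expression configurations $\langle e,\sigma\rangle$ relate to themselves (the $\rightarrow_e$ rules are literally identical in both systems). The earlier lemmas do the heavy lifting here: Lemma~\ref{lemma:cps-iso} gives that $^\blacktriangle$ is a bijection CPS\hyp{}convertible terms $\leftrightarrow$ well-formed CPS terms with inverse $^\blacktriangledown$, and the context conversion $^\vartriangle$ is stated to be an obvious bijection with inverse $^\triangledown$. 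So $R$ is already known to be a bijection between the two sets of reachable configurations; what remains is to check it commutes with the reduction relations in both directions.

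The core of the argument is then a case analysis on the reduction rules, done once and read in both directions. I would go rule by rule through the $\rightarrow_T$ rules of Definition (CPS\hyp{}convertible reduction rules) and match each with the structurally corresponding $\rightarrow_T$ rule of Definition (CPS reduction rules): the assignment rule maps to the assignment rule (since $(x\coloneqq\expr\ ;\ T)^\blacktriangle = x\coloneqq\expr\ ;\ T^\blacktriangle$ and the $\rightarrow_e$ side-condition is unchanged), the two conditional rules map to their counterparts, the three ``return'' rules ($\langle\expr, C[[\ ]\ ;\ f(\ldots)],\sigma\rangle$, the $\circleddash$-variant, and the empty-context variant) map to the three $\patch\,\expr$ rules, and Rule~\ref{rule:preevaluation} maps to the early-evaluation $\rightarrow_T$ rule on tails. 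For the $\rightarrow_Q$ rules: Rule~\ref{rule:context} matches the $\push\,f(v_1,\ldots,v_n)\ ;\ Q$ rule, Rule~\ref{rule:context-hole} matches the $\push\,f(v_1,\ldots,v_n,\boxdot)\ ;\ Q$ rule (using that $(Q\ ;\ f(\ldots,F))^\blacktriangle = \push\,f(\ldots,\boxdot)\ ;\ (Q\ ;\ F)^\blacktriangle$ and that $^\vartriangle$ sends $C[[\ ]\ ;\ f(v_1,\ldots,v_n,\circleddash)]$ to $f(v_1,\ldots,v_n,\boxdot)\cconcat\mathcal{C}$), and Rule~\ref{rule:exec} matches the $\invoke$ rule, both looking up the same $f(x_1,\ldots,x_n)=T$ and producing the same $\sigma=\{x_i\mapsto v_i\}$ — here one also uses $\epsilon^\blacktriangle = \invoke$ so that the source of Rule~\ref{rule:exec} corresponds to an $\invoke$ configuration. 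Each check is a short unfolding of definitions; there are no calculations of substance.

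The one place needing genuine care — and what I expect to be the main obstacle — is the interaction between well-formedness and the $\circleddash$/$\boxdot$ holes: I must verify that $R$ actually restricts to the \emph{well-formed} CPS configurations, i.e.\ that every configuration reachable from the image of a CPS\hyp{}convertible program is well-formed, so that the blocked reduction $\push\,f(\ldots,\boxdot)\ ;\ \invoke \not\rightarrow$ never arises on the CPS side and hence the two systems genuinely step in lock-step rather than one getting stuck. This follows from the two preliminary lemmas (that $^\blacktriangle$ produces only well-formed terms) together with the observation that the $\rightarrow_Q$ and $\rightarrow_T$ rules preserve well-formedness of the continuation queue component — which I would state as a small invariant lemma and discharge by inspecting the same rule cases. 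Once that invariant is in place, the bisimulation is complete: $R$ is a bijection, it relates initial configurations $\langle M, [\ ], \varnothing\rangle$ and $\langle M^\blacktriangle, \varepsilon, \varnothing\rangle$, it is preserved by $\rightarrow_T$, $\rightarrow_Q$ and $\rightarrow_e$ in both directions, and it relates final values $v$ to $v$; therefore $M$ reduces to $v$ iff $M^\blacktriangle$ does, which is precisely the claimed isomorphism.
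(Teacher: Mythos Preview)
Your proposal is correct and follows essentially the same approach as the paper: invoke Lemma~\ref{lemma:cps-iso} and the obvious bijectivity of $^\vartriangle$, then do a rule-by-rule case analysis showing each reduction rule on one side maps to the corresponding rule on the other (the paper tabulates this in Figure~\ref{fig:iso}). You are more explicit than the paper in framing this as a bisimulation and in flagging the well-formedness invariant under reduction, which the paper leaves implicit; but the underlying argument is the same.
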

\begin{proof}
Lemma~\ref{lemma:cps-iso} ensures that $^\blacktriangle$ is a bijection
between CPS\hyp{}convertible terms and well-formed CPS terms.  Moreover,
$^\vartriangle$ is an obvious bijection between contexts and
continuations.

To complete the proof, we only need to apply $^\blacktriangle$,
$^\vartriangle$, $^\blacktriangledown$ and $^\triangledown$ to
CPS\hyp{}convertible terms, contexts, well-formed CPS terms and continuations
(respectively) in every reduction rule and check that we get a valid
rule in the dual reduction system.  The result is summarized in
Figure~\ref{fig:iso}.
\end{proof}

\begin{landscape}
\begin{figure}
\begin{align*}
\langle x \coloneqq \expr\ ;\ T, C[\ ], \sigma \rangle &\rightarrow_T
\langle T, C[\ ], \sigma[x\mapsto v] \rangle
&\Leftrightarrow&&
\langle x \coloneqq \expr\ ;\ T, \mathcal{C}, \sigma \rangle &\rightarrow_T
\langle T, \mathcal{C}, \sigma[x\mapsto v] \rangle
\\&&&&&\quad\text{when $\langle \expr,\sigma \rangle \rightarrow_e^\star
    v$}\notag\\
%
\langle \ite{\expr}{T_1}{T_2}, C[\ ], \sigma \rangle &\rightarrow_T
\langle T_1, C[\ ], \sigma \rangle
&\Leftrightarrow&&
\langle \ite{\expr}{T_1}{T_2}, \mathcal{C}, \sigma \rangle &\rightarrow_T
\langle T_1, \mathcal{C}, \sigma \rangle
\\&&&&&\quad\text{if }\langle \expr,\sigma \rangle \rightarrow_e^\star
    \true\notag\\
\langle \ite{\expr}{T_1}{T_2}, C[\ ], \sigma \rangle &\rightarrow_T
\langle T_2, C[\ ], \sigma \rangle
&\Leftrightarrow&&
\langle \ite{\expr}{T_1}{T_2}, \mathcal{C}, \sigma \rangle &\rightarrow_T
\langle T_2, \mathcal{C}, \sigma \rangle
\\&&&&&\quad\text{if }\langle \expr,\sigma \rangle \rightarrow_e^\star
    \false\notag\\
%
\langle \expr, C[[\ ]\ ;\ f(v_1, \ldots, v_n)], \sigma \rangle &\rightarrow_T
\langle \epsilon, C[[\ ]\ ;\ f(v_1, \ldots, v_n)] \rangle
&\Leftrightarrow&&
\langle \patch\ \expr, f(v_1, \ldots, v_n) \cconcat \mathcal{C}, \sigma \rangle &\rightarrow_T
\langle \invoke, f(v_1, \ldots, v_n) \cconcat \mathcal{C} \rangle\\
%
\langle \expr, C[[\ ]\ ;\ f(v_1, \ldots, v_n, \circleddash)], \sigma \rangle &\rightarrow_T
\langle \epsilon, C[[\ ]\ ;\ f(v_1, \ldots, v_n, v)] \rangle
&\Leftrightarrow&&
\langle \patch\ \expr, f(v_1, \ldots, v_n, \boxdot) \cconcat \mathcal{C}, \sigma \rangle &\rightarrow_T
\langle \invoke, f(v_1, \ldots, v_n, v) \cconcat \mathcal{C} \rangle
\\&&&&&\quad\text{when $\langle \expr,\sigma \rangle \rightarrow_e^\star
    v$}\notag\\
%
\langle \expr, [\ ], \sigma \rangle &\rightarrow_T v
&\Leftrightarrow&&
\langle \patch\ \expr, \varepsilon, \sigma \rangle &\rightarrow_T v
\\&&&&&\quad\text{when $\langle \expr,\sigma \rangle \rightarrow_e^\star
    v$}\notag\\
%
\langle Q, C[\ ], \sigma \rangle &\rightarrow_T
\langle Q[x_i\setminus \sigma\ x_i], C[\ ] \rangle
&\Leftrightarrow&&
\langle Q, \mathcal{C}, \sigma \rangle &\rightarrow_T
\langle Q[x_i\setminus \sigma\ x_i], \mathcal{C} \rangle
\\&&&&&\quad\text{for every $x_i$ in $\dom(\sigma)$}\notag\\
\notag\\
\langle Q\ ;\ f(v_1, \ldots, v_n), C[\ ] \rangle &\rightarrow_Q
\langle Q, C[[\ ]\ ;\ f(v_1, \ldots, v_n)] \rangle
&\Leftrightarrow&&
\langle \push\ f(v_1, \ldots, v_n)\ ;\ Q, \mathcal{C} \rangle &\rightarrow_Q
\langle Q, f(v_1, \ldots, v_n) \cconcat \mathcal{C} \rangle\\
%
\langle Q\ ;\ f(v_1, \ldots, v_n, F), C[\ ] \rangle &\rightarrow_Q
\langle Q\ ;\ F, C[[\ ]\ ;\ f(v_1, \ldots, v_n, \circleddash)] \rangle
&\Leftrightarrow&&
\langle \push\ f(v_1, \ldots, v_n, \boxdot)\ ;\ Q', \mathcal{C} \rangle &\rightarrow_Q
\langle Q', f(v_1, \ldots, v_n, \boxdot) \cconcat \mathcal{C} \rangle
\\&&&&&\quad\text{when $Q' = (Q\ ;\ F)^\blacktriangle$}\\
%
\langle \epsilon, C[[\ ]\ ;\ f(v_1, \ldots, v_n)] \rangle &\rightarrow_Q
\langle T, C[\ ], \sigma \rangle
&\Leftrightarrow&&
\langle \invoke, f(v_1, \ldots, v_n) \cconcat \mathcal{C} \rangle &\rightarrow_Q
\langle T, \mathcal{C}, \sigma \rangle
\\&&&&&\quad\text{when } f(x_1,\ldots,x_n) = T
    \text{ and }\sigma = \{x_i\mapsto v_i\}\notag
\end{align*}
\caption{Isomorphism between reduction rules}\label{fig:iso}
\end{figure}
\end{landscape}

\end{document}